\newcommand\mynoinkscapesvgtopdf{}
\providecommand{\mydocumentoptionsdefault}{bibliography=totoc,twocolumn=false,twoside=false,titlepage=false,parskip=half-,abstract=false,fleqn,draft=false,paper=a4,fontsize=10pt}
\newcommand{\mydocumentoptions}{aip,jmp,footinbib,preprint}
\newcommand{\mypaper}{A4}\newcommand{\mydiv}{current}
\newcommand{\mydisablewikimode}{}
\newcommand{\mycustommaketitle}{1}
\providecommand{\mydocumentclass}{scrartcl}
\providecommand{\mydocumentoptions}{}
\providecommand{\mydocumentoptionsdefault}{twocolumn=false,twoside=false,titlepage=false,parskip=half-,abstract=false,fleqn,draft=false,paper=a4,fontsize=10pt}
\providecommand{\mypaper}{414.936062pt:586.838153pt}
\providecommand{\mydiv}{30}
\providecommand{\mydocumentclass}{revtex4-1}
\providecommand{\mydocumentoptions}{}
\newcolumntype{.}{D{.}{.}{7}}
\newcolumntype{h}{D{.}{.}{3}}
\newcolumntype{d}[1]{D{.}{.}{#1}}
\providecommand{\mycustommaketitle}{0}
\renewcommand\maketitle{\par
  \begingroup%
  \newpage%
  \global\@topnum\z@   
  \null%
  \renewcommand{\thefootnote}{\fnsymbol{footnote}}%
  \begin{center}%
    {\LARGE\color{Mahogany} \@title \par}%
    \vskip 1.5em%
    {\large%
      \lineskip .5em%
      \begin{tabular}[t]{c}%
        \@author%
      \end{tabular}\par}%
    \vskip 1em%
    {\large \@date}%
  \end{center}%
  \par%
  \vskip 1.5em%
  \thispagestyle{plain}\@thanks%
  \setcounter{footnote}{0}
  \endgroup%
}
\Crefname{prop}{Proposition}{Propositions}
\Crefname{lem}{Lemma}{Lemmata}
\Crefname{thm}{Theorem}{Theorems}
\Crefname{cor}{Corollary}{Corollaries}
\providecommand{\MyBibLaTeXStyle}{authoryear-comp}
\providecommand{\mybiblatexgiveninits}{giveninits}
\def\thefieldfirstword#1{%
  \expandafter\expandafter
  \expandafter\firstword
  \expandafter\expandafter
  \expandafter{\csname abx@field@#1\endcsname}}
\def\firstword#1{\firstword@i#1 \@nil}
\def\firstword@i#1 #2\@nil{#1}
\newcommand{\parena}[1]{{\left(#1\right)}}
\newcommand{\parenc}[1]{{\bigl(#1\bigr)}}
\newcommand{\sqba}[1]{{\left[#1\right]}}
\newcommand{\sqbd}[1]{{\Bigl[#1\Bigr]}}
\newcommand{\cur}[1]{{\{#1\}}}
\newcommand{\cura}[1]{{\left\{#1\right\}}}
\newcommand{\bra}[1]{{\langle#1\rvert}}
\newcommand{\ket}[1]{{\lvert#1\rangle}}
\newcommand{\braket}[2]{{\langle#1\vert#2\rangle}}
\newcommand{\ketbra}[2]{{\vert#1\rangle\!\langle#2\vert}}
\newcommand{\abs}[1]{{\lvert#1\rvert}}
\newcommand{\norm}[1]{{\lVert#1\rVert}}
\newcommand{\opnorm}[1]{{\lVert#1\rVert_{(\infty)}}}
\newcommand{\opnorma}[1]{{\left\lVert#1\right\rVert_{(\infty)}}}
\newcommand{\opnormc}[1]{{\bigl\lVert#1\bigr\rVert_{(\infty)}}}
\newcommand{\opnormd}[1]{{\Bigl\lVert#1\Bigr\rVert_{(\infty)}}}
\newcommand{\trnorm}[1]{{\lVert#1\rVert_{(1)}}}
\newcommand{\trnormc}[1]{{\bigl\lVert#1\bigr\rVert_{(1)}}}
\newcommand{\floor}[1]{{\lfloor#1\rfloor}}
\newcommand{\ceil}[1]{{\lceil#1\rceil}}
\newcommand{\ceila}[1]{{\left\lceil#1\right\rceil}}
\newcommand{\adjm}{*}
\newcommand{\floorofa}{{\dot a}}
\newcommand{\hilb}[1]{\mathcal H_{#1}}
\newcommand{\ns}{n}
\newcommand{\mymathbb}[1]{\mathbb{#1}}
\newcommand{\vd}{{\,\mathrm d}}
\newcommand{\ee}{{\mathrm e}}
\newcommand{\ii}{{\mathrm i}}
\DeclareMathOperator{\poly}{poly}
\newcommand{\bigo}[1]{\mathcal O(#1)}
\newcommand{\eqbigo}[1]{= \bigo{#1}}
\newcommand{\mci}{\mathcal I}
\newcommand{\mcy}{\mathcal Y}
\newcommand{\mcz}{\mathcal Z}
\newcommand{\dotcup}{\mathbin{\dot{\cup}}}
\newcommand{\bigdotcup}{\mathop{\dot{\bigcup}}}
\renewcommand\Re{\operatorname{Re}}
\newcommand{\idm}{\mathbb{1}}
\newcommand{\tr}{\operatorname{Tr}}
\DeclareMathOperator{\diam}{diam}
\DeclareMathOperator{\lsb}{LSB}
\newcommand{\Z}{{\mymathbb Z}}
\newcommand{\R}{{\mymathbb R}}
\newcommand{\HH}{{\mathcal H}}
\newenvironment{spmatrix}{%
  \left(%
  \begin{smallmatrix}%
}{%
  \end{smallmatrix}%
  \right)%
}
\newcommand{\executeiffilenewer}[3]{%
\ifnum\pdfstrcmp{\pdffilemoddate{#1}}%
{\pdffilemoddate{#2}}>0%
{\immediate\write18{#3}}\fi%
}
\newcommand{%
\input{-isvgt.pdf_tex}%
}[1]{%
\executeiffilenewer{#1.svg}{#1-isvgt.pdf}%
{/home/user/.config/inkscape/extensions/svgtextext_to_pdflatex.py #1.svg #1-isvgt.pdf%
}%
\input{#1-isvgt.pdf_tex}%
}
\newcommand{\isvgc}[2][]{%
\executeiffilenewer{#2.svg}{#2-isvgt_full.pdf}%
{/home/user/.config/inkscape/extensions/svgtextext_to_pdflatex.py -full-pdf #2.svg #2-isvgt.pdf%
}%
\includegraphics[#1]{#2-isvgt_full.pdf}%
}
\newcommand{%
\input{-isvgt.pdf_tex}%
}[1]{%
\input{#1-isvgt.pdf_tex}%
}
\newcommand{\isvgc}[2][]{%
\includegraphics[#1]{#2-isvgt_full.pdf}%
}
\providecommand{\theoremcounter}{theorem}
\theoremstyle{plain} 
\newtheorem{thm}[\theoremcounter]{Theorem}
\newtheorem{cor}[\theoremcounter]{Corollary}
\newtheorem{lem}[\theoremcounter]{Lemma}
\newtheorem{defn}[\theoremcounter]{Definition}
\theoremstyle{nonumberplain}
  \newif\ifPreviewTest
  \newif\ifPreviewX
\newcommand{\tocIfNoPreview}{
  \ifPreviewX
  \section*{Contents not available (preview active)}
  \else
  \tableofcontents
  \fi
}
\newcommand{\printbibliography}{\bibliography{aux/quellen-jabref-tex.bib}}
\newcommand*{\glsplainhyperlink}[2]{%
  \colorlet{currenttext}{.}
  \colorlet{currentlink}{\@linkcolor}
  \hypersetup{linkcolor=currenttext}
  \hyperlink{#1}{#2}%
  \hypersetup{linkcolor=currentlink}
}
\let\@glslink\glsplainhyperlink
\newacronym{iid}{i.i.d.}{independent, indentically distributed}
\newacronym{mse}{MSE}{mean squared error}
\newacronym{pdf}{pdf}{probability density function}
\newacronym{pmf}{pmf}{probability mass function}
\newacronym{umvu}{UMVU}{uniform minimum variance unbiased}
\newacronym{povm}{POVM}{positive operator-valued measure}
\newacronym{cp}{CP}{completely positive}
\newacronym{tp}{TP}{trace-preserving}
\newacronym{cptp}{CPTP}{completely positive and trace preserving}
\newacronym[longplural={reduced density matrices}]{rdm}{RDM}{reduced density matrix}
\newacronym{svd}{SVD}{singular value decomposition}
\newacronym{ic}{IC}{informationally complete}
\newacronym{mle}{MLE}{maximum likelihood estimation}
\newacronym{qst}{QST}{quantum state tomography}
\newacronym{aapt}{AAPT}{ancilla-assisted process tomography}
\newacronym{dmrg}{DMRG}{density matrix renormalization group}
\newacronym{mps}{MPS}{matrix product state}
\newacronym{peps}{PEPS}{projected entangled pair state}
\newacronym{pepo}{PEPO}{projected entangled pair operator}
\newacronym{sgs}{SGS}{sequentially generated state}
\newacronym{bsgs}{BSGS}{block sequentially generated state}
\newacronym{tns}{TNS}{tensor network state}
\newacronym{tt}{TT}{tensor train}
\newacronym{mpo}{MPO}{matrix product operator}
\newacronym{mpdo}{MPDO}{matrix product density operator}
\newacronym{pmps}{PMPS}{locally purified matrix product state}
\newacronym{dfs}{DFS}{depth-first search}
\newacronym{mbqc}{MBQC}{measurement-based quantum computation}
\newglossaryentry{tmode}{
  name={mode (of a tensor)},
  description={Index of a multi-dimensional array (tensor)},
}
\newglossaryentry{tensor}{
  name={tensor},
  description={Array with more than two dimensions (indices, modes)},
}
\newglossaryentry{imag-i}{
  type=symbols,
  sort=i,
  name={\ensuremath{\ii}},
  description = {Imaginary unit, $\ii^{2} = -1$},
}
\newglossaryentry{euler-e}{
  type=symbols,
  sort=e,
  name={\ensuremath{\ee}},
  description = {Euler's number, $\ee = 2.71\dots$},
}
\newglossaryentry{hilbert-H}{
  type=symbols,
  sort=H,
  name={\ensuremath{\HH}},
  description = {Hilbert space},
}
\newglossaryentry{ln}{
  type=symbols,
  sort=ln,
  name={\ensuremath{\ln}},
  description = {Natural logarithm},
}
\newglossaryentry{log-2}{
  type=symbols,
  sort=log,
  name={\ensuremath{\log}},
  description = {Base-2 logarithm},
}
\newglossaryentry{L-A}{
  type=symbols,
  sort=LA,
  name={\ensuremath{L(A, B)}},
  description = {Set of linear maps from $A$ to $B$},
}
\newglossaryentry{H-A}{
  type=symbols,
  sort=HA,
  name={\ensuremath{H(A)}},
  description = {Set of Hermitian linear operators on $A$},
}
\newacronym[plural=LEDs, longplural={light-emitting diodes}]%
{led}{LED}{light-emitting diode}
\newglossaryentry{culdesac}{name=cul-de-sac, %
description={passage or street closed at one end}, %
plural=culs-de-sac}
\newglossaryentry{ohm}{type=symbols, name={\ensuremath{\Omega}}, %
sort=Ohm, symbol={\ensuremath{\Omega}}, %
description={unit of electrical resistance}}
\Crefname{prop}{Proposition}{Propositions}
\Crefname{lem}{Lemma}{Lemmata}
\Crefname{thm}{Theorem}{Theorems}
\Crefname{cor}{Corollary}{Corollaries}
\Crefname{rem}{Remark}{Remarks}
\newcommand{\affilulm}{Institut f\"ur Theoretische Physik and IQST, Albert-Einstein-Allee 11, Universit\"at Ulm, 89069 Ulm, Germany}
\renewenvironment{abstract}{}{}
\author[1]{Milan Holz\"apfel\thanks{\texttt{mail@mholzaepfel.de}}$^,$}
\author[1]{Martin B. Plenio}
\affil[1]{\affilulm}
\newcommand{\parencite}[1]{\cite{#1}}
\def\frontmatter@affiliationfont{%
 \frontmatter@@indent
 \preprintsty@sw{\small}{\small}%
 \it
}%
\def\frontmatter@title@format{%
 \preprintsty@sw{\large}{\Large}%
 \sffamily
 \bfseries
 \raggedright
 \color{Mahogany}
 \parskip\z@skip
}%
\def\thesection       {\arabic{section}}%
\def\p@section        {}%
\def\thesubsection    {\arabic{subsection}}%
\def\p@subsection     {\thesection.}%
\def\thesubsubsection {\arabic{subsubsection}}%
\def\p@subsubsection  {\thesection.\thesubsection.}%
\def\theparagraph     {\Alph{paragraph}}%
\def\p@paragraph      {\thesection.\thesubsection.\thesubsubsection\,}%
\def\p@subparagraph   {\thesection.\thesubsection.\thesubsubsection\,\theparagraph\,}%
\def\section{%
  \@startsection
    {section}%
    {1}%
    {\z@}%
    {0.8cm \@plus1ex \@minus .2ex}%
    {0.5cm}%
    {%
     \normalfont
     \bfseries
     \raggedright
     \color{Mahogany}
    }%
}%
\def\subsection{%
  \@startsection
    {subsection}%
    {2}%
    {\z@}%
    {.8cm \@plus1ex \@minus .2ex}%
    {.5cm}%
    {%
     \normalfont
     \bfseries
     \raggedright
     \color{Mahogany}
    }%
}%
\def\subsubsection{%
  \@startsection
    {subsubsection}%
    {3}%
    {\z@}%
    {.8cm \@plus1ex \@minus .2ex}%
    {.5cm}%
    {%
     \normalfont
     \itshape\bfseries
     \raggedright
     \color{Mahogany}
    }%
}%
\newcommand{\myabstract}{%
  We discuss efficient simulation and certification of the dynamics induced by a quantum many-body Hamiltonian $H$ with short-ranged interactions, extending prior results for one-dimensional systems
  [Osborne, Phys. Rev. Lett. 97, 157202 (2006) and Lanyon, Maier et al, Nat. Phys. 13, 1158 (2017)]
  to lattices in arbitrary spatial dimensions.
}
\begin{document}


\ifx\myrevtex\undefined
\else
\author{Milan Holz\"apfel}
\email{mail@mholzaepfel.de}
\affiliation{\affilulm}
\author{Martin B. Plenio}
\affiliation{\affilulm}
\fi

\title{Efficient certification and simulation of local quantum many-body Hamiltonians}
\date{December 12, 2017}

\ifx\myrevtex\undefined

\maketitle
\begin{abstract}
  \myabstract
\end{abstract}

\else

\begin{abstract}
  \myabstract
\end{abstract}
\maketitle

\fi


\tocIfNoPreview
\vspace{2em}

\newcommand{\standalonedocument}{}

\section{Summary%
  \label{sec:timeevo-summary}}

\glsreset{peps}
\glsreset{pepo}

In this contribution, we discuss efficient simulation and certification of the dynamics induced by a quantum many-body Hamiltonian $H$ with short-ranged interactions. Here, we extend prior results for one-dimensional systems \parencite{Osborne2006,Lanyon2016} to lattices in arbitrary spatial dimensions. 
The Hamiltonian acts on $n < \infty$ quantum systems arranged in an arbitrary lattice in an arbitrary spatial dimension. 
We consider Hamiltonians whose interactions have a strictly finite range.

A function $f(n)$ is quasi-polynomial in $n$ if $f(n) \eqbigo{\exp(c_{1} (\log n)^{c_{2}})} \eqbigo{n^{c_{1} (\log n)^{c_{2}-1}}}$ with constants $c_{1,2} > 0$.
A function is poly-logarithmic in $n$ if $f(n) \eqbigo{(\log n)^{c_{1}}}$. 

We present a method which can certify the fact that an unknown quantum system evolves according to a certain Hamiltonian.
Suppose that the evolution time grows at most poly-logarithmically with $n$. 
We prove that the necessary measurement effort scales quasi-polynomially in the number of particles $n$.
It also scales quasi-polynomially in the inverse tolerable error $1/\mci$. 

In addition, we show that a \gls{peps} representation of a time-evolved state can be obtained efficiently in the following sense.
Suppose that the the evolution time $t$ grows at most poly-logarithmically with $n$. 
We prove that the necessary computation time and the \gls{peps} bond dimension of the representation scale quasi-polynomially in the number of particles $n$ and the inverse approximation error $1/\epsilon$.

For certification of a time-evolved state, we consider an initial product state $\ket{\psi(0)}$, the time-evolved state $\ket{\psi(t)} = \exp(-\ii H t) \ket{\psi(0)}$ and an unknown state $\rho$.
We measure the distance between a pure and a mixed state by the infidelity \[I(\rho, \ket\psi) = 1 - \bra\psi \rho \ket\psi.\]
In order to \emph{certify} that the unknown state is $\rho$ is almost equal to the time-evolved state $\ket{\psi(t)}$, we provide an upper bound $\beta$ on the distance of the two states, i.e.\ \[I(\rho, \ket{\psi(t)}) \le \beta.\]%
\begin{figure}[t]
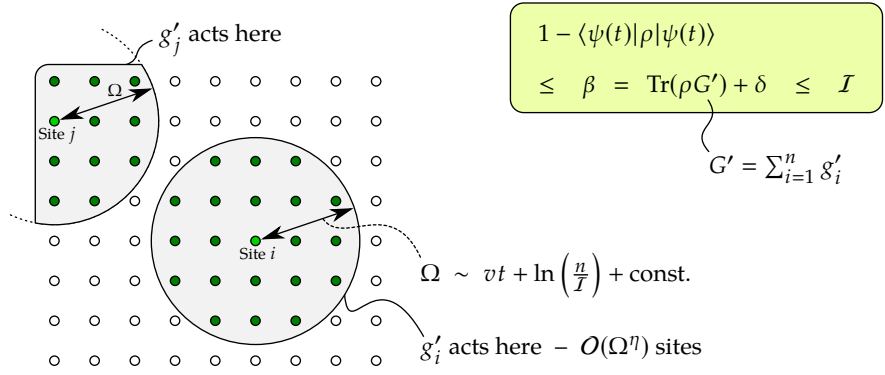
%
  \centering%
  \isvgc{graphics/local_time_evolution_cert_basic}%
  \caption{%
    The local terms $g'_{i}$ of $G' = \sum_{i} g'_{i}$ act on regions whose diameter is proportional to $\Omega$, i.e.\ on $\bigo{\Omega^{\eta}}$ sites if the lattice has $\eta$ dimensions.
    The Lieb--Robinson velocity $v$ determines the growth of $\Omega$ with time. 
    The expectation value $\tr(\rho G')$, which provides an upper bound on the distance $I(\rho, \ket{\psi(t)}) = 1 - \bra{\psi(t)} \rho \ket{\psi(t)}$, can be determined from complete measurements on $n$ regions of size $\bigo{\Omega^{\eta}}$ sites (\cref{thm:time-evo-approx-parent-h,eq:cert-discuss-simple-D}). 
  }%
  \label{fig:cert-basics}%
\end{figure}%
We prove that the bound $\beta$ can be obtained from the expectation values of complete sets of observables on regions whose diameter is proportional to some number $\Omega$ (\cref{fig:cert-basics}).
If the unknown state $\rho$ is exactly equal to the time-evolved state $\ket{\psi(t)}$, then a bound $\beta$ which is no larger than a tolerable error $\mci$ can be obtained if $\Omega$ grows linearly with $\log(n / \mci)$ and if it also grows linearly with the evolution time $t$.
If we assume a spatial dimension $\eta \ge 1$, a region of diameter $\sim \Omega$ contains $\sim \Omega^{\eta}$ sites.
Since there are $n$ regions of diameter $\Omega$ and since $\bigo{\exp(c \Omega^{\eta})}$ observables are sufficient for a complete set on a single region, the total measurement effort is $\bigo{n \exp(c \log(n)^{\eta})} = \bigo{n^{1 + (c (\log n)^{\eta-1})}}$, i.e.\ it increases quasi-polynomially with $n$. 
This scaling reduces to polynomial in $n$ if the system is one-dimensional ($\eta = 1$).
In addition, we show that the upper bound $\beta$ increases only slightly if $\rho$ has a finite distance from $\ket{\psi(t)}$ or if the bound is obtained from expectation values which are not known exactly, e.g.\ due to a finite number of measurements per observable. 

\glsreset{mps}

Suppose that the Hamiltonian is a nearest-neighbour Hamiltonian in one spatial dimension and that the evolution time $t$ grows at most logarithmically with the number of particles $n$.
In this case, an approximate \gls{mps} representation of the time-evolved state $\ket{\psi(t)}$ can be obtained efficiently, i.e.\ the computational time grows at most polynomially with $n/\epsilon$ where $\epsilon$ is the approximation error \parencite{Osborne2006}.
\Glspl{peps} are a generalization of \glspl{mps} to higher spatial dimensions.
It has been demonstrated that \gls{mps}-based numerical algorithms for computing time evolution can be applied to \gls{peps} as well \parencite{Murg2007,Verstraete2008}.
However, the computational time required by these algorithms has not been determined in general. 
Here, we show that an approximate \gls{peps} representation of the time-evolved state $\ket{\psi(t)}$ can be obtained efficiently for poly-logarithmic times (in $n$).
Suppose that the evolution time $t$ grows at most poly-logarithmically with $n$ (i.e.\ $t \sim (\log n)^{c}$). 
We prove that the necessary computational time and the \gls{peps} bond dimension of the representation scale quasi-polynomially in the number of particles $n$ and the inverse approximation error $1/\epsilon$. 
Furthermore, we show that there is an efficient \gls{pepo} representation of the unitary evolution generated by the Hamiltonian.
This representation is structured in a way which guarantees efficient computation of expectation values of single-site observables in $\ket{\psi(t)}$, an operation which can be computationally difficult for a general \gls{peps}.\footnote{%
  Computing the expectation value of a single-site observable in an arbitrary \gls{peps} has been shown to be \#P-complete and it is widely assumed that a polynomial-time solution for such problems does not exist \parencite{Schuch2007}.
}

In \cref{sec:timeevo-lr}, existing Lieb--Robinson bounds are introduced and some corollaries are derived.
In \cref{sec:timeevo-parent-H-cert}, so-called parent Hamiltonians and their use as fidelity witnesses is introduced \parencite{Cramer2010}.
Parent Hamiltonians are then used to efficiently certify time-evolved states.
In \cref{sec:timeevo-repr}, we construct efficient representations of a unitary time evolution operator $U_{t}$.
The first two subsections discuss the Trotter decomposition and introduce \gls{peps}.
The remaining two subsections construct an efficient representation of $U_{t}$ for an arbitrary lattice and for a hypercubic lattice: In the special case, a representation with improved properties is achieved.
\Cref{sec:timeevo-discussion} concludes.

\section{Lieb--Robinson bounds
  \label{sec:timeevo-lr}}

Suppose that $H$ is a nearest-neighbour Hamiltonian on a lattice. 
The time evolution of an observable $A$ under a Hamiltonian $H$ is given by $\tau^{H}_{t}(A) = \ee^{\ii H t} A \ee^{-\ii H t}$ (assuming that $H$ is time-independent).
Even if $A$ acts non-trivially only on a small part of the system, $\tau^{H}_{t}(A)$ acts on the full system for any $t > 0$ because the exponential functions contain arbitrarily large powers of $H$.
(We shall assume that no part of the system is decoupled from the rest.)
However, $\tau^{H}_{t}(A)$ can be approximated by an observable which acts non-trivially on a small region around the original $A$.
The approximation error is exponentially small in the diameter of the region and the error remains constant if the diameter increases linearly with time (see also \cref{fig:timeevo-quasilocality} on \cpageref{fig:timeevo-quasilocality}). 
In this sense, information propagates at a finite velocity in a quantum lattice system.
A \emph{Lieb--Robinson bound} is an upper bound on the norm of the commutator $[\tau^{H}_{t}(A), B]$ and provides a means to bound the error of the named approximation.
The first bound on the commutator $[\tau^{H}_{t}(A), B]$ has been given by \textcite{Lieb1972} for a regular lattice.
More recently, these bounds have been extended to lattices described with graphs or metric spaces \parencite{Nachtergaele2006,Hastings2006a,Nachtergaele2006a}.
For interactions which decay exponentially (polynomially) with distance, Lieb--Robinson bounds have been proved which are exponentially (polynomially) small in distance \parencite{Hastings2006a};
here, the distance is between the regions on which $A$ and $B$ act non-trivially. 

The time-evolved observable $\tau^{H}_{t}(A)$ can be approximated by $\tau^{H'}_{t}(A)$ where the Hamiltonian $H'$ contains only the interaction terms which act on a given region $R$ of the system and this has been proven for a one-dimensional system by \textcite{Osborne2006}.
An explicit bound on the approximation error $\opnorm{\tau^{H}_{t}(A) - \tau^{H'}_{t}(A)}$ for a lattice with a metric has been given by \textcite{Barthel2012} for the case of a local Liouvillian evolution.
Their result is limited to interactions with a strictly finite range but this restriction also enables an explicit definition of all constants.
In the remainder of this \namecref{sec:timeevo-lr}, we introduce their result and derive corollaries used below.

Given two sets $A$ and $B$, the expression $A \subset B$ denotes the implication $x \in A \Rightarrow x \in B$ ($A$ is not required to be a strict subset of $B$).
The expression $C = A \dotcup B$ implies that $C = A \cup B$ and $A \cap B = \emptyset$.
The sets $\{B_{i} \colon i\}$ are a partition of the set $A$ if $A = \bigdotcup_{i} B_{i}$.
For a function $f(\ns)$, we write $f = \bigo{\poly(\ns)}$ if there is a polynomial $g(\ns)$ such that $f(\ns) \le g(\ns)$ for all suitable $n$ (e.g.\ $n \ge 1$ if $n$ is the number of particles).
We write $f \eqbigo{\exp(\ns)}$ if there are constants $c_{1}$, $c_{2}$ such that $f(\ns) \le c_{1} \exp(c_{2} \ns)$ holds for all $n$.
Given a linear map $U$, $U^{\adjm}$ denotes its Hermitian adjoint (conjugate transpose). 

The time evolution from time $s$ to time $t$ under a time-dependent Hamiltonian $H(t)$ is described by the unitary $U_{ts} = [U_{st}]^{\adjm}$ given by the unique solution of $\partial_{t} U_{ts} = -\ii H(t) U_{ts}$ where $U_{ss} = \idm$, $s, t \in \R$ and $H(t)$ is assumed to be continuous except for finitely many discontinuities in any finite interval.
The unitary satisfies $\partial_{s} U_{ts} = +\ii U_{ts} H(s)$ and, if $H$ is time-independent, it is given by $U_{ts} = \exp(-\ii H (t-s))$. 
To distinguish time evolutions under different Hamiltonians, we use the notation $U^{H}_{ts} = U_{ts}$. 
The time evolution of a pure state $\ket{\psi(s)}$ and a density matrix $\rho(s)$ are given by $\ket{\psi(t)} = U_{ts} \ket{\psi(s)}$ and $\rho(t) = \tau_{ts}^{H}(\rho(s)) = U_{ts} \, \rho(s) \, U_{st}$.
If we omit the second time argument $s$, it is equal to zero: $U_{t} = U_{t0}$ and $\tau^{H}_{t} = \tau^{H}_{t0}$.

We consider a system of $n < \infty$ sites and $\Lambda$ denotes the set of all sites.
Associated to each site $x \in \Lambda$, there is a Hilbert space $\HH_{x}$ of finite dimension $d(x) \ge 2$. 
We assume that there is a metric $d(x, y)$ on $\Lambda$.
The diameter of a set $X \subset \Lambda$ is given by $\diam(X) = \max_{x,y\in X}d(x, y)$.
Distances between sets are given by $d(x, Y) = \min_{y \in Y} d(x, y)$ and $d(X, Y) = \min_{x\in X, y\in Y}d(x, y)$ where $X, Y \subset \Lambda$. 
The Hamiltonians $H_{V}$ and $H$ of a subsystem $V \subset \Lambda$ and of the whole system, respectively, are given by
\begin{align}
  \label{eq:timeevo-lr-H}
  H_{V}
  &= \sum_{Z \subset V} h_{Z},
  &
  H
  &= H_{\Lambda}.
\end{align}
The local terms $h_{Z}(t)$ can be time-dependent but we often omit the time argument. 
At a given time, each local term $h_{Z}(t)$ is either zero or acts non-trivially at most on $Z$.
The maximal norm and range of the local terms are given by
\begin{align}
  \label{eq:timeevo-lr-J-a}
  J
  &= 2\sup_{t,Z \subset \Lambda} \opnorm{h_{Z}(t)},
  &
    a &= \sup_{Z\colon h_{Z} \ne 0} \diam(Z).
\end{align}
Terms which act non-trivially only on a single site, which may unduly enlarge the maximal norm $J$, can be eliminated from our discussion by employing a suitable interaction picture as described in \cref{sec:lr-remove-single-site-terms}. 
The maximal number of nearest neighbours is given by
\begin{align}
  \label{eq:timeevo-lr-Z}
  \mcz
  &= \max_{Z\colon h_{Z} \ne 0}
    \abs{\cur{ Z' \subset \Lambda \colon h_{Z'} \ne 0, Z' \cap Z \ne \emptyset }}.
\end{align}
This restricts the number of local terms in the Hamiltonian to $\abs{\cur{Z \subset \Lambda\colon h_{Z} \ne 0}} \le \mcz n = \bigo{n}$.\footnote{%
  Write $\cur{Z \subset \Lambda\colon h_{Z} \ne 0} = \bigcup_{x \in \Lambda} \{Z \subset \Lambda\colon h_{Z} \ne 0, x \in Z\}$.
}
The number of local terms at a certain distance $r$ is given by the number of elements in the set
\begin{align}
  \label{eq:timeevo-lr-R-n}
  R_{r,y} =
  \cura{ Z \subset \Lambda \colon h_{Z} \ne 0, d(y, Z) / a \in [r, r+1) }
\end{align}
and we assume that it is bounded by a power law:
\begin{align}
  \label{eq:timeevo-lr-abs-R-n}
  \abs{R_{r,y}}
  \le
  M r^{\kappa}
  \quad \forall \; y \in \Lambda, r \in \{0, 1, 2, \dots \},
\end{align}
where $M$ and $\kappa$ are constants.
A regular lattice in an Euclidean space of dimension~$\eta$ satisfies this condition with $\kappa = \eta - 1$.
\Cref{eq:timeevo-lr-abs-R-n} restricts the number of local terms $h_{Z} \ne 0$ within a certain distance in terms of the metric but the number of sites on which a local term may act remains unbounded.
We demand that this number of sites is bounded by a finite
\begin{align}
  \label{eq:timeevo-lr-sites-per-term}
  \mcy = \sup_{Z\colon h_{Z} \ne 0} \abs{Z}.
\end{align}
We assume that for each $x \in \Lambda$, there is a $Z \subset \Lambda$ with $x \in Z$ and $h_{Z} \ne 0$.
Together with \cref{eq:timeevo-lr-abs-R-n,eq:timeevo-lr-sites-per-term}, this assumption implies that $\abs{B^{o,c}_{r}(\{x\})} = \bigo{r^{\eta}}$ where $\eta = \kappa+1$, $x \in \Lambda$, $B^{o}_{r}(X) = \{ y \in \Lambda \colon d(X, y) < r \}$, $B^{c}_{r}(X) = \{ y \in \Lambda \colon d(X, y) \le r \}$ and $X \subset \Lambda$. 
The extension of a volume $V \subset \Lambda$ in terms of the Hamiltonian is given by
\begin{align}
  \label{eq:timeevo-lr-bar-V}
  \bar V = \bigcup_{\substack{Z \colon h_{Z} \ne 0,\\Z \cap V \ne \emptyset}} Z.
\end{align}
The following \namecref{thm:timeevo-quasilocality} has been shown by \textcite[Theorem~2]{Barthel2012} and they have called it \emph{quasilocality}:%

\begin{thm}
  \label{thm:timeevo-quasilocality}
  Let $a$, $\mcz$ and $J$ be finite and $t \in \R$.
  Let $Y \subset R \subset \Lambda$ and let $A$ act on $Y$ (\cref{fig:timeevo-quasilocality}).
  Let $d_{a} = d(Y, \Lambda \setminus R) / a$ and $\ceil{d_{a}} > 2\kappa + 1$. 
  Then
  \begin{align}
    \opnorm{ \tau^{H}_{t}(A) - \tau^{H_{\bar R}}_{t}(A) }
    \le
    \frac{2M}{\mcz} \, \opnorm{A} \, \ceil{d_{a}}^{\kappa} \, \exp(v\abs t - \ceil{d_{a}})
    \label{eq:timeevo-quasilocality}
  \end{align}
  holds.
  The Lieb--Robinson velocity is given by $v = J \mcz \exp(1)$.
\end{thm}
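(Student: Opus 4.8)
The plan is to reduce the quasilocality estimate to a Lieb--Robinson bound on a single commutator and then to integrate it. The first ingredient is a Duhamel identity: differentiating $s \mapsto \tau^{H_{\bar R}}_{t-s}\bigl(\tau^{H}_{s}(A)\bigr)$, which equals $\tau^{H_{\bar R}}_{t}(A)$ at $s = 0$ and $\tau^{H}_{t}(A)$ at $s = t$, and using that $\tau^{H_{\bar R}}_{t-s}$ is an algebra homomorphism fixing $H_{\bar R}$, one obtains
\begin{align}
  \tau^{H}_{t}(A) - \tau^{H_{\bar R}}_{t}(A)
  = \ii \int_{0}^{t} \tau^{H_{\bar R}}_{t-s}\bigl( [\, H - H_{\bar R},\; \tau^{H}_{s}(A) \,] \bigr)\, \mathrm{d}s ,
  \label{eq:qloc-duhamel}
\end{align}
by a standard computation that applies verbatim to time-dependent $H$. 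Since $\tau^{H_{\bar R}}_{t-s}$ preserves the operator norm, $\opnorm{\tau^{H}_{t}(A) - \tau^{H_{\bar R}}_{t}(A)} \le \int_{0}^{\abs{t}}\opnorm{[\,H - H_{\bar R},\, \tau^{H}_{s}(A)\,]}\,\mathrm{d}s$. By \cref{eq:timeevo-lr-bar-V} every local term with $Z \cap R \ne \emptyset$ lies inside $\bar R$, so $H - H_{\bar R} = \sum_{Z\colon h_{Z}\ne 0,\; Z\cap R=\emptyset} h_{Z}$, and each such $Z$ obeys $d(Y,Z) \ge d(Y,\Lambda\setminus R) = a\, d_{a}$. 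Hence, using $\opnorm{h_{Z}} \le J/2$ from \cref{eq:timeevo-lr-J-a},
\begin{align}
  \opnorm{\tau^{H}_{t}(A) - \tau^{H_{\bar R}}_{t}(A)}
  \le \int_{0}^{\abs{t}} \sum_{\substack{Z\colon h_{Z}\ne 0\\ d(Y,Z)\,\ge\, a\, d_{a}}} \opnorm{[\,\tau^{H}_{s}(A),\, h_{Z}\,]}\,\mathrm{d}s ,
  \label{eq:qloc-reduced}
\end{align}
so everything is reduced to bounding commutators of $\tau^{H}_{s}(A)$ with distant local terms.

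Next I would invoke a Lieb--Robinson bound for $\opnorm{[\,\tau^{H}_{s}(A),\, B\,]}$, with $B$ acting on a set $X$ and the geometry encoded in the integer $\ell(X)$ equal to the least number of pairwise-overlapping local terms joining $Y$ to $X$ (so $\ell(X)\ge\ceil{d(Y,X)/a}$). Such a bound is produced by the standard recursive argument: a Duhamel identity like \cref{eq:qloc-duhamel}, now restricting the dynamics to the Hamiltonian on a region around $Y$, gives a recursion for $\opnorm{[\,\tau^{H}_{s}(A),\, B\,]}$ in which each step contributes a factor $\le J$ (two local operators, each of norm $\le J/2$), fans out over at most $\mcz$ terms (\cref{eq:timeevo-lr-Z}), and enlarges the evolved support by one shell, while the nested time integrations supply $1/k!$ at order $k$. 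Since at least $\ell(X)$ steps are needed before the support reaches $X$, only orders $k\ge\ell(X)$ contribute; summing the chains of overlapping supports with the shell bounds \cref{eq:timeevo-lr-abs-R-n,eq:timeevo-lr-R-n} and resumming the factorial series --- this is where $k!\ge(k/\exp(1))^{k}$ pins the light-cone slope to $\exp(1)$ --- yields $\opnorm{[\,\tau^{H}_{s}(A),\, B\,]} \le c\,\opnorm{A}\,\opnorm{B}\,\ell(X)^{\kappa}\exp\bigl(J\mcz\exp(1)\abs{s} - \ell(X)\bigr)$ once $\ell(X)$ is above the threshold at which the polynomial chain weights are dominated by the exponential decay.

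Inserting this into \cref{eq:qloc-reduced}: a distant term $Z$ has $\ell(Z)\ge\ceil{d_{a}}$, so $\opnorm{[\,\tau^{H}_{s}(A),\, h_{Z}\,]} \le c\,\opnorm{A}\,(J/2)\,\ell(Z)^{\kappa}\exp(v\abs{s} - \ell(Z))$ with $v = J\mcz\exp(1)$. Grouping the $Z$ into shells indexed by the integer $r = \ell(Z)\ge\ceil{d_{a}}$, each of cardinality bounded by a polynomial in $r$ via \cref{eq:timeevo-lr-abs-R-n}, the resulting sum over $r$ is, term by term, a polynomial in $r$ times $\exp(-r)$; the hypothesis $\ceil{d_{a}} > 2\kappa + 1$ is precisely what places this summand past the maximum of its profile, so the series is dominated by a universal constant times its leading shell $r = \ceil{d_{a}}$. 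Carrying out the remaining $\int_{0}^{\abs{t}}\exp(v s)\,\mathrm{d}s$ and collecting constants then reproduces the stated bound $\tfrac{2M}{\mcz}\opnorm{A}\,\ceil{d_{a}}^{\kappa}\exp(v\abs{t} - \ceil{d_{a}})$ with $v = J\mcz\exp(1)$.

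The conceptual skeleton --- the Duhamel identity, the recursive Lieb--Robinson bound, and the power-law shell sums --- is standard, and the reduction in \cref{eq:qloc-duhamel,eq:qloc-reduced} is routine. The genuine work, and the step I expect to be the main obstacle, is the combinatorial and analytic bookkeeping inside the Lieb--Robinson estimate and its reassembly: making the velocity come out \emph{exactly} as $v = J\mcz\exp(1)$ leaves no slack in balancing the factorial time series against the count of overlapping-support chains, and the precise condition $\ceil{d_{a}} > 2\kappa + 1$ is dictated by requiring every spatial summation to collapse --- up to a single universal constant --- onto its leading term.
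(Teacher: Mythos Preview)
The paper does not supply its own proof of this theorem: it is quoted as Theorem~2 of \textcite{Barthel2012} and used as a black box throughout. So there is no in-paper proof to compare against; your proposal is a reconstruction of the cited result.

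Your sketch is structurally sound and follows the standard route. Two minor remarks. First, the identity $H - H_{\bar R} = \sum_{Z\colon Z\cap R=\emptyset} h_{Z}$ is not literally an equality: a term with $Z\cap R=\emptyset$ can still satisfy $Z\subset\bar R$ if it lies entirely in the collar $\bar R\setminus R$. What you actually need, and what does hold, is the inclusion $\{Z\colon Z\not\subset\bar R\}\subset\{Z\colon Z\cap R=\emptyset\}$, which suffices for the distance estimate $d(Y,Z)\ge a\,d_{a}$. Second, the cited reference runs a single iterated Duhamel expansion rather than your two-step route (first a commutator Lieb--Robinson bound, then a time integral); the two are equivalent in spirit, but your path makes the exact constant $2M/\mcz$ slightly more delicate to recover, since the final $\int_{0}^{\abs t}\exp(vs)\,\mathrm{d}s\le\exp(v\abs t)/v$ introduces an extra factor $1/v=1/(J\mcz\ee)$ that must be balanced against the constants inside your commutator bound. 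You correctly identify this bookkeeping as the crux.
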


\begin{figure}[t]
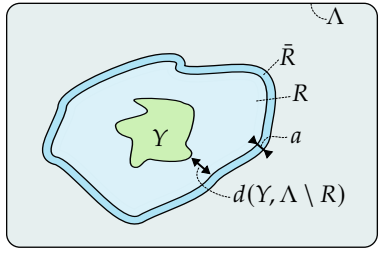

  \centering
  \isvgc{graphics/local_time_evolution_quasilocality}
  \caption{%
    An observable $A$ at time zero acts non-trivially on $Y$ and $\tau^{H}_{t}(A)$ is approximated by $\tau^{H_{\bar R}}_{t}(A)$ on $\bar R$.
    The approximation error is determined by the evolution time $t$ and the distance $d(Y, \Lambda \setminus R)$ (\cref{thm:timeevo-quasilocality}; \cite{Barthel2012}).
  }
  \label{fig:timeevo-quasilocality}
\end{figure}

The upper bound from \cref{eq:timeevo-quasilocality} can be simplified as $x^{\kappa} \exp(-x) \le \exp(-(1-q) x)$ holds for any $q \in (0, 1)$ if $x$ is large enough.
The following \namecref{lem:timeevo-scale-distance} provides a precise formulation of this fact and \cref{cor:timeevo-quasilocality} applies it to \cref{eq:timeevo-quasilocality}. 

\begin{lem}
  \label{lem:timeevo-scale-distance}
  
  Choose $q \in (0, 1)$ and set
  $
  D = (1-q) d_{a}
  $
  with $d_{a} \ge 0$.
  We say that $\ceil{d_{a}}$ is large enough if it satisfies $\ceil{d_{a}} > 2\kappa + 1$ and $\ceil{d_{a}} \ge \frac{2\kappa}{q} \ln(\frac{\kappa}{q})$; let $\ceil{d_{a}}$ be large enough.%
  \footnote{
    The slightly simpler conditions $D \ge 2\kappa + 1$ and $D \ge \frac{2\kappa}{q} \ln(\frac{\kappa}{q})$ are stricter and can also be used. 
  }
  Set $\alpha_{q} = \exp[-(1-q) (\ceil{d_{a}} - d_{a})]$.
  Then $\frac1\ee < \frac1{\exp(1-q)} < \alpha_{q} \le 1$ and $\ceil{d_{a}}^{\kappa} \exp(-\ceil{d_{a}}) \le \alpha_{q} \exp(-D)$ hold.
\end{lem}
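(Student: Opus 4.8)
The plan is to unfold the definitions of $D$ and $\alpha_{q}$ so that the two claimed inequalities decouple, and then to reduce the second of them to a single elementary estimate on $x \mapsto x^{\kappa}\exp(-x)$.

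For the bounds on $\alpha_{q}$ I would use only that $\ceil{d_{a}} - d_{a} \in [0, 1)$. Since $q \in (0,1)$ gives $0 < 1-q < 1$, the exponent $-(1-q)(\ceil{d_{a}} - d_{a})$ lies in $(-(1-q), 0]$, so $\exp(-(1-q)) < \alpha_{q} \le 1$; the left inequality is strict because $\ceil{d_{a}} - d_{a} < 1$ when $d_{a}$ is not an integer and $\ceil{d_{a}} - d_{a} = 0$ when it is, so in either case the exponent exceeds $-(1-q)$. As $1 - q < 1$, moreover $\exp(-(1-q)) = 1/\exp(1-q) > 1/\ee$, and altogether $\tfrac1\ee < \tfrac1{\exp(1-q)} < \alpha_{q} \le 1$.

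For the second inequality the crucial step is to observe that the factor $\ceil{d_{a}} - d_{a}$ hidden in $\alpha_{q}$ and the factor $d_{a}$ in $D = (1-q)d_{a}$ recombine, giving $\alpha_{q}\exp(-D) = \exp[-(1-q)\ceil{d_{a}}]$. Dividing the claim $\ceil{d_{a}}^{\kappa}\exp(-\ceil{d_{a}}) \le \alpha_{q}\exp(-D)$ by $\exp(-\ceil{d_{a}})$ and taking logarithms turns it into $\kappa\ln\ceil{d_{a}} \le q\ceil{d_{a}}$ (the case $\kappa = 0$ being immediate), so it remains only to prove the elementary implication: if $x \ge \tfrac{2\kappa}{q}\ln(\tfrac{\kappa}{q})$ then $\kappa\ln x \le qx$.

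For this I would put $\lambda = q/\kappa$ and examine the convex function $g(x) = \lambda x - \ln x$, which attains its minimum at $x = 1/\lambda$ with value $1 + \ln\lambda$ and is increasing on $(1/\lambda, \infty)$. If $\lambda \ge 1/\ee$ then $g \ge 0$ everywhere and there is nothing to do; if $\lambda < 1/\ee$ then $\ln(1/\lambda) > 1$, the threshold $x_{0} = \tfrac{2}{\lambda}\ln(1/\lambda) = \tfrac{2\kappa}{q}\ln(\tfrac{\kappa}{q})$ exceeds $1/\lambda$, hence $g$ is increasing on $[x_{0}, \infty)$, and $g(x_{0}) = \ln(1/\lambda) - \ln\!\big(2\ln(1/\lambda)\big) \ge 0$ by the elementary bound $w \ge 2\ln w$ for $w > 0$. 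In both cases $g(x) \ge 0$ for $x \ge x_{0}$, which is exactly $\kappa\ln x \le qx$. I expect this last paragraph to be the only real work — making the decay quantitative with the explicit threshold via the case split on $\lambda$ against $1/\ee$ together with the inequality $w \ge 2\ln w$; everything else is bookkeeping with the definitions, and the hypothesis $\ceil{d_{a}} > 2\kappa + 1$ is inherited from \cref{thm:timeevo-quasilocality} and plays no further role here, being retained only so that \cref{cor:timeevo-quasilocality} can apply both results together.
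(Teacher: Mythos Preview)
Your proof is correct and follows the same route as the paper: rewrite $\alpha_{q}\exp(-D) = \exp(-(1-q)\ceil{d_{a}})$ and reduce the main inequality to the elementary estimate $x^{\kappa} \le \exp(qx)$ for $x$ large enough. The paper defers this last step to an auxiliary lemma (\cref{lem:timeevo-poly-exp-bound}), which you prove inline via the case split on $\lambda = q/\kappa$ against $1/\ee$ and the bound $w \ge 2\ln w$; your treatment is thus more self-contained but otherwise identical in structure, and your remark that the hypothesis $\ceil{d_{a}} > 2\kappa + 1$ is not needed for the present argument (only for the application of \cref{thm:timeevo-quasilocality}) is a valid observation.
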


\begin{proof}
  We have $\alpha_{q} = \exp(D - (1-q) \ceil{d_{a}})$. 
  Because $\ceil{d_{a}}$ was assumed to be large enough, we can use \cref{lem:timeevo-poly-exp-bound} to obtain
  $\ceil{d_{a}}^{\kappa} \exp(-\ceil{d_{a}}) \le \exp(-(1-q) \ceil{d_{a}}) =
    \alpha_{q} \exp(- D) $.
  This completes the proof.
\end{proof}

We simplify the bound from \cref{eq:timeevo-quasilocality} by applying \cref{lem:timeevo-scale-distance}:

\begin{cor}
  \label{cor:timeevo-quasilocality}

  Let $a$, $\mcz$ and $J$ be finite and $t \in \R$.
  Let $Y \subset R \subset \Lambda$ and let $A$ act on $Y$.
  Choose $q \in (0, 1)$ and set $D = (1-q) d_{a}$ where $d_{a} = d(Y, \Lambda \setminus R) / a$ and where $\ceil{d_{a}}$ is large enough (\cref{lem:timeevo-scale-distance}). 
  Then
  \begin{align}
    \opnorm{ \tau^{H}_{t}(A) - \tau^{H_{\bar R}}_{t}(A) }
    \le
    \frac{2M \alpha_{q}}{\mcz} \opnorm{A} \exp(v\abs t - D)
    \label{eq:cor-timeevo-quasilocality}
  \end{align}
  holds.
  The Lieb--Robinson velocity is given by $v = J \mcz \exp(1)$ and $\alpha_{q} \in (\exp(-(1-q)), 1]$.
  Specifically, $\alpha_{q} = \exp(-(1-q) (\ceil{d_{a}} - d_{a}))$.
\end{cor}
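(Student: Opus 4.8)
The plan is to feed the polynomial-versus-exponential estimate of \cref{lem:timeevo-scale-distance} directly into the quasilocality bound of \cref{thm:timeevo-quasilocality}. First I would check that the hypotheses line up. \Cref{thm:timeevo-quasilocality} needs only $\ceil{d_{a}} > 2\kappa + 1$, whereas the corollary assumes that $\ceil{d_{a}}$ is \emph{large enough} in the sense of \cref{lem:timeevo-scale-distance}, i.e.\ $\ceil{d_{a}} > 2\kappa+1$ \emph{and} $\ceil{d_{a}} \ge \frac{2\kappa}{q}\ln(\frac{\kappa}{q})$. The second condition is exactly what \cref{lem:timeevo-scale-distance} requires, and the first is inherited by \cref{thm:timeevo-quasilocality}, so both results apply to the given data $Y \subset R \subset \Lambda$, $A$ acting on $Y$, and $t \in \R$. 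I would also note in passing that $d_{a} = d(Y,\Lambda\setminus R)/a \ge 0$, so $D = (1-q) d_{a} \ge 0$ is a legitimate argument for \cref{lem:timeevo-scale-distance}.

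Next I would carry out the substitution. \Cref{thm:timeevo-quasilocality} gives
\begin{align}
  \opnorm{ \tau^{H}_{t}(A) - \tau^{H_{\bar R}}_{t}(A) }
  &\le
  \frac{2M}{\mcz} \opnorm{A} \ceil{d_{a}}^{\kappa} \exp(v\abs t - \ceil{d_{a}})
  = \frac{2M}{\mcz} \opnorm{A} \exp(v\abs t) \left( \ceil{d_{a}}^{\kappa} \exp(-\ceil{d_{a}}) \right).
  \nonumber
\end{align}
Applying \cref{lem:timeevo-scale-distance}, which bounds $\ceil{d_{a}}^{\kappa} \exp(-\ceil{d_{a}}) \le \alpha_{q}\exp(-D)$ with $\alpha_{q} = \exp(-(1-q)(\ceil{d_{a}} - d_{a}))$ and $\exp(-(1-q)) < \alpha_{q} \le 1$, and recombining $\exp(v\abs t) \exp(-D) = \exp(v\abs t - D)$, yields exactly \cref{eq:cor-timeevo-quasilocality}. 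The stated range of $\alpha_{q}$ and the identity $v = J\mcz\exp(1)$ then carry over verbatim from \cref{thm:timeevo-quasilocality,lem:timeevo-scale-distance}, completing the argument.

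I do not expect a genuine obstacle here: the statement is a bookkeeping corollary, and the only point that warrants a moment's care is confirming that the notion of \enquote{$\ceil{d_{a}}$ large enough} used by \cref{lem:timeevo-scale-distance} also secures the (weaker) hypothesis of \cref{thm:timeevo-quasilocality}. Once that is observed, what remains is a single factorisation of the exponent and one application of the lemma, so the proof will be short.
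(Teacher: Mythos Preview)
Your proposal is correct and matches the paper's approach exactly: the corollary is obtained by applying \cref{lem:timeevo-scale-distance} to the bound of \cref{thm:timeevo-quasilocality}, with the only observation being that the \enquote{large enough} hypothesis of the lemma subsumes the condition $\ceil{d_{a}} > 2\kappa+1$ required by the theorem. The paper does not even spell out a separate proof, so your level of detail already exceeds what appears there.
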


The upper bound from \cref{eq:cor-timeevo-quasilocality} is at most $\epsilon$ if $D$ is large enough:

\begin{cor}
  \label{cor:timeevo-quasilocality-D-scaling}
  
  Let $a$, $\mcz$ and $J$ be finite and $t \in \R$.
  Let $Y \subset R \subset \Lambda$ and let $A$ act on $Y$.
  Choose $q \in (0, 1)$ and set $D = (1-q) d_{a}$ where $d_{a} = d(Y, \Lambda \setminus R) / a$ and where $\ceil{d_{a}}$ is large enough (\cref{lem:timeevo-scale-distance}). 
  If $D$ satisfies
  \begin{align}
    D &\ge v\abs t + \ln\parena{ \frac{1}{\epsilon} } + \ln\parena{\opnorm{A}} + c_{1},
    &
      c_{1} &= \ln(2M/\mcz),
    \label{eq:timeevo-quasilocality-D-scaling}
  \end{align}
  then%
  \footnote{
    This holds for all $D$ which satisfy \eqref{eq:timeevo-quasilocality-D-scaling}; it holds e.g.\ if $D$ is equal to the lower bound stated in \eqref{eq:timeevo-quasilocality-D-scaling}.
  }
  \begin{align}
    \opnorm{ \tau^{H}_{t}(A) - \tau^{H_{\bar R}}_{t}(A) }
    \le
    \alpha_{q} \epsilon \le \epsilon.
  \end{align}
  The Lieb--Robinson velocity is given by $v = J \mcz \exp(1)$.
  Refer to \cref{cor:timeevo-quasilocality} for $\alpha_{q}$.
\end{cor}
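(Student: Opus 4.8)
The plan is to read the inequality straight off \cref{cor:timeevo-quasilocality} and then absorb the lower bound on $D$ into the exponent by elementary manipulation of logarithms. The hypotheses assumed here ($a$, $\mcz$, $J$ finite, $t \in \R$, $Y \subset R \subset \Lambda$, $A$ acting on $Y$, $q \in (0,1)$, and $\ceil{d_{a}}$ large enough in the sense of \cref{lem:timeevo-scale-distance}) are exactly those of \cref{cor:timeevo-quasilocality}, so the first step is to invoke that corollary to obtain
\begin{align}
  \opnorm{ \tau^{H}_{t}(A) - \tau^{H_{\bar R}}_{t}(A) }
  \le
  \frac{2M\alpha_{q}}{\mcz}\,\opnorm{A}\,\exp(v\abs t - D),
\end{align}
with $\alpha_{q} = \exp(-(1-q)(\ceil{d_{a}} - d_{a})) \in (\exp(-(1-q)), 1]$.

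Second, I would rewrite the factor $\frac{2M}{\mcz}\opnorm{A}\exp(v\abs t - D)$ as $\exp(c_{1} + \ln\opnorm{A} + v\abs t - D)$ with $c_{1} = \ln(2M/\mcz)$ (assuming $A \neq 0$; the case $A = 0$ makes the statement trivial since the left-hand side vanishes). The hypothesis \eqref{eq:timeevo-quasilocality-D-scaling} says precisely that $c_{1} + \ln\opnorm{A} + v\abs t - D \le -\ln(1/\epsilon) = \ln\epsilon$, hence $\frac{2M}{\mcz}\opnorm{A}\exp(v\abs t - D) \le \epsilon$, and therefore the right-hand side of the displayed bound is $\le \alpha_{q}\epsilon$. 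Since $\alpha_{q} \le 1$ by \cref{lem:timeevo-scale-distance}, we also get $\alpha_{q}\epsilon \le \epsilon$, which is the claim.

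Finally, to justify the footnote — that the conclusion holds for \emph{every} $D$ satisfying \eqref{eq:timeevo-quasilocality-D-scaling}, not only for $D$ equal to the stated threshold — I would observe that $\exp(v\abs t - D)$ is decreasing in $D$ while $\alpha_{q} \in (\exp(-(1-q)), 1]$ independently of $D$, so that $\frac{2M}{\mcz}\opnorm{A}\exp(v\abs t - D) \le \epsilon$ remains valid (and dividing the displayed bound by $\alpha_{q}>0$ gives the same conclusion) when $D$ is replaced by any larger value. I do not expect a genuine obstacle here: the entire mathematical content sits in \cref{cor:timeevo-quasilocality} (and, through it, in \cref{thm:timeevo-quasilocality} and \cref{lem:timeevo-scale-distance}), and the present statement is merely its reformulation in a form where the region size needed to reach a target error $\epsilon$ is exhibited explicitly, displaying the linear growth of $D$ — hence of $d_{a}$, hence of $\diam(R)$ — in $v\abs t + \ln(1/\epsilon) + \ln\opnorm{A}$.
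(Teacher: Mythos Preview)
Your proposal is correct and is exactly the argument the paper has in mind: the corollary is stated without proof precisely because it is the immediate rearrangement of \cref{cor:timeevo-quasilocality} you describe, setting the right-hand side of \eqref{eq:cor-timeevo-quasilocality} below $\alpha_{q}\epsilon$ and solving for $D$. One cosmetic remark: $\alpha_{q}$ is not literally independent of $D$ (it depends on the fractional part of $d_{a}$), but your argument only uses the uniform bound $\alpha_{q}\le 1$, which is all that is needed.
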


\Cref{cor:timeevo-quasilocality-D-scaling} states that the time evolution $A(t)$ of a local observable $A(0)$ acting on $Y$ can be approximated by another local observable $A'(t)$ which acts on a certain region $\bar R$ around $Y$.
This is possible with high accuracy if the region $\bar R$ is large enough.
Suppose that $G$ is a sum of time-evolved local observables and $G'$ is obtained by taking the sum of corresponding approximated observables.
The next \namecref{lem:approx-loc-obs} compares the expectation value $\tr(\rho G')$ of the approximated observable $G'$ with the expectation values $\tr(\rho G)$ and $\tr(\psi G)$ where the quantum state $\psi$ has small distance from $\rho$ (in trace norm). 

\begin{lem}
  \label{lem:approx-loc-obs}
  
  Let $g_{i}(0)$ be observables with $\opnorm{g_{i}(0)} \le 1$ which act non-trivially on $Y_{i}$, $Y_{i} \subset R_{i} \subset \Lambda$.
  Choose a fixed time $t \in \R$ and let $G = \sum_{i=1}^{\Gamma} g_{i}(t)$ and $G'$ be the sum of $g_{i}'(t) = \tau_{t}^{H_{\bar R_{i}}}(g_{i}(0))$.
  Let $\psi$ and $\rho$ be quantum states with $\trnorm{\rho - \psi} \le \gamma$.
  Let $\Gamma \gamma < \mci$.
  Choose $q \in (0, 1)$ and set $D = (1-q) d_{a}$ where $d_{a} = \frac 1a \max_{i} d(Y_{i}, \Lambda \setminus R_{i})$.
  If $\ceil{d_{a}}$ is large enough (\cref{lem:timeevo-scale-distance}) and $D$ satisfies
  \begin{align}
    D &\ge v\abs t + \ln\parena{ \frac{2\Gamma}{\mci - \Gamma \gamma} } + c_{1},
  \end{align}
  then
  \begin{align}
    \tr(\rho G) \le \tr(\rho G') + \delta \le \tr(\psi G) + \mci
  \end{align}
  holds with $\delta = \frac12(\mci - \Gamma\gamma)$.
\end{lem}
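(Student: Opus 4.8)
The plan is to push everything through a single operator-norm estimate on $G - G'$ and then translate it into the two expectation-value inequalities by the standard duality bounds $\abs{\tr(\sigma X)} \le \opnorm{X}$ for a state $\sigma$ and $\abs{\tr(\Delta X)} \le \trnorm{\Delta}\,\opnorm{X}$ for an arbitrary Hermitian $\Delta$ (Hölder's inequality for Schatten $1$- and $\infty$-norms).

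First I would estimate $\opnorm{G - G'}$. With $g_i(t) = \tau_t^{H}(g_i(0))$, the triangle inequality gives $\opnorm{G - G'} \le \sum_{i=1}^{\Gamma} \opnorm{\tau_t^{H}(g_i(0)) - \tau_t^{H_{\bar R_i}}(g_i(0))}$, and each summand is precisely the quantity bounded by \cref{cor:timeevo-quasilocality-D-scaling} applied with $A = g_i(0)$, $Y = Y_i$, $R = R_i$ and target accuracy $\epsilon' = (\mci - \Gamma\gamma)/(2\Gamma)$. Since $\Gamma\gamma < \mci$, this $\epsilon'$ is positive and $\ln(2\Gamma/(\mci-\Gamma\gamma)) = \ln(1/\epsilon')$ is well defined; since $\opnorm{g_i(0)} \le 1$, the term $\ln\opnorm{A}$ in \eqref{eq:timeevo-quasilocality-D-scaling} is $\le 0$, so the hypothesis $D \ge v\abs t + \ln(2\Gamma/(\mci-\Gamma\gamma)) + c_1$ together with $\ceil{d_a}$ being large enough is exactly what \cref{cor:timeevo-quasilocality-D-scaling} requires, for every $i$. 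Hence each summand is at most $\alpha_q\epsilon' \le \epsilon'$, and summing the $\Gamma$ of them gives $\opnorm{G - G'} \le \Gamma\epsilon' = \frac{1}{2}(\mci - \Gamma\gamma) = \delta$.

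The rest is bookkeeping with the trace inequalities. For the first claimed inequality, $\tr(\rho G) - \tr(\rho G') = \tr(\rho(G - G')) \le \opnorm{G - G'} \le \delta$ because $\rho$ is a state. For the second, I would interpolate through $\psi G'$: using $\trnorm{\rho-\psi} \le \gamma$ and $\opnorm{G'} \le \sum_i \opnorm{\tau_t^{H_{\bar R_i}}(g_i(0))} = \sum_i \opnorm{g_i(0)} \le \Gamma$ (conjugation by a unitary preserves the operator norm) gives $\tr(\rho G') - \tr(\psi G') = \tr((\rho-\psi)G') \le \Gamma\gamma$, while $\tr(\psi G') - \tr(\psi G) = \tr(\psi(G'-G)) \le \opnorm{G'-G} \le \delta$ as before. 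Adding $\delta$ to both sides and using $2\delta + \Gamma\gamma = \mci$ yields $\tr(\rho G') + \delta \le \tr(\psi G) + \mci$, which completes the chain.

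I do not expect a genuine obstacle; the only part needing care is the error-budget accounting — splitting $\mci$ into the piece $\Gamma\gamma$ absorbed by the state mismatch $\trnorm{\rho-\psi}$ and the two pieces of size $\delta = \frac{1}{2}(\mci-\Gamma\gamma)$ absorbed by the two uses of the quasilocality bound — and checking that the single hypothesis on $D$ in the lemma feeds correctly into \cref{cor:timeevo-quasilocality-D-scaling} for all $\Gamma$ local terms at once (so that the $\ln(2\Gamma/\cdots)$ offset accounts both for the factor $\Gamma$ from the sum over $i$ and for the factor $2$ from the budget split). I would also verify that $d_a$ is taken over the region on which the local approximation is worst, since that is the region whose depth must clear the threshold in \eqref{eq:timeevo-quasilocality-D-scaling}.
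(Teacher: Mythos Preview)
Your proposal is correct and follows essentially the same approach as the paper: both sum the quasilocality bound from \cref{cor:timeevo-quasilocality-D-scaling} over the $\Gamma$ terms to obtain $\opnorm{G-G'}\le\delta$, then finish with two applications of the H\"older-type inequality $\abs{\tr(\sigma X)}\le\trnorm{\sigma}\opnorm{X}$. The only cosmetic difference is that the paper routes the second inequality through $\tr(\rho G)$ (using $\opnorm{G}\le\Gamma$) rather than through $\tr(\psi G')$ (using $\opnorm{G'}\le\Gamma$) as you do, but the final arithmetic $2\delta+\Gamma\gamma=\mci$ is identical.
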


\begin{proof}
  Set $\epsilon = \frac\delta\Gamma = \frac1{2\Gamma}(\mci - \Gamma\gamma)$. 
  Applying \cref{cor:timeevo-quasilocality-D-scaling} provides
  \begin{align}
    \opnorm{G - G'} \le
    \sum_{i=1}^{\Gamma}
    \opnorm{g_{i}(t) - g_{i}'(t)}
    \le
    \Gamma\epsilon =
    \delta.
    \label{eq:approx-loc-obs-G-Gp}
  \end{align}
  Using $\abs{\tr(\rho(G - G'))} \le \trnorm{\rho} \opnorm{G - G'}$ \parencite[Exercise~IV.2.12]{Bhatia1997} provides
  \begin{align}
    \tr(\rho G) \le \tr(\rho G') + \delta \le \tr(\rho G) + 2\delta.
    \label{eq:approx-loc-obs-G-Gp-2}
  \end{align}
  Using $\abs{\tr{(\rho - \psi) G}} \le \trnorm{\rho - \psi} \opnorm{G}$ provides
  \begin{align}
    \tr(\rho G) - \tr(\psi G) \le \trnorm{\rho - \psi} \opnorm{G} \le \gamma \Gamma.
    \label{eq:approx-loc-obs-rho-psi}
  \end{align}
  Inserting \eqref{eq:approx-loc-obs-rho-psi} into \eqref{eq:approx-loc-obs-G-Gp-2} completes the proof.
\end{proof}

\section{Efficient certification
  \label{sec:timeevo-parent-H-cert}}

An observable $G$ is called a parent Hamiltonian of a pure state $\ket\psi$ if $\ket\psi$ is a ground state of $G$ (i.e.\ an eigenvector of $G$'s smallest eigenvalue).
If such a ground state is non-degenerate, the expectation value $\tr(\rho G)$ in an arbitrary state $\rho$ provides a lower bound on the fidelity of $\rho$ and the ground state $\ket\psi$ \parencite{Cramer2010}:

\begin{lem}
  \label{lem:timeevo-parent-h-cert}
  
  Let $G$ be an observable with the two smallest eigenvalues $E_{0}$ and $E_{1} > E_{0}$.
  Let $\ket\psi$ be an eigenvector of the smallest eigenvalue $E_{0}$ and let $E_{0}$ be non-degenerate.
  Let $\rho$ be some quantum state. 
  Then,
  \begin{align}
    \label{eq:timeevo-parent-h-cert}
    1 - \bra\psi \rho \ket\psi
    \le
    \beta =
    \frac{E_{\rho} - E_{0}}{E_{1} - E_{0}}
  \end{align}
  where $E_{\rho} = \tr(\rho G)$ \parencite{Cramer2010}.
  The value of the right hand side is bounded by
  \begin{align}
    \label{eq:timeevo-parent-h-cert-worstcase}
    \beta = \frac{E_{\rho} - E_{0}}{E_{1} - E_{0}}
    \le
    \frac{\trnormc{\rho - \ketbra\psi\psi} \opnormc{G}}{E_{1} - E_{0}}.
  \end{align}
\end{lem}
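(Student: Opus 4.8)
The plan is to prove the first inequality by a spectral-gap argument and the second by the same Hölder-type trace inequality used in the proof of \cref{lem:approx-loc-obs}. Both steps follow \parencite{Cramer2010}.

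First I would pass to the spectral decomposition of $G$ and let $P_{0}$ be the orthogonal projector onto the eigenspace of the smallest eigenvalue $E_{0}$. Because $E_{0}$ is non-degenerate and $\ket\psi$ is an eigenvector for $E_{0}$, we have $P_{0} = \ketbra\psi\psi$, and every remaining spectral value of $G$ is at least $E_{1}$. Hence the operator inequality
\begin{align}
  G \ge E_{0}\, \ketbra\psi\psi + E_{1}\, (\idm - \ketbra\psi\psi)
\end{align}
holds. Taking the expectation value in $\rho$ and abbreviating $F = \bra\psi\rho\ket\psi$ gives $E_{\rho} = \tr(\rho G) \ge E_{0} F + E_{1} (1 - F) = E_{1} - (E_{1} - E_{0}) F$. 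Since $E_{1} - E_{0} > 0$, rearranging yields $F \ge 1 - (E_{\rho} - E_{0})/(E_{1} - E_{0})$, which is exactly \eqref{eq:timeevo-parent-h-cert}; note also $E_{\rho} \ge E_{0}$ since $E_{0}$ is the smallest eigenvalue, so $\beta \ge 0$, consistent with $1 - F \ge 0$.

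For the bound \eqref{eq:timeevo-parent-h-cert-worstcase} I would observe that $E_{0} = \bra\psi G \ket\psi = \tr(\ketbra\psi\psi\, G)$, so $E_{\rho} - E_{0} = \tr\big((\rho - \ketbra\psi\psi)\, G\big)$. Applying $\abs{\tr(XY)} \le \trnorm{X}\,\opnorm{Y}$ \parencite[Exercise~IV.2.12]{Bhatia1997} with $X = \rho - \ketbra\psi\psi$ and $Y = G$ gives $E_{\rho} - E_{0} \le \trnorm{\rho - \ketbra\psi\psi}\,\opnorm{G}$, and dividing by $E_{1} - E_{0} > 0$ finishes the proof.

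There is no real obstacle here: the only point that needs care is the identification $P_{0} = \ketbra\psi\psi$, which is precisely where non-degeneracy of $E_{0}$ enters and which makes the gap estimate $G - E_{0}\idm \ge (E_{1} - E_{0})(\idm - \ketbra\psi\psi)$ valid. Everything else is a one-line rearrangement or a direct invocation of a standard trace-norm inequality already used above.
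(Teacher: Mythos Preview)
Your proof is correct and matches the paper's approach. The paper defers \eqref{eq:timeevo-parent-h-cert} to \textcite{Cramer2010,BaumgratzPhD2014} rather than spelling out the spectral-gap argument you give, and for \eqref{eq:timeevo-parent-h-cert-worstcase} it uses exactly the same identity $E_{\rho}-E_{0}=\tr\big((\rho-\ketbra\psi\psi)G\big)$ together with \parencite[Exercise~IV.2.12]{Bhatia1997}.
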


\begin{proof}
  Proofs of \cref{eq:timeevo-parent-h-cert} have been given by \textcite{Cramer2010,BaumgratzPhD2014}.
  \Cref{eq:timeevo-parent-h-cert-worstcase} follows from
  \begin{align}
    \tr(\rho G) - E_{0} = \abs{\tr([\rho - \psi] G)} \le \trnorm{\rho - \psi} \opnorm{G}
    \label{eq:infid-energy-worstcase}
  \end{align}
  where $\psi = \ketbra\psi\psi$.
  In the second inequality, we have used \parencite[Exercise~IV.2.12]{Bhatia1997} and this completes the proof.
\end{proof}

\begin{rem}
  \label{eq:timeevo-parent-h-nonideal}

  Suppose that the expectation value $E_{\rho} = \tr(\rho G)$ is not exactly known e.g.\ because it has been estimated from a finite number of measurements.
  The resulting uncertainty about the value of $\beta$ is given by the uncertainty about $E_{\rho}$ multiplied by the inverse of the energy gap $\Delta = E_{1} - E_{0}$ above the ground state.
  For robust certification, this energy gap must be sufficiently large.

  Suppose that $\rho$ is the unknown quantum state of some experiment which attempts to prepare the state $\ket\psi$. 
  If the experiment succeeds, $\rho$ will be close to the ideal state $\ket\psi$ (e.g.\ in trace distance) but the two states will not be equal.
  The maximal value of the infidelity upper bound $\beta$ from \cref{eq:timeevo-parent-h-cert} is provided by \eqref{eq:timeevo-parent-h-cert-worstcase}.
  In the worst case, $\beta$ is given by the trace distance of $\rho$ and $\ket\psi$, multiplied by the ratio of the Hamiltonian's largest eigenvalue and its energy gap $\Delta$.

  In a typical application, the expectation value $E_{\rho}$ is not exactly known and the states $\rho$ and $\ket\psi$ are not exactly equal.
  In order to obtain a useful certificate, it is necessary that both the energy gap $\Delta$ is sufficiently large and that the largest eigenvalue $\opnorm{G}$ is sufficiently small.
\end{rem}

The following simple Lemma shows that pure product states admit a parent Hamiltonian that has unit gap and only single-site local terms.
This result is a simple special case of prior work involving matrix product states \parencite{Perez2006,Cramer2010,BaumgratzPhD2014}.

\begin{lem}
  \label{lem:product-state-parent-h}
  Let $\ket{\phi} = \ket{\phi_{1}} \otimes \ket{\phi_{2}} \otimes \cdots \otimes \ket{\phi_{n}}$ be a product state on $n$ systems of dimension $d_{i} \ge 2$, $\braket{\phi_{i}}{\phi_{i}} = 1$, $i \in \{ 1 \ldots n \}$.
  Define
  \begin{align}
    G = \sum_{i=1}^{n} h_{i}, \quad h_{i} = 
    \idm_{1, \dots, i-1} \otimes P_{\ker(\rho_{i})} \otimes \idm_{i+1, \dots, n}
  \end{align}
  where $P_{\ker(\rho_{i})} = \idm - \ketbra{\phi_{i}}{\phi_{i}}$ is the orthogonal projection onto the null space of the reduced density operator $\rho_{i} = \ketbra{\phi_{i}}{\phi_{i}}$ of $\ket\phi$ on site $i$.
  The eigenvalues of $G$ are given by $\{0, 1, 2, \dots, n\}$, the smallest eigenvalue zero is non-degenerate and $\ket\phi$ is an eigenvector of eigenvalue zero.
\end{lem}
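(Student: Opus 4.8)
The plan is to exploit the fact that the local terms $h_{i}$ are mutually commuting orthogonal projections, and then to diagonalise $G$ explicitly in a product basis adapted to the $\ket{\phi_{i}}$.

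First I would record that each $h_{i}$ is an orthogonal projection: since $\braket{\phi_{i}}{\phi_{i}} = 1$, the operator $P_{\ker(\rho_{i})} = \idm - \ketbra{\phi_{i}}{\phi_{i}}$ is the projection onto the orthogonal complement of $\ket{\phi_{i}}$ in $\HH_{i}$, and tensoring with identities on the remaining sites yields again an orthogonal projection $h_{i}$ on the full Hilbert space. Moreover each $h_{i}$ acts non-trivially only on site $i$, so $[h_{i}, h_{j}] = 0$ for all $i,j$, and the $h_{i}$ (hence $G$) admit a common eigenbasis. To exhibit one, for each site $i$ I would extend $\ket{\phi_{i}}$ to an orthonormal basis $\{\ket{e_{i}^{(0)}} = \ket{\phi_{i}}, \ket{e_{i}^{(1)}}, \dots, \ket{e_{i}^{(d_{i}-1)}}\}$ of $\HH_{i}$; this basis has at least two elements because $d_{i} \ge 2$. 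The product vectors $\ket{e^{(\vec j)}} = \ket{e_{1}^{(j_{1})}} \otimes \cdots \otimes \ket{e_{n}^{(j_{n})}}$, indexed by tuples $\vec j = (j_{1}, \dots, j_{n})$ with $0 \le j_{i} < d_{i}$, form an orthonormal basis of the full Hilbert space. Since $P_{\ker(\rho_{i})} \ket{e_{i}^{(j_{i})}}$ equals $0$ when $j_{i} = 0$ and equals $\ket{e_{i}^{(j_{i})}}$ when $j_{i} \ne 0$, each $\ket{e^{(\vec j)}}$ is an eigenvector of $h_{i}$ with eigenvalue $[j_{i} \ne 0]$, and therefore $G \ket{e^{(\vec j)}} = k(\vec j)\,\ket{e^{(\vec j)}}$ with $k(\vec j) = \abs{\{ i \colon j_{i} \ne 0 \}}$.

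Finally I would read off the spectrum. The integer $k(\vec j)$ takes every value in $\{0, 1, \dots, n\}$: choosing $j_{i} = 0$ for all $i$ gives $k = 0$, and since each site has at least one basis index $\ne 0$ available (using $d_{i} \ge 2$), declaring exactly $m$ of the $j_{i}$ nonzero gives $k = m$ for any $0 \le m \le n$. Hence the set of eigenvalues of $G$ is exactly $\{0, 1, \dots, n\}$, so $0$ is the smallest. The value $0$ is attained only by the single basis vector with $\vec j = (0, \dots, 0)$, namely $\ket{e^{(\vec 0)}} = \ket{\phi_{1}} \otimes \cdots \otimes \ket{\phi_{n}} = \ket\phi$; as the $\ket{e^{(\vec j)}}$ form a basis, the $0$-eigenspace is one-dimensional, i.e.\ $0$ is non-degenerate and $\ket\phi$ spans it. There is essentially no obstacle here; the only points that deserve a moment's care are the pairwise commutativity of the $h_{i}$ (which legitimises the simultaneous product eigenbasis) and the use of $d_{i} \ge 2$, which guarantees both that $\ket\phi$ is not the only product basis vector and that every eigenvalue up to $n$ is realised.
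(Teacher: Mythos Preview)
Your proof is correct and follows essentially the same approach as the paper: both extend each $\ket{\phi_{i}}$ to an orthonormal basis of $\HH_{i}$, observe that the resulting product basis diagonalises $G$, and identify the eigenvalue of a product vector as the number of tensor factors differing from the corresponding $\ket{\phi_{i}}$. Your version is slightly more explicit about the commutativity of the $h_{i}$ and the role of $d_{i}\ge 2$, but the argument is the same.
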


\begin{proof}
  Let $\ket{\mu^{(i)}_{i_{k}}}$ ($i_{k} \in \{1, \dots, d_{i}\}$) an orthonormal basis of system $i$ with $\ket{\mu^{(i)}_{1}} = \ket{\phi_{i}}$ ($i \in \{1, \dots, n\}$).
  The product basis constructed from these bases is an eigenbasis of $H$:
  \begin{align*}
    G \ket\mu = \lambda \ket\mu, \quad \ket\mu = \ket{\mu^{(1)}_{i_{1}}} \otimes \dots \otimes \ket{\mu^{(n)}_{i_{n}}},
    \quad
    \lambda = \abs{\cur{k \in \{1\dots n\} \colon i_{k} > 1}}.
  \end{align*}
  As we required $d_{i} \ge 2$, the eigenvalues of $H$ are given by $\{0, 1, 2, \ldots, n\}$.
  We also see that the smallest eigenvalue zero is non-degenerate and $\ket{\phi}$ is an eigenvector of eigenvalue zero.
  This completes the proof.
\end{proof}

In \cref{lem:product-state-parent-h}, a parent Hamiltonian $G$ is constructed from projectors onto null spaces of single-site reduced density matrices.
One projection is required for each of the $n$ sites and this determines the value of the operator norm $\opnorm G = n$.
In \cref{lem:timeevo-parent-h-cert}, a smaller operator norm was seen to be advantageous for robust certification. 
By projecting onto null spaces of multi-site reduced density matrices, the following Lemma obtains a parent Hamiltonian with smaller operator norm.
More importantly, it also provides a parent Hamiltonian for the time-evolved state $\ket{\psi(t)}$.

\begin{lem}
  \label{lem:time-evo-parent-h}
  Let $\ket{\psi(0)} = \ket{\phi_{1}} \otimes \dots \otimes \ket{\phi_{n}}$ be a product state on the lattice $\Lambda$.
  Let $Y_{1}, \dots, Y_{\Gamma}$ a partition of the set of sites $\Lambda$. 
  For a subset $Y \subset \Lambda$, define $\ket{\phi_{Y}} = \bigotimes_{k \in Y} \ket{\phi_{k}}$. 
  Set $g_{i}(0) = (\idm - \ketbra{\phi_{Y_{i}}}{\phi_{Y_{i}}}) \otimes \idm_{\Lambda \setminus Y_{i}}$.
  Choose a fixed time $t \in \R$ and let $g_{i}(t) = \tau_{t}^{H}(g_{i}(0))$ and $G = \sum_{i=1}^{\Gamma} g_{i}(t)$.

  The time-evolved state $\ket{\psi(t)} = U_{t0} \ket{\psi(0)}$ is an eigenvector of $G$'s non-degenerate eigenvalue zero and the eigenvalues of $G$ are given by $\{0, 1, \dots, \Gamma\}$.
  
\end{lem}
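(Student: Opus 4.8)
The plan is to reduce the claim to \cref{lem:product-state-parent-h} by treating each partition block $Y_{i}$ as a single enlarged ``site'', and then to transport the whole picture forward in time by unitary conjugation; no genuinely new idea is needed beyond the counting argument already used there.

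First I would analyse the operator at time zero, $G_{0} = \sum_{i=1}^{\Gamma} g_{i}(0)$. Since $\{Y_{i}\}$ is a partition of $\Lambda$, the subsystems $\HH_{Y_{i}}$ are pairwise disjoint, so $g_{i}(0) = P_{i} \otimes \idm_{\Lambda\setminus Y_{i}}$ with $P_{i} = \idm_{Y_{i}} - \ketbra{\phi_{Y_{i}}}{\phi_{Y_{i}}}$ acts on a separate tensor factor for each $i$; in particular the $g_{i}(0)$ are mutually commuting orthogonal projections. Each $\HH_{Y_{i}}$ has dimension $\prod_{k \in Y_{i}} d_{k} \ge 2$ because the blocks are non-empty and $d_{k} \ge 2$, so I can choose for each $i$ an orthonormal basis $\ket{\mu^{(i)}_{0}}, \ket{\mu^{(i)}_{1}}, \dots$ of $\HH_{Y_{i}}$ with $\ket{\mu^{(i)}_{0}} = \ket{\phi_{Y_{i}}}$; then $P_{i}$ annihilates $\ket{\mu^{(i)}_{0}}$ and fixes every other basis vector. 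The product basis built from these bases simultaneously diagonalises all $g_{i}(0)$, hence $G_{0}$, and the eigenvalue attached to a product vector is exactly the number of blocks in which a vector other than $\ket{\mu^{(i)}_{0}}$ was chosen. Exactly as in the proof of \cref{lem:product-state-parent-h}, this count ranges over $\{0, 1, \dots, \Gamma\}$, the value $0$ is attained only by $\bigotimes_{i} \ket{\mu^{(i)}_{0}} = \bigotimes_{i} \ket{\phi_{Y_{i}}} = \ket{\psi(0)}$ (the blocks partition $\Lambda$), and therefore $G_{0}$ has spectrum $\{0,1,\dots,\Gamma\}$ with non-degenerate smallest eigenvalue $0$ and eigenvector $\ket{\psi(0)}$.

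Second I would pass to time $t$. By linearity of $\tau_{t}^{H}$ we have $G = \sum_{i} \tau_{t}^{H}(g_{i}(0)) = \tau_{t}^{H}(G_{0})$, and $\tau_{t}^{H}(\cdot) = U_{t0}\,(\cdot)\,U_{0t}$ with $U_{0t} = [U_{t0}]^{\adjm}$ is conjugation by the evolution unitary (consistently with $\rho(t) = \tau_{t}^{H}(\rho(0)) = U_{t0}\,\rho(0)\,U_{0t}$). Unitary conjugation is a $*$-automorphism of the operator algebra, so it preserves the spectrum together with multiplicities: $G$ again has eigenvalues $\{0, 1, \dots, \Gamma\}$ and a non-degenerate smallest eigenvalue $0$, whose unique eigenvector is the image under $U_{t0}$ of the corresponding eigenvector of $G_{0}$, i.e.\ $U_{t0}\ket{\psi(0)} = \ket{\psi(t)}$. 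Equivalently one checks directly that $g_{i}(t)\ket{\psi(t)} = U_{t0}\,g_{i}(0)\,U_{0t}\,U_{t0}\ket{\psi(0)} = U_{t0}\,g_{i}(0)\ket{\psi(0)} = 0$ for every $i$, so $\ket{\psi(t)}$ lies in the kernel of each $g_{i}(t)$ and hence in the eigenspace of $G$ of eigenvalue $0$.

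I do not expect a real obstacle here; the only points needing a little care are (i) that commutativity of the $g_{i}(0)$ uses $\{Y_{i}\}$ being a \emph{partition} (a mere cover would not do), (ii) that the hypothesis $d_{k} \ge 2$ on the non-empty blocks is what makes the whole spectrum $\{0, \dots, \Gamma\}$ -- in particular the gap above $0$ -- actually attained, and (iii) that the direction of the conjugating unitary is kept consistent with the convention $\ket{\psi(t)} = U_{t0}\ket{\psi(0)}$, so that the ground state of $G$ comes out as $\ket{\psi(t)}$ rather than $\ket{\psi(-t)}$. The special case $\Gamma = n$, $Y_{i} = \{i\}$, $t = 0$ recovers \cref{lem:product-state-parent-h}.
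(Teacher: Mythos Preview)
Your proposal is correct and follows essentially the same route as the paper: reduce the time-zero statement to \cref{lem:product-state-parent-h} by grouping the sites of each block $Y_{i}$ into a single supersite, and then transport the spectrum and ground state to time $t$ via the unitary conjugation $G = U_{t0} G_{0} U_{0t}$. Your write-up is more explicit about the product-basis counting and about the direction of the conjugation, but the argument is the same.
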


\begin{proof}
  Let $G_{0} = \sum_{i=1}^{\Gamma} g_{i}(0)$.
  $G_{0}$'s eigenvalues are given by $\{0, \dots, \Gamma\}$ and $\ket{\psi(0)}$ is a non-degenerate eigenvector of $G_{0}$'s eigenvalue zero (\cref{lem:product-state-parent-h}; group sites into supersites as specified by the sets $Y_{i}$).
  The operators $G$ and $G_{0}$ are related by the unitary transformation $G = U_{t0} G_{0} U_{0t}$, which implies that they have the same eigenvalues including degeneracies and also that $G \ket{\psi(t)} = 0$.
  This completes the proof. 
\end{proof}

The parent Hamiltonian of $\ket{\psi(t)}$ from the last Lemma is not directly useful for certification because it is a sum of terms $g_{i}(t)$ which all act on the full system (for $t \ne 0$).
However, these terms can be approximated by terms which act on smaller regions, as described in the next \namecref{thm:time-evo-approx-parent-h}.
The \namecref{thm:time-evo-approx-parent-h} is illustrated in \cref{fig:cert-basics} on \cpageref{fig:cert-basics} for $Y_{i} = \{i\}$ ($i \in \Lambda$).

\begin{thm}
  \label{thm:time-evo-approx-parent-h}

  Consider the setting of \cref{lem:time-evo-parent-h} which includes a fixed time $t \in \R$.
  Choose sets $R_{i}$ such that $Y_{i} \subset R_{i} \subset \Lambda$ and let $G'$ be the sum of $g_{i}'(t) = \tau_{t}^{H_{\bar R_{i}}}(g_{i}(0))$.
  Let $\psi(t) = \ketbra{\psi(t)}{\psi(t)}$ and let $\rho$ be a state with $\trnorm{\rho - \psi(t)} \le \gamma$ (for the single chosen value of $t$). 
  Choose $\mci > 0$ such that
  \begin{align}
    \delta = \frac{\mci - \Gamma\gamma}2 > 0.
  \end{align}
  Set $D = (1-q) d_{a}$ where $d_{a} = \frac 1a \min_{i} d(Y_{i}, \Lambda \setminus R_{i})$ and $q \in (0, 1)$.
  If $\ceil{d_{a}}$ is large enough (\cref{lem:timeevo-scale-distance}) and $D$ satisfies
  \begin{align}
    D &\ge v \abs t + \ln\parena{ \frac{2\Gamma}{\mci - \Gamma \gamma} } + c_{1},
    &
      c_{1} &= \ln(2M/\mcz),
    \label{eq:time-evo-approx-parent-h-D-scaling}
  \end{align}
  then
  \begin{align}
    I(\rho, \psi(t))
    \le
    \tr(\rho G') + \delta
    \le
    \mci
    \label{eq:time-evo-approx-parent-h-fidelity-bound}
  \end{align}
  where $I(\rho, \psi(t)) = 1 - \bra{\psi(t)} \rho \ket{\psi(t)}$.
\end{thm}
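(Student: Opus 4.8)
The plan is to chain together three facts already established: the parent-Hamiltonian fidelity witness of \cref{lem:timeevo-parent-h-cert}, the identification of $G$ as a gapped parent Hamiltonian of $\ket{\psi(t)}$ with spectrum $\{0, 1, \dots, \Gamma\}$ from \cref{lem:time-evo-parent-h}, and the quasilocal replacement estimate of \cref{lem:approx-loc-obs}. The only genuinely new ingredient beyond these is the observation that $\bra{\psi(t)} G \ket{\psi(t)} = 0$.

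First I would collect the spectral input. By \cref{lem:time-evo-parent-h}, the operator $G = \sum_{i=1}^{\Gamma} g_{i}(t)$ has eigenvalues $\{0, 1, \dots, \Gamma\}$, its smallest eigenvalue $E_{0} = 0$ is non-degenerate with eigenvector $\ket{\psi(t)}$, and hence its second eigenvalue is $E_{1} = 1$ and $\bra{\psi(t)} G \ket{\psi(t)} = 0$. Applying \cref{lem:timeevo-parent-h-cert} to $G$, $\ket{\psi(t)}$ and $\rho$ with $E_{0} = 0$ and $E_{1} = 1$ then gives
\begin{align}
  I(\rho, \psi(t)) = 1 - \bra{\psi(t)} \rho \ket{\psi(t)} \le \frac{\tr(\rho G) - E_{0}}{E_{1} - E_{0}} = \tr(\rho G).
\end{align}

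Next I would bound $\tr(\rho G)$ using \cref{lem:approx-loc-obs} with $\psi = \psi(t)$. Its hypotheses are met: each $g_{i}(0) = (\idm - \ketbra{\phi_{Y_{i}}}{\phi_{Y_{i}}}) \otimes \idm_{\Lambda \setminus Y_{i}}$ is an orthogonal projection, so $\opnorm{g_{i}(0)} \le 1$; the assumption $\trnorm{\rho - \psi(t)} \le \gamma$ is given; $\Gamma\gamma < \mci$ holds because $\delta = (\mci - \Gamma\gamma)/2 > 0$; and the condition on $D$ is precisely \eqref{eq:time-evo-approx-parent-h-D-scaling}. The one point needing care is that here $d_{a} = \frac1a \min_{i} d(Y_{i}, \Lambda \setminus R_{i})$, so every region satisfies $d(Y_{i}, \Lambda \setminus R_{i})/a \ge d_{a}$; since both the largeness requirement of \cref{lem:timeevo-scale-distance} and the lower bound on $D$ are monotone in the distance, the per-term hypotheses of \cref{cor:timeevo-quasilocality-D-scaling} used inside \cref{lem:approx-loc-obs} hold for each $g_{i}$ (with the term $\ln\opnorm{g_{i}(0)} \le 0$ absorbed harmlessly). \cref{lem:approx-loc-obs} thus yields $\tr(\rho G) \le \tr(\rho G') + \delta \le \tr(\psi(t) G) + \mci$, and since $\bra{\psi(t)} G \ket{\psi(t)} = 0$ the right-hand side equals $\mci$. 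Combining this chain with the displayed inequality gives \eqref{eq:time-evo-approx-parent-h-fidelity-bound}.

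I do not anticipate a hard analytic step: the Lieb--Robinson content is entirely encapsulated in \cref{cor:timeevo-quasilocality-D-scaling} and \cref{lem:approx-loc-obs}. The delicate parts are bookkeeping — confirming that the spectrum of $G$ is exactly $\{0, 1, \dots, \Gamma\}$ so that the gap $E_{1} - E_{0}$ equals $1$ and the witness bound collapses to $\tr(\rho G)$, using $\bra{\psi(t)} G \ket{\psi(t)} = 0$ to replace the $\tr(\psi(t) G)$ term by zero, and checking that the minimum-over-$i$ definition of $d_{a}$ propagates the required lower bounds on $\ceil{d_{a}}$ and on $D$ to every term $g_{i}$.
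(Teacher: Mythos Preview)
Your proposal is correct and follows essentially the same route as the paper: apply \cref{lem:timeevo-parent-h-cert} with the spectral information from \cref{lem:time-evo-parent-h} to obtain $I(\rho,\psi(t)) \le \tr(\rho G)$, then invoke \cref{lem:approx-loc-obs} together with $\bra{\psi(t)} G \ket{\psi(t)} = 0$ to finish. Your explicit check that the $\min_i$ definition of $d_a$ propagates the required per-term bounds is a useful clarification but not a departure from the paper's argument.
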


\begin{proof}
  Using \cref{lem:timeevo-parent-h-cert}, the properties of $G$ from \cref{lem:time-evo-parent-h} imply that
  \begin{align}
    1 - \bra{\psi(t)} \rho \ket{\psi(t)}
    \le
    \tr(\rho G).
  \end{align}
  Inserting $G \ket{\psi(t)} = 0$, \cref{lem:approx-loc-obs} completes the proof. 
\end{proof}

The next \namecref{lem:time-evo-approx-cond} simplifies the premise of \cref{thm:time-evo-approx-parent-h} by eliminating $\Gamma$:

\begin{lem}
  \label{lem:time-evo-approx-cond}

  Let $f(D)$ a function with $1 \le f(D) \le \frac n\Gamma$.
  Assuming $n \gamma < \mci$, the inequality
  \begin{align}
    D + \ln(f(D)) \ge v \abs t + \ln\parena{\frac{2n}{\mci - n \gamma}} + c_{1}
    \label{eq:time-evo-approx-cond}
  \end{align}
  is sufficient for \eqref{eq:time-evo-approx-parent-h-D-scaling}.
\end{lem}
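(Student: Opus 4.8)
The plan is to reduce \eqref{eq:time-evo-approx-parent-h-D-scaling} to \eqref{eq:time-evo-approx-cond} by two elementary estimates, using only the upper bound $f(D)\le n/\Gamma$ together with the standing hypothesis $n\gamma<\mci$.

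First I would isolate $D$ in \eqref{eq:time-evo-approx-cond}: subtracting $\ln(f(D))$ from both sides gives
\[
D \ge v\abs t + \ln\parena{\frac{2n}{\mci - n\gamma}} + c_{1} - \ln(f(D)).
\]
Here the only property of $f$ that is used is $f(D)\le n/\Gamma$, which yields $-\ln(f(D))\ge \ln(\Gamma/n)$; absorbing this into the logarithm gives
\[
D \ge v\abs t + \ln\parena{\frac{2\Gamma}{\mci - n\gamma}} + c_{1},
\]
since $\frac{2n}{\mci-n\gamma}\cdot\frac{\Gamma}{n}=\frac{2\Gamma}{\mci-n\gamma}$. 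The constraint $1\le f(D)$ is not needed for this direction; note, however, that $1\le f(D)\le n/\Gamma$ forces $\Gamma\le n$, which is the only structural fact about the partition $\{Y_{i}\}$ I rely on.

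Second I would compare with the target \eqref{eq:time-evo-approx-parent-h-D-scaling}, whose denominator is $\mci-\Gamma\gamma$ rather than $\mci-n\gamma$. From $\Gamma\le n$ and $\gamma\ge 0$ one has $\Gamma\gamma\le n\gamma$, hence $0<\mci-n\gamma\le\mci-\Gamma\gamma$, the strict positivity on the left being exactly $n\gamma<\mci$. Since $x\mapsto\ln(2\Gamma/x)$ is decreasing on $(0,\infty)$, this gives $\ln(2\Gamma/(\mci-n\gamma))\ge\ln(2\Gamma/(\mci-\Gamma\gamma))$; concatenating with the previous display yields \eqref{eq:time-evo-approx-parent-h-D-scaling} exactly. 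As a by-product $\delta=(\mci-\Gamma\gamma)/2\ge(\mci-n\gamma)/2>0$, so the remaining hypotheses of \cref{thm:time-evo-approx-parent-h} are met.

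There is essentially no obstacle: the lemma is a bookkeeping step whose content is that trading the partition-dependent count $\Gamma$ for the partition-independent $n$ costs at most the additive term $\ln(f(D))\le\ln(n/\Gamma)$ on the left-hand side of the scaling condition. The only point requiring care is that the passage from $\mci-n\gamma$ to $\mci-\Gamma\gamma$ must weaken (enlarge) the right-hand side and keep the argument of the logarithm positive; both are guaranteed by $\Gamma\le n$ and $n\gamma<\mci$, so the implication runs in the desired direction.
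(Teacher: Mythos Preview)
Your proof is correct and follows essentially the same route as the paper: use $f(D)\le n/\Gamma$ to trade $n$ for $\Gamma$ in the numerator of the logarithm, then use $\Gamma\le n$ (hence $\mci-n\gamma\le\mci-\Gamma\gamma$) to pass to the target denominator. Your write-up is slightly more explicit about justifying $\Gamma\le n$ from the hypothesis $1\le f(D)\le n/\Gamma$, but the argument is identical.
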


\begin{proof}
  The premise implies $\mci - n \gamma \le \mci - \Gamma \gamma$ and
  \begin{align}
    D \ge v \abs t + \ln\parena{\frac{2\Gamma}{\mci - n \gamma}} + c_{1}
    \ge v \abs t + \ln\parena{\frac{2\Gamma}{\mci - \Gamma \gamma}} + c_{1},
  \end{align}
  which completes the proof. 
\end{proof}

The following \namecref{lem:time-evo-approx-finite-meas} bounds the measurement effort if $\tr(\rho G')$ is estimated from finitely many measurements:

\begin{lem}
  \label{lem:time-evo-approx-finite-meas}

  Let $R = \max_{i} \abs{\bar R_{i}}$ be the maximal number of sites on which any of the local terms of $G'$ from \cref{thm:time-evo-approx-parent-h} act.
  On each region $\bar R_{i}$, choose an \gls{ic} \gls{povm} (examples are provided in \cref{rem:ic-povm-examples}).
  Let ``one measurement'' refer to one outcome of one of the \glspl{povm}.
  The upper bound $\tr(\rho G')+\delta$ from \cref{eq:time-evo-approx-parent-h-fidelity-bound} can be estimated with standard error $\epsilon$ from $M = \bigo{\exp(R) n^{3} / \epsilon^{2}}$ such measurements. 
\end{lem}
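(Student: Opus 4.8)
The plan is to estimate each of the $\Gamma$ local expectation values $\tr(\rho\, g_i'(t))$ separately by a single-shot estimator built from the \gls{ic} \gls{povm} on $\bar R_i$, and then combine. Fix $i$. Since $g_i(0) = (\idm - \ketbra{\phi_{Y_i}}{\phi_{Y_i}}) \otimes \idm_{\Lambda \setminus Y_i}$ is an orthogonal projector, $\opnorm{g_i'(t)} = \opnorm{\tau_t^{H_{\bar R_i}}(g_i(0))} = \opnorm{g_i(0)} \le 1$, and $g_i'(t)$ acts non-trivially only on $\bar R_i$, whose Hilbert space has dimension $d_{\bar R_i} = \bigo{\exp(R)}$. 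Because the \gls{povm} elements $\{E_a^{(i)}\}_{a}$ on $\bar R_i$ are informationally complete they span the operator space on $\bar R_i$, so there are real coefficients $c_a^{(i)}$ with $g_i'(t) = \bigl(\sum_a c_a^{(i)} E_a^{(i)}\bigr) \otimes \idm_{\Lambda \setminus \bar R_i}$. Measuring this \gls{povm} on a copy of $\rho$ yields outcome $a$ with probability $p_a^{(i)} = \tr(\rho\, E_a^{(i)})$; assigning the value $c_a^{(i)}$ to outcome $a$ defines a random variable $X^{(i)}$ with $\mathbb{E}[X^{(i)}] = \sum_a c_a^{(i)} p_a^{(i)} = \tr(\rho\, g_i'(t))$, and the shift $\delta$ in \cref{eq:time-evo-approx-parent-h-fidelity-bound} is simply added at the end.

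The next step is to bound $\lvert X^{(i)}\rvert = \max_a \lvert c_a^{(i)}\rvert$. This is the only genuinely \gls{povm}-dependent step and I expect it to be the main obstacle, since in general the dual frame of an \gls{ic} \gls{povm} can be arbitrarily badly conditioned. I would invoke the specific well-conditioned constructions of \cref{rem:ic-povm-examples} (those coming from tight frames / complex projective $2$-designs), for which the reconstruction coefficients obey an explicit bound of the form $\lvert c_a^{(i)}\rvert \le c\, d_{\bar R_i}\, \opnorm{g_i'(t)}$. Combined with $\opnorm{g_i'(t)} \le 1$ and $d_{\bar R_i} = \bigo{\exp(R)}$ this gives $\lvert X^{(i)}\rvert = \bigo{\exp(R)}$, hence $\sigma_i^2 \vcentcolon= \operatorname{Var}[X^{(i)}] \le \mathbb{E}[(X^{(i)})^{2}] = \bigo{\exp(R)}$ as well (in the sense of the $\exp$-notation fixed in \cref{sec:timeevo-lr}, i.e.\ bounded by $c_1 \exp(c_2 R)$).

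Finally I would average. Take $N_i$ independent copies of $\rho$, measure the $\bar R_i$-\gls{povm} on each, and let $\hat E_i$ be the empirical mean of the resulting values of $X^{(i)}$; then $\mathbb{E}[\hat E_i] = \tr(\rho\, g_i'(t))$ and the standard error of $\hat E_i$ is $\sigma_i/\sqrt{N_i}$. The estimator of the quantity in \cref{eq:time-evo-approx-parent-h-fidelity-bound} is $\hat E = \delta + \sum_{i=1}^{\Gamma} \hat E_i$; since distinct regions are measured on disjoint sets of copies, the $\hat E_i$ are independent and $\operatorname{Var}[\hat E] = \sum_{i=1}^{\Gamma} \sigma_i^2/N_i$. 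Allocating a standard error $\epsilon/\Gamma$ to each region, i.e.\ choosing $N_i = \bigo{\exp(R)\,\Gamma^{2}/\epsilon^{2}}$, yields $\operatorname{Var}[\hat E] \le \Gamma (\epsilon/\Gamma)^{2} = \epsilon^{2}/\Gamma \le \epsilon^{2}$, so $\hat E$ has standard error at most $\epsilon$. The total number of \gls{povm} outcomes recorded is $M = \sum_{i=1}^{\Gamma} N_i = \bigo{\exp(R)\,\Gamma^{3}/\epsilon^{2}}$, and because $Y_1, \dots, Y_\Gamma$ is a partition of the $n$-site lattice $\Lambda$ we have $\Gamma \le n$, giving $M = \bigo{\exp(R)\, n^{3}/\epsilon^{2}}$ as claimed.
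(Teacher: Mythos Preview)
Your argument is correct and follows the same outline as the paper's proof: estimate each $\tr(\rho\, g'_i(t))$ independently, bound the single-shot standard deviation by $\bigo{\exp(R)}$, allocate standard error $\epsilon/\Gamma$ (the paper uses $\epsilon/n$) per term, and multiply out to get $M = \bigo{\exp(R)\, n^{3}/\epsilon^{2}}$. Your treatment is in fact more careful than the paper's, which simply asserts $c \le \exp(\tilde c R)$ for the per-term constant without discussing the dual-frame conditioning of the \gls{ic} \gls{povm} that you rightly flag.
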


\begin{proof}
  The individual $\tr(\rho g'_{i}(t))$ can be estimated independently by carrying out separate measurements for the estimation of each $\tr(\rho g'_{i}(t))$.
  By the central limit theorem, $M'$ measurements are sufficient to estimate a single $\tr(\rho g'_{i}(t))$ with standard error $\epsilon' = c / \sqrt{M'}$.
  Here, $c \le \exp(\tilde c R) = \bigo{\exp(R)}$ where $\tilde c$ is a constant.
  To achieve standard error $\epsilon$ for $\tr(\rho G')$, we set $\epsilon' = \epsilon/n$ and obtain $M' = c^{2} n^{2}/\epsilon^{2}$.
  As separate measurements for each $g'_{i}(t)$ were assumed, the total number of measurements is at most $M = nM' = c^{2} n^{3}/\epsilon^{2}$. 
\end{proof}

\begin{rem}[Discussion of \cref{thm:time-evo-approx-parent-h}]
  \label{rem:time-evo-approx-parent-h-discuss}

  \Cref{thm:time-evo-approx-parent-h} provides a means to verify that an unknown state $\rho$ is close to an ideal time-evolved state $\psi(t)$ with the expectation values of few observables.
  Specifically, the \namecref{thm:time-evo-approx-parent-h} warrants that the infidelity $I(\rho, \psi(t))$ is at most $\beta = \tr(\rho G') + \delta$ where $G'$ is a sum of observables which act non-trivially only on small parts of the full system.
  Furthermore, the \namecref{thm:time-evo-approx-parent-h} guarantees $\beta \le \mci$ and we can choose any desired $\mci > 0$.
  To simplify the discussion, we restrict to $\trnorm{\rho - \psi(t)} \le \gamma = \frac{\mci}{2n}$: For larger systems or smaller certified infidelities, the unknown state $\rho$ must be closer to the ideal state $\psi(t)$.

  Let $\bigdotcup_{i=1}^{\Gamma} Y_{i} = \Lambda$ be a partition of $\Lambda$ with $\diam(Y_{i}) \le r'$ for some $r' > 0$.
  Note that $Y_{i} \subset B^{c}_{r'}(\{y_{i}\})$ holds for all $y_{i} \in Y_{i}$ (\cref{lem:metric-set-diam-closed-ball}). 
  Let $r > 0$ and set $R_{i} = B^{o}_{r}(Y_{i})$, then $\bar R_{i} \subset B^{o}_{r+r'+a}(\{y_{i}\})$ (\cref{lem:finite-metric-space}).
  We assume $r' = \bigo{r}$ and obtain $\abs{\bar R_{i}} = \bigo{(r+r'+a)^{\eta}} = \bigo{r^{\eta}}$ (because $a$ is independent of $n$).
  Note that $d(Y_{i}, \Lambda \setminus R_{i}) \ge r$ (\cref{lem:finite-metric-space}), i.e.\ $D \ge (1-q)r/a$ where $q \in (0, 1)$ is a constant.

  A particularly simple partition which works for any lattice is $Y_{i} = \{i\}$ with $i \in \Lambda = \{1\dots n\}$, $\Gamma = n$, $r' = 0$ and $\abs{\bar R_{i}} = \bigo{D^{\eta}}$.
  We choose $D$ according to \cref{lem:time-evo-approx-cond} using $f(D) = n / \Gamma = 1$:
  \begin{align}
    \label{eq:cert-discuss-simple-D}
    D = v \abs t + \ln\parena{\frac{4n}{\mci}} + c_{1}.
  \end{align}
  The length scale $D$ grows linearly in time and logarithmically in $n / \mci$.  
  As discussed in \cref{lem:time-evo-approx-finite-meas}, the measurement effort to estimate $\tr(\rho G')$ with standard error $\epsilon$ is
  \begin{align}
    \label{eq:cert-discuss-meas}
    \bigo{n^{3}\exp(D^{\eta})/\epsilon^{2}} =
    \mathcal O\parena{n^{3} \exp\parena{ \sqba{ v\abs t + \ln\parena{\frac{4n}{\mci}} + c_{1} }^{\eta}} /\epsilon^{2}}.
  \end{align}
  The measurement effort grows exponentially with time but only quasipolynomially with $n$ and with $\frac 1\mci$.
  For one-dimensional systems, $\eta = 1$, this quasipolynomial scaling reduces to a polynomial scaling.

  Finally, we explore what can be gained by choosing a coarser partition $\Lambda = Y_{1} \dotcup \dots \dotcup Y_{\Gamma}$ of a cubic lattice $\Lambda = \{1\dots L\}^{\eta}$ with the metric $d(x, y) = \max_{i}\abs{x_{i} - y_{i}}$.\footnote{%
    Cubic lattices are also discussed in more detail in \cref{sec:timeevo-repr-cubic-lattice}. 
  }
  Let $\Omega = \floor{Da} \in \{1 \dots L\}$ and $B = \ceil{L / \Omega}$.
  The cubic lattice can be divided into $\Gamma = B^{\eta}$ smaller cubes of maximal diameter $r' = \Omega = \bigo{D}$ and we still have $\abs{\bar R_{i}} = \bigo{(r+r'+a)^{\eta}} = \bigo{D^{\eta}}$.
  We set $f(D) = (\floor{Da}/2)^{\eta}$ which satisfies $f(D) \le n/\Gamma$.\footnote{%
    $\Omega \le L$ implies $L / \ceil{\frac L\Omega} > L/(\frac L\Omega + 1) = \Omega/(1+\frac\Omega L) \ge \Omega/2$, i.e.\ $n/\Gamma = (L/\ceil{\frac L\Omega})^{\eta} > (\Omega/2)^{\eta} = f(D)$. 
  }
  Inserting $f(D)$ into \cref{eq:time-evo-approx-cond} provides
  \begin{align}
    \label{eq:cert-discuss-improved-D}
    D + \eta \ln(\floor{Da}) \ge v \abs t + \ln\parena{\frac{4n}{\mci}} + c_{1} + \ln(2).
  \end{align}
  We have increased the radius of $\bar R_{i}$ from $(D + 1)a$ to about $(2D + 1)a$.
  However, the last equation shows that it is then already sufficient if $D$ grows slightly less than linearly in the right hand side, i.e.\ slightly less than mentioned above, as described by the additional logarithmic term.
\end{rem}

\begin{rem}[Examples of \gls{ic} \glspl{povm}]
  \label{rem:ic-povm-examples}

  In this remark, we discuss measurements on a region $\bar R_{i}$ where $i$ is fixed. 
  Recall that a set of operators $\{M_{k} \colon k\}$ on the Hilbert space $\hilb{\bar R_{i}}$ is a \gls{povm} if each $M_{k}$ is positive semidefinite and $\sum_{k} M_{k} = \idm$ \parencite[e.g.]{Nielsen2007}.
  The \gls{povm} is \gls{ic} if the operators $M_{k}$ span $\hilb{\bar R_{i}}$.

  Measurement outcomes of an \gls{ic} \gls{povm} on $R_{i}$ can be obtained in several different ways in an experiment.
  For example, a measurement of a tensor product observable $A = A_{1} \otimes \dots \otimes A_{\abs{\bar R_{i}}}$ on $\hilb{\bar R_{i}}$ returns one of the eigenvalues of $A$ as measurement outcome.
  Access to measurement outcomes of a set of observables which spans $\hilb{\bar R_{i}}$ allows sampling outcomes of an \gls{ic} \gls{povm} on $\bar R_{i}$.
  Alternatively, one can measure the single-site observables $A_{j}$ ($j \in \{1 \dots \abs{\bar R_{i}}\}$) in any order or simultaneously.
  Here, the measurement outcome is given by a vector $(\lambda_{1}, \dots, \lambda_{\abs{\bar R_{}}})$ where $\lambda_{j}$ is an eigenvalue of $A_{j}$.
  Access to this type of measurement outcomes of a set of observables which spans $\hilb{\bar R_{i}}$ provides another way to sample outcomes of an \gls{ic} \gls{povm} on $\bar R_{i}$.
\end{rem}

\Cref{lem:time-evo-parent-h} provides a parent Hamiltonian $G$ of the time-evolved state $\ket{\psi(t)}$ at a fixed time $t$.
\Cref{thm:time-evo-approx-parent-h} provides an upper bound on the distance between an unknown state and the time-evolved state in terms of $G'$ which is an approximation of $G$.
The next Lemma shows that $G'$ is the parent Hamiltonian of a state $\ket{\psi'}$ which is approximately equal to the time-evolved state.
As a consequence, an upper bound on the distance between an unknown state and $\ket{\psi'}$ can also be obtained.

\begin{lem}
  \label{lem:timeevo-approx-parent-h-cert}
  
  In the setting of \cref{thm:time-evo-approx-parent-h}, let
  \begin{align}
    \delta'
    &= \frac{\mci - \gamma\Gamma}{2(1+\mci)},
      &
    0 &< \delta' < \frac12.
  \end{align}
  Let the length $D$ be at least
  \begin{align}
    D \ge v \abs t + \ln\parena{ \frac{\Gamma}{\delta'} } + c_{1}.
  \end{align}
  The operator $G'$ has a non-degenerate ground state and the difference between its two smallest eigenvalues is at least $E_{1}' - E_{0}' \ge 1 - 2\delta'$.
  The ground state $\ket{\psi'}$ of $G'$ satisfies
  \begin{align}
    \abs{\braket{\psi(t)}{\psi'}}
    &\ge
      1 - \frac{\delta'}{1 - \delta'}
      \ge
      1 - 2\delta',
    &
    \trnorm{\psi(t) - \psi'}
    &\le 2 \sqrt{\frac{2\delta'}{1 - \delta'}} \le 4 \sqrt{\delta'}.
  \end{align}
  where $\psi(t) = \ketbra{\psi(t)}{\psi(t)}$ and $\psi' = \ketbra{\psi'}{\psi'}$. 
  For an arbitrary state $\rho$, the following inequality holds:
  \begin{align}
    1 - \bra{\psi'} \rho \ket{\psi'}
    \le
    \frac{\tr(\rho G') + \delta'}{1 - 2\delta'}
    \le
    \mci.
  \end{align}
\end{lem}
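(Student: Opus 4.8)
The plan is to transfer the spectral structure of the exact parent Hamiltonian $G$ of $\ket{\psi(t)}$ (\cref{lem:time-evo-parent-h}) to the approximate operator $G'$ by way of the operator-norm estimate $\opnorm{G-G'}\le\delta'$, and then to apply the fidelity-witness inequality \cref{lem:timeevo-parent-h-cert} to $G'$ itself. First I would establish $\opnorm{G-G'}\le\delta'$, which is the bookkeeping already done in the proof of \cref{lem:approx-loc-obs}: each $g_{i}(0)=(\idm-\ketbra{\phi_{Y_{i}}}{\phi_{Y_{i}}})\otimes\idm_{\Lambda\setminus Y_{i}}$ is an orthogonal projector, so $\opnorm{g_{i}(0)}\le1$, and $d(Y_{i},\Lambda\setminus R_{i})/a\ge d_{a}$ with $\ceil{d_{a}}$ large enough; hence the hypothesis $D\ge v\abs{t}+\ln(\Gamma/\delta')+c_{1}$ meets the premise of \cref{cor:timeevo-quasilocality-D-scaling} for each $i$ with target error $\epsilon=\delta'/\Gamma$, giving $\opnorm{g_{i}(t)-g_{i}'(t)}\le\delta'/\Gamma$ and, by the triangle inequality, $\opnorm{G-G'}\le\delta'$. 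I would also record that each $g_{i}'(t)$ is a unitary conjugate of a projector, so $0\le g_{i}'(t)\le\idm$ and $0\le G'\le\Gamma\idm$; and that, by \cref{lem:time-evo-parent-h}, $G\ge0$, $\ket{\psi(t)}$ spans $\ker G$, $G\ge\idm$ on the orthogonal complement of $\ket{\psi(t)}$, and $\opnorm{G}=\Gamma$.

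Next I would pin down the spectrum of $G'$ and the overlap of its ground state with $\ket{\psi(t)}$. Let $E_{0}'\le E_{1}'\le\cdots$ be the eigenvalues of $G'$. The variational principle gives $0\le E_{0}'\le\bra{\psi(t)}G'\ket{\psi(t)}\le\bra{\psi(t)}G\ket{\psi(t)}+\opnorm{G-G'}=\delta'$, while Weyl's eigenvalue perturbation bound \parencite{Bhatia1997} gives $E_{1}'\ge1-\delta'$; hence $E_{1}'-E_{0}'\ge1-2\delta'>0$, so the ground state $\ket{\psi'}$ of $G'$ is non-degenerate. Writing $\ket{\psi'}=a\ket{\psi(t)}+b\ket{\chi}$ with $\braket{\chi}{\chi}=1$ and $\braket{\psi(t)}{\chi}=0$, the facts $G\ket{\psi(t)}=0$ and $G\ge\idm$ on $\ket{\psi(t)}^{\perp}$ yield $\bra{\psi'}G\ket{\psi'}=\abs{b}^{2}\bra{\chi}G\ket{\chi}\ge\abs{b}^{2}$, whereas $\bra{\psi'}G\ket{\psi'}=E_{0}'+\bra{\psi'}(G-G')\ket{\psi'}\le2\delta'$; therefore $1-\abs{\braket{\psi(t)}{\psi'}}^{2}=\abs{b}^{2}\le2\delta'$. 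The stated bounds then follow from elementary inequalities valid for $0<\delta'<\frac12$: $\abs{\braket{\psi(t)}{\psi'}}\ge\sqrt{1-2\delta'}\ge1-\frac{\delta'}{1-\delta'}\ge1-2\delta'$ (the middle step reducing to $\delta'^{2}\ge0$), and, using $\trnorm{\psi(t)-\psi'}=2\sqrt{1-\abs{\braket{\psi(t)}{\psi'}}^{2}}$ for pure states, $\trnorm{\psi(t)-\psi'}\le2\sqrt{2\delta'}\le2\sqrt{\frac{2\delta'}{1-\delta'}}\le4\sqrt{\delta'}$.

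Finally I would apply \cref{lem:timeevo-parent-h-cert} to $G'$, with ground state $\ket{\psi'}$ and two smallest eigenvalues $E_{0}'<E_{1}'$, obtaining $1-\bra{\psi'}\rho\ket{\psi'}\le(\tr(\rho G')-E_{0}')/(E_{1}'-E_{0}')$. Since $\tr(\rho G')\ge E_{0}'\ge0$ and $E_{1}'-E_{0}'\ge1-2\delta'>0$, this is at most $\tr(\rho G')/(1-2\delta')\le(\tr(\rho G')+\delta')/(1-2\delta')$, which is the first claimed inequality. For the second, I bound $\tr(\rho G')\le\tr(\rho G)+\opnorm{G-G'}\le\trnorm{\rho-\psi(t)}\opnorm{G}+\delta'\le\Gamma\gamma+\delta'$ (using $\tr(\psi(t)G)=0$ and $\opnorm{G}=\Gamma$), so the expression is at most $(\Gamma\gamma+2\delta')/(1-2\delta')$; this equals $\mci$ because $\delta'=\frac{\mci-\Gamma\gamma}{2(1+\mci)}$ is chosen precisely so that $\Gamma\gamma+2\delta'=\mci(1-2\delta')$.

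The argument is essentially routine once $\opnorm{G-G'}\le\delta'$ is in hand. The only points I expect to require care are checking that the $D$-hypothesis really does imply $\opnorm{G-G'}\le\delta'$ (the logarithmic bookkeeping and the minimum over $i$ in the definition of $d_{a}$, handled exactly as in \cref{lem:approx-loc-obs}) and the algebraic identity that collapses the final bound to $\mci$; I do not anticipate a genuine obstacle.
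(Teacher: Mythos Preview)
Your proof is correct and follows the same overall route as the paper: establish $\opnorm{G-G'}\le\delta'$ via \cref{cor:timeevo-quasilocality-D-scaling}, use Weyl's inequality to locate the two lowest eigenvalues of $G'$ and secure the gap $E_{1}'-E_{0}'\ge1-2\delta'$, apply \cref{lem:timeevo-parent-h-cert} to $G'$, and close with the algebraic identity $\Gamma\gamma+2\delta'=\mci(1-2\delta')$. The one technical difference is in the eigenvector bound: the paper invokes a Davis--Kahan type result \parencite[Theorem~VII.3.1]{Bhatia1997} on the spectral projectors of $G$ and $G'$ to obtain $1-\abs{\braket{\psi(t)}{\psi'}}\le\delta'/(1-\delta')$ directly, whereas you argue variationally that $\bra{\psi'}G\ket{\psi'}\le E_{0}'+\opnorm{G-G'}\le2\delta'$ and read off $1-\abs{\braket{\psi(t)}{\psi'}}^{2}\le2\delta'$ from $G\ge\idm$ on $\ket{\psi(t)}^{\perp}$. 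Your route is more self-contained (no external eigenvector-perturbation theorem) and in fact yields the slightly sharper intermediate estimate $\trnorm{\psi(t)-\psi'}\le2\sqrt{2\delta'}$, which you then relax to match the stated chain; the paper's citation buys the bound on $1-\abs{\braket{\psi(t)}{\psi'}}$ in one line without the decomposition $\ket{\psi'}=a\ket{\psi(t)}+b\ket{\chi}$.
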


\begin{proof}
  Set $\epsilon = \frac{\delta'}{\Gamma}$.
  Applying \cref{cor:timeevo-quasilocality-D-scaling} provides
  \begin{align}
    \opnorm{G - G'} \le
    \sum_{i=1}^{\Gamma}
    \opnorm{g_{i}(t) - g_{i}'(t)}
    \le
    \Gamma\epsilon =
    \delta'.
  \end{align}
  All eigenvalues change by at most $\opnorm{G - G'} \le \delta'$ \parencite[Theorem~VI.2.1]{Bhatia1997}.
  Accordingly, the two smallest eigenvalues of $G'$ satisfy $E_{0}' \in [-\delta', \delta']$, $E_{1}' \in [1-\delta', 1+\delta']$ and $\delta' < \frac12$ ensures that the ground state remains non-degenerate.
  In addition, we have $E_{1}' - E_{0}' \ge 1 - 2\delta$.
  \Cref{lem:timeevo-parent-h-cert} provides
  \begin{align}
    1 - \bra{\psi'}\rho\ket{\psi'} \le \frac{\tr(\rho G') + \delta'}{1 - 2\delta'}.
  \end{align}
  We bound (cf.\ proof of \cref{lem:approx-loc-obs})
  \begin{align}
    \tr(\rho G') \le \tr(\rho G) + \delta'
    \le \trnorm{\rho - \psi(t)} \opnorm{G} + \delta'
    \le
    \gamma \Gamma + \delta'.
  \end{align}
  Combining the last two equations provides
  \begin{align}
    1 - \bra{\psi'}\rho\ket{\psi'}
    \le \frac{\tr(\rho G') + \delta'}{1 - 2\delta'}
    \le \frac{\gamma\Gamma + 2\delta'}{1 - 2\delta'}
    =
    \frac{  (1 + \mci) \gamma\Gamma + \mci - \gamma\Gamma  }
    {  1 + \mci - \mci + \gamma\Gamma  }
    = \mci.
  \end{align}

  To quantify the change in the ground state, we use \parencite[Theorem~VII.3.1]{Bhatia1997}
  \begin{align}
    \opnorm{EF} \le \frac{1}{\Delta} \opnorm{G - G'}
  \end{align}
  where $E = P_{G}(S_{1})$ and $F = P_{G'}(S_{2})$ are projectors onto eigenspaces of $G$ and $G'$ with eigenvalues from $S_{1}$ and $S_{2}$.
  The sets $S_{1}$ and $S_{2}$ must be separated by an annulus or infinite strip of width $\Delta$ in the complex plane.
  We set $S_{1} = [1, \Gamma]$, $S_{2} = [-\delta', \delta']$ and $\Delta = 1 - \delta'$. We denote by $\ket{\psi} = \ket{\psi(t)}$ and $\ket{\psi'}$ the (normalized) ground states of $G$ and $G'$.
  Then $E = \idm - \ketbra\psi\psi$, $F = \ketbra{\psi'}{\psi'}$ and
  \begin{align}
    1 - \abs{\braket{\psi}{\psi'}} 
    &=
      \opnormd{\ketbra{\psi'}{\psi'}}
      - \abs{\braket{\psi}{\psi'}} \opnormd{\ketbra{\psi}{\psi'}}
      \nonumber
    \\
    &\le
      \opnormd{\ketbra{\psi'}{\psi'} - \ket\psi \braket{\psi}{\psi'} \bra{\psi'}}
      \nonumber
    \\ 
    &=
      \opnorm{EF}
      \le
      \frac{\delta'}{1 - \delta'}
      \le 2\delta'
  \end{align}
  where the very last inequality holds for $\delta' \le \frac12$.
  The change in the ground state is at most $1 - \abs{\braket{\psi}{\psi'}} \le \delta' / (1 - \delta') \le 1$ (using $\delta' \le \frac12$). This implies (Lemma~\ref{lem:local-te-cert-fid-tracedist})
  \begin{align}
  \norm{\psi - \psi'}_{1} \le
  2 \sqrt{\frac{2\delta'}{1 - \delta'}} \le 4 \sqrt{\delta'}
  \label{eq:approx-time-evo-parent-H-gs-change}
  \end{align}
  and completes the proof.  
\end{proof}

\begin{rem}
  The certificates provided by \cref{thm:time-evo-approx-parent-h} and \cref{lem:timeevo-approx-parent-h-cert} differ in that the former certifies the fidelity with the time-evolved state $\ket\psi$ while the latter certifies the fidelity with its approximation $\ket{\psi'}$.
  The value of the infidelity upper bound provided by \cref{lem:timeevo-approx-parent-h-cert} is slightly larger than that provided by \cref{thm:time-evo-approx-parent-h}, but in the limit $\mci \to 0$ both results have the same scaling including all constants.
\end{rem}

\section{Efficient representation of time evolution%
  \label{sec:timeevo-repr}}

In this section, we construct a unitary circuit which approximates the unitary evolution $U_{t}$ induced by a local Hamiltonian $H(t)$ on $n$ quantum systems; the circuit approximates $U_{t}$ up to operator norm distance $\epsilon$.
For times poly-logarithmic in $n$, the circuit is seen to admit an efficient \gls{peps} representation; hence, the circuit shows that $U_{t}$ can be approximated by an efficient \gls{peps}.

Note that the following line of argument also provides an efficient \gls{peps} representation of $U_{t}$. 
Time evolution under an arbitrary few-body Hamiltonian can be efficiently simulated with a unitary quantum circuit and the Trotter decomposition \parencite[Chapter~4.7.2]{Nielsen2007}.
This unitary circuit is efficiently encoded as a \gls{mbqc}.
In turn, a \gls{peps} of the smallest non-trivial bond dimension two is sufficient to encode an arbitrary \gls{mbqc} efficiently \parencite{Schuch2007}. 
The \gls{peps} representation from this construction is efficient but it is supported on a larger lattice than the original Hamiltonian:
For example, the lattice grows as $\bigo{n t^{2}/\epsilon}$ if the the first-order Trotter decomposition is used (cf.~\cref{sec:sec:timeevo-repr-trotter}).
Application of the Trotter formula also leads to an efficient representation of $U_{t}$ as a \gls{tns} but the lattice of this construction grows in the same way \parencite{Huebener2010}.
Here, we construct an efficient \gls{peps} representation of $U_{t}$ which lives on the same lattice as the Hamiltonian and which has another advantageous property:
Computing the expectation value of a local observable in an arbitrary \gls{peps} is assumed to be impossible in polynomial time \parencite{Schuch2007} but the unitary circuit from which we construct our \gls{peps} representation always enables efficient computation of such local expectation values.
This property is shared e.g.\ with the class of so-called \glspl{bsgs}, a subclass of all \gls{peps}, where a state is also represented by a sequence of local unitary operations (albeit aranged differently; \cite{Banuls2008}).

The limitations of the first-order Trotter decomposition become apparent already in one spatial dimension as discussed in \cref{sec:sec:timeevo-repr-trotter}. 
\Cref{sec:timeevo-peps} defines \glspl{peps} on an arbitrary graph and determines an upper bound for the \gls{peps} bond dimension of a unitary circuit based on an argument used before for \glspl{mps} \parencite{Jozsa2006}.
\Cref{sec:timeevo-repr-arb-lattice} presents an efficient representation of $U_{t}$ for an arbitrary graph.
This representation is non-optimal in the sense that it evolves local observables into observables which seemingly act non-trivially on a region whose diameter grows polynomially with time.
Lieb--Robinson bounds already tell us that this diameter should grow only linearly with time (\cref{sec:timeevo-lr}).
An improved representation which fulfills this property is presented in \cref{sec:timeevo-repr-cubic-lattice} for a hypercubic lattice of spatial dimension $\eta \ge 1$. 

\subsection[Trotter decomposition]{Properties of the Trotter decomposition%
  \label{sec:sec:timeevo-repr-trotter}}

The Trotter decomposition is the key ingredient of many numerical methods for the computation of $U
_{t}$ with \glspl{mps} or \glspl{peps}.\footnote{%
  E.g.\ \textcite{Vidal2004,Murg2007,Verstraete2008} and references in \textcite{Schollwoeck2011}.
}
As discussed above, it also enables various efficient representations of $U_{t}$.
The following \namecref{lem:trotter-scaling} presents the well-known first-order Trotter decomposition:

\begin{lem}[Trotter decomposition in 1D]
  \label{lem:trotter-scaling}
  
  Let $H$ a time-independent\footnote{%
    The time-dependent case is discussed e.g.\ by \textcite{Poulin2011a}.
  }
  nearest neighbour Hamiltonian on a linear chain of $n$ spins, $H = \sum_{j=1}^{n-1} h_{j,j+1}$.
  Let the operator norm of the local terms be uniformly bounded, i.e.\ $\opnorm{h_{j,j+1}} \le J$ ($j \in \{1 \dots n-1\}$). 
  Take $H_{1}$ and $H_{2}$ to be the sum of the terms with even and odd $j$, respectively:
  Set $H_{1} = \sum_{j=1}^{\floor{(n-1)/2}} h_{2j,2j+1}$ and $H_{2} = \sum_{j=1}^{\floor{n/2}} h_{2j-1,2j}$.
  The time evolution induced by $H$ is given by $U_{t} = \ee^{-\ii H t}$ and its Trotter approximation is given by $U^\text{(T)}_{t} = \parenc{ \ee^{-\ii H_{1} \tau}\ee^{-\ii H_{2} \tau} }^{L}$ where $L$ is a positive integer and $\tau = t / L$. 
  The approximation error is at most $\epsilon$, i.e.
  \begin{align}
    \opnorm{ U_{t} - U^\text{(T)}_{t} } \le \epsilon
  \end{align}
  if $L$ is at least $L \ge \tilde c t^{2} n / (2\epsilon)$ where $\tilde c > 0$ is some constant which depends only on $J$. 
\end{lem}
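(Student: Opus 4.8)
The plan is to bound the Trotter error for a single time step of length $\tau$ and then accumulate the $L$ steps. First I would split the one-step error: write $U_{\tau} = \ee^{-\ii H \tau}$ with $H = H_{1} + H_{2}$, and compare it to $V_{\tau} = \ee^{-\ii H_{1}\tau}\ee^{-\ii H_{2}\tau}$. A standard estimate (e.g.\ via the integral remainder of the Duhamel/Taylor expansion, or the well-known bound on the Baker--Campbell--Hausdorff remainder) gives $\opnorm{U_{\tau} - V_{\tau}} \le \frac12 \opnorm{[H_{1}, H_{2}]} \, \tau^{2} + \bigo{\tau^{3}}$, and for our purposes it suffices to absorb everything into $\opnorm{U_{\tau} - V_{\tau}} \le c' \opnorm{[H_{1}, H_{2}]}\, \tau^{2}$ valid for $\tau$ below a threshold depending only on $J$; I would state this as the one-step lemma and cite \parencite[Chapter~4.7.2]{Nielsen2007} or prove it in one line from $\frac{d}{ds}(\ee^{\ii s H}\ee^{-\ii s H_{1}}\ee^{-\ii s H_{2}})$.

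Next I would bound $\opnorm{[H_{1}, H_{2}]}$. Since $H_{1}$ and $H_{2}$ are each sums of local terms $h_{j,j+1}$ with $\opnorm{h_{j,j+1}} \le J$, and each term of $H_{1}$ fails to commute only with its (at most two) neighbouring terms in $H_{2}$, the commutator is a sum of $\bigo{n}$ terms each of norm at most $2J^{2}$; hence $\opnorm{[H_{1}, H_{2}]} \le c'' J^{2} n$ for an absolute constant $c''$. Combining with the one-step lemma, $\opnorm{U_{\tau} - V_{\tau}} \le c' c'' J^{2} n\, \tau^{2}$.

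Then I would use a telescoping argument to pass from one step to $L$ steps: writing $U_{t} = (U_{\tau})^{L}$ and $U^{\text{(T)}}_{t} = (V_{\tau})^{L}$, the identity $A^{L} - B^{L} = \sum_{k=0}^{L-1} A^{k}(A - B)B^{L-1-k}$ together with unitarity ($\opnorm{A} = \opnorm{B} = 1$) gives $\opnorm{U_{t} - U^{\text{(T)}}_{t}} \le L \opnorm{U_{\tau} - V_{\tau}} \le L \cdot c' c'' J^{2} n\, \tau^{2} = c' c'' J^{2} n\, t^{2} / L$, using $\tau = t/L$. Setting this $\le \epsilon$ yields $L \ge c' c'' J^{2} n t^{2}/\epsilon$, which is of the claimed form $L \ge \tilde c\, t^{2} n/(2\epsilon)$ with $\tilde c$ depending only on $J$. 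I would also note that the threshold condition on $\tau$ required by the one-step lemma is automatically met once $L$ is this large (for $n \ge 1$), so no extra hypothesis is needed.

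The main obstacle is getting the one-step error bound with an honest, $n$-independent constant and in a form that is uniform in $\tau$ rather than merely asymptotic as $\tau \to 0$; the cleanest route is the Duhamel identity $U_{\tau} - V_{\tau} = -\ii\int_{0}^{\tau} \ee^{-\ii H(\tau - s)}\,[H_{2}, \ee^{-\ii H_{1} s}]\, \ee^{-\ii H_{2} s}\, ds$ followed by a second Duhamel expansion of the inner commutator, which produces exactly a factor $\opnorm{[H_{1},H_{2}]}$ times $\tau^{2}/2$ with no hidden $\tau$-dependence, so in fact the threshold can be dispensed with entirely. Everything else is bookkeeping.
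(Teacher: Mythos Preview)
Your proposal is correct and follows essentially the same route as the paper: the paper cites the clean one-step bound $\opnorm{\ee^{-\ii H\tau} - \ee^{-\ii H_{1}\tau}\ee^{-\ii H_{2}\tau}} \le \tfrac{\tau^{2}}{2}\opnorm{[H_{1},H_{2}]}$ (exactly what your final Duhamel remark recovers, so the threshold discussion is indeed unnecessary), bounds $\opnorm{[H_{1},H_{2}]} \le \tilde c\, n$ by the same locality argument you give, and accumulates the $L$ steps via the triangle inequality for products of unitaries---your telescoping identity is precisely that lemma.
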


\begin{proof}
  For any division of $H$ into $H = H_{1} + H_{2}$ and any $\tau \ge 0$, the following inequality holds:\footnote{%
  This is Eq.~(A.15a) of \textcite{DeRaedt1987}. See also \textcite{Suzuki1985}. 
}
  \begin{align}
  \opnormc{ \ee^{-\ii H \tau} - \ee^{-\ii H_{1} \tau}\ee^{-\ii H_{2} \tau} }
  \le
  \frac{\tau^{2}}2 \opnormc{[H_{1}, H_{2}]}.
  \end{align}
  Using the triangle inequality (as in \cref{lem:unit-seq-triangle-unit-inv}) and $\tau = t / L$, we obtain
  \begin{align}
  \opnormc{ \ee^{-\ii H t} - \parenc{ \ee^{-\ii H_{1} \tau}\ee^{-\ii H_{2} \tau} }^{L} }
  \le
  \frac{L \tau^{2}}2 \opnormc{[H_{1}, H_{2}]}.
  \end{align}
  It is simple to show that $\opnorm{[H_{1}, H_{2}]} \le \tilde c n$ holds for some constant $\tilde c > 0$ which depends only on $J$.
  This provides
  \begin{align}
    \opnorm{ U_{t} - U^\text{(T)}_{t} } \le \frac{\tilde c t^{2} n}{2L}
    \label{eq:trotter-scaling}
  \end{align}
  which completes the proof.
\end{proof}

\begin{figure}[t]
  \centering
  \includegraphics[scale=.6]{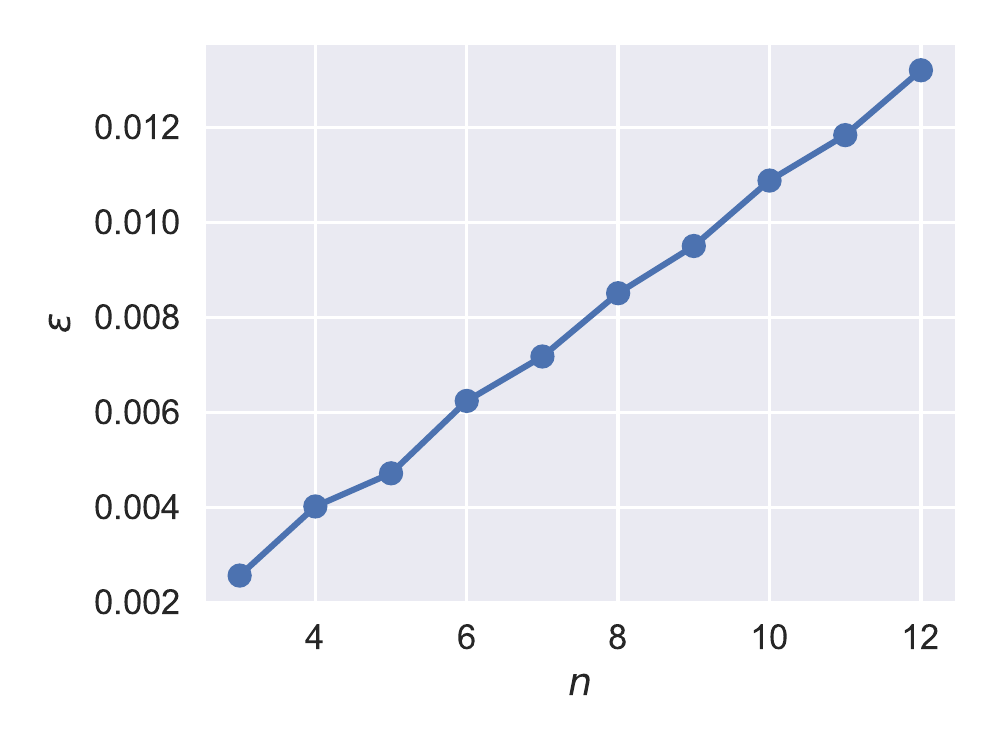}
  \caption{
    Operator norm error $\epsilon = \opnorm{ U_{t} - U^\text{(T)}_{t} }$ of the first order Trotter decomposition as a function of the number of spins $n$.
    The figure shows data for $L = 200$ Trotter steps and the 1D nearest-neighbour isotropic Heisenberg Hamiltonian at $t = \frac{11}{9J}$ where $J = \norm{h_{i,i+1}}$ is the operator norm of a coupling term.}
  \label{fig:trotter-first-order-error}
\end{figure}

\Cref{fig:trotter-first-order-error} shows the approximation error $\epsilon = \opnorm{ U_{t} - U^\text{(T)}_{t} }$ of a particular Hamiltonian as function of $n$ at fixed $t$ and $L$.
The approximation error appears to grow linearly with $n$ and this suggests that the bound \eqref{eq:trotter-scaling} is optimal in $n$ up to constants;
in this case, the scaling $L \ge \tilde c t^{2}n / \epsilon$ is optimal in $n$ up to constants as well.

\Cref{lem:trotter-scaling} provides an approximate decomposition of $U_{t}$ into $\bigo{n^{2} t^{2}/\epsilon}$ two-body unitaries and it has been recognized before that this constitues an approximate, efficient decomposition of $U_{t}$ by a tensor network on a two-dimensional lattice with $\bigo{n^{2}t^{2}/\epsilon}$ sites \parencite{Huebener2010}.
However, the lattice of the Hamiltonian is only one-dimensional.
The bond dimension of a one-dimensional \gls{mpo} representation of the circuit $U^\text{(T)}_{t}$ can grow exponentially with $n$.\footnote{%
  See \textcite{Osborne2006}. This can be seen by applying the counting argument by \textcite{Jozsa2006}, which is also stated below in \cref{lem:peps-bond-dim-upper-bound} for a more general \gls{peps}. 
}
It has been shown that $U_{t}$ indeed admits a smaller bond dimension \parencite{Osborne2006} but this is not visible from the circuit $U^\text{(T)}_{t}$ provided by the Trotter decomposition and needs additional arguments based on Lieb--Robinson bounds. 
Since the first-order Trotter decomposition does not provide an efficient \gls{mpo} representation of $U_{t}$ with $H$ on a one-dimensional lattice, it does not provide an efficient \gls{peps} representation on the same lattice as the Hamiltonian in higher dimensions either. 

Another important property of representations of the time evolution $U_{t}$ concerns the growth of the region on which a time-evolved, initially local observable appears to act non-trivially.
If an initially local observable $A$ is evolved with the Trotter decomposition $U^\text{(T)}_{t}$ into $(U^\text{(T)}_{t})^{\adjm} A U^\text{(T)}_{t}$, it appears to act non-trivially on a region of diameter $\bigo{L} = \bigo{n t^{2}/\epsilon}$.
In the following \cref{sec:timeevo-repr-arb-lattice,sec:timeevo-repr-cubic-lattice}, we construct circuits under which this diameter grows only poly-logarithmically with $n/\epsilon$. 
This is an improvement over the Trotter circuit but it does not reach the ideal case from \cref{cor:timeevo-quasilocality-D-scaling} (no growth with $n$).

\subsection{Projected entangled pair states (PEPSs)
  \label{sec:timeevo-peps}}

In the following, we define the \gls{peps} representation of a quantum state of $n < \infty$ quantum systems.
In order to introduce the \gls{peps} representation, we identify the pure quantum state $\ket\psi$ on $n$ systems with a tensor $t$ with $n$ indices. 
Let $\Lambda = \{1 \dots n\}$ be the set of all systems.\footnote{%
  The systems need not be in a linear chain but we assign the names $1$, \dots, $n$ to the sites of the system in an arbitrary order.
}
Let $d(x)$ denote the dimension of system $x \in \Lambda$.
Let $\ket{\phi^{(x)}_{i}}$ ($i \in \{1 \dots d(x)\}$) denote an orthonormal basis of system $x$.
The components of a pure state $\ket\psi$ on the $n$ systems are given by
\begin{align}
  t_{i_{1}\dots i_{n}} &= \braket{\phi^{(1)}_{i_{1}} \dots \phi^{(n)}_{i_{n}}}{\psi},
  &
  i_{x} &\in \{1 \dots d(x)\}.
  \label{eq:state-as-tensor}
\end{align}
The last equation shows that the pure state on $n$ systems corresponds to a tensor $t$ with $n$ indices of shape $d(1) \times \dots \times d(n)$.
A \gls{peps} representation of $\ket\psi$ or $t$ is defined in terms of a graph $(\Lambda, E)$ whose vertices correspond to sites $x \in \Lambda$ (\cref{fig:lattice-and-peps-tn} left).
Whenever we combine \gls{peps} representations and the Lieb--Robinson bounds from \cref{sec:timeevo-lr}, it is mandatory that the metric $d(x, y)$ on $\Lambda$ is the graph metric of the graph $(\Lambda, E)$ which defines the \gls{peps} representation.
The graph $(\Lambda, E)$ is assumed to be connected and simple, i.e.\ each edge $e \in E$ connects exactly two distinct sites.
The set of neighbours of $x \in \Lambda$ is given by $N(x) = \{y \in \Lambda\colon \{x, y\} \in E\}$ and the number of neighbours (degree) is given by $z_{x} = \abs{N(x)}$.
We denote the edges involving $x \in \Lambda$ in some arbitrary, fixed order by $\{n^{(x)}_{1} \dots n^{(x)}_{z_{x}}\}$;
i.e.\ $n^{(x)}_{k} = \{x, y\} \in E$ for one $y \in N(x)$. 
For each edge $e \in E$, choose a positive integer $D(e)$, called the bond dimension.
The maximal local and bond dimension are denoted by $d = \max_{x \in \Lambda} d(x)$ and $D = \max_{e\in E} D(e)$.
For $x \in \Lambda$, let $G_{x}$ a tensor of size $d(x) \times D(n^{(x)}_{1}) \times \dots \times D(n^{(x)}_{z_{x}})$.
Let $\{e_{1} \dots e_{\abs E}\} = E$ an enumeration of all the edges.
A \gls{peps} representation of the tensor $t$ is given by (\cref{fig:lattice-and-peps-tn} middle)
\begin{align}
  t_{i_{1}\dots i_{n}} =
  \sum_{b(e_{1})=1}^{D(e_{1})} \dots \sum_{b(e_{\abs E})=1}^{D(e_{\abs E})}
  \prod_{x=1}^{n}
  G_{x}\sqbd{i_{x}, b\parenc{n^{(x)}_{1}}, \dots, b\parenc{n^{(x)}_{z_{x}}}}
  \label{eq:peps-repr}
\end{align}%
\begin{figure}[t]
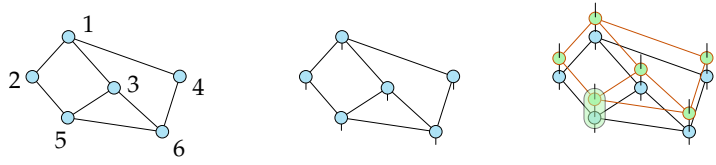

  \centering
  \isvgc{graphics/peps_graph_tensor_network}
  \caption{Left:
    Graph $(\Lambda, E)$ where the vertices correspond to lattice sites $x \in \Lambda$.
    Lattice sites have been named $\{1\dots6\} = \Lambda$.
    Middle: Graphical representation of the tensor network which constitutes a \gls{peps} representation of a quantum state on the lattice.
    Circles correspond to tensors, lines correspond to indices and lines which connect two circles indicate which indices are contracted (\cref{eq:peps-repr}).
    Right: The shaded box exemplifies how a \gls{pepo} representation of the operator product is obtained (\cref{eq:peps-product-local-tensors}). 
  }
  \label{fig:lattice-and-peps-tn}%
\end{figure}%
A \gls{peps} representation of a pure quantum state is given by the combination of \cref{eq:state-as-tensor,eq:peps-repr}. 
Any tensor or quantum state can be represented as \gls{peps} if the bond dimensions $D(e)$ are made sufficiently large (cf.\ \cref{lem:peps-bond-dim-upper-bound} below).

\glsreset{pepo}

A \gls{pepo} representation of a linear operator $G$ on $n$ quantum systems is given by a \gls{peps} representation of the following tensor:
\begin{align}
  t_{(i_{1},j_{1}) \dots (i_{n},j_{n})} = \bra{\phi^{(1)}_{i_{1}} \dots \phi^{(n)}_{i_{n}}} G \ket{\phi^{(1)}_{j_{1}} \dots \phi^{(n)}_{j_{n}}}
\end{align}
Here, $t$ is considered as tensor with $n$ indices and size $[d(1)]^{2} \times \dots \times [d(n)]^{2}$.
Suppose that two linear operators $G$ and $H$ have \gls{pepo} representations given by tensors $G_{x}$ and $H_{x}$ with bond dimensions $D_{G}(e)$ and $D_{H}(e)$.
The following formula provides the tensors of a \gls{pepo} representation of the operator product $F = GH$ (\cref{fig:lattice-and-peps-tn} right):
\begin{align}
  &
    F_{x}\sqbd{i_{x}, k_{x}, b\parenc{n^{(x)}_{1}}, \dots, b\parenc{n^{(x)}_{z_{x}}}}
    \nonumber
  \\
  &=
    \sum_{j_{x}=1}^{d(x)}
    G_{x}\sqbd{i_{x}, j_{x}, c\parenc{n^{(x)}_{1}}, \dots, c\parenc{n^{(x)}_{z_{x}}}}
    H_{x}\sqbd{j_{x}, k_{x}, d\parenc{n^{(x)}_{1}}, \dots, d\parenc{n^{(x)}_{z_{x}}}}
    \label{eq:peps-product-local-tensors}
\end{align}
where $b\parenc{n^{(x)}_{k}} = \parenc{c\parenc{n^{(x)}_{k}}, d\parenc{n^{(x)}_{k}}}$ ($k \in \{1 \dots z_{x}\}$).
\Cref{eq:peps-product-local-tensors} proves the following \namecref{lem:peps-product-bond-dimension}:

\begin{lem}
  \label{lem:peps-product-bond-dimension}
  Let $G$ and $H$ be operators with \gls{peps} bond dimensions $D_{G}(e)$ and $D_{H}(e)$.
  The operator product $F = GH$ admits a $\gls{peps}$ representation with bond dimension $D_{F}(e) = D_{G}(e) D_{H}(e)$ ($e \in E$). 
\end{lem}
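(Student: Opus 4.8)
The plan is to verify directly that the tensors $F_x$ defined in \eqref{eq:peps-product-local-tensors} furnish a \gls{pepo} representation of $F = GH$ with bond dimension $D_G(e)D_H(e)$ on each edge $e$, since the content of the \namecref{lem:peps-product-bond-dimension} is essentially already contained in that formula. First I would write the matrix element of the product in the chosen product basis by inserting a resolution of the identity built from the product basis states:
\[
  \bra{\phi^{(1)}_{i_1}\dots\phi^{(n)}_{i_n}} F \ket{\phi^{(1)}_{k_1}\dots\phi^{(n)}_{k_n}}
  = \sum_{j_1=1}^{d(1)}\dots\sum_{j_n=1}^{d(n)}
    \bra{\phi^{(1)}_{i_1}\dots\phi^{(n)}_{i_n}} G \ket{\phi^{(1)}_{j_1}\dots\phi^{(n)}_{j_n}}\,
    \bra{\phi^{(1)}_{j_1}\dots\phi^{(n)}_{j_n}} H \ket{\phi^{(1)}_{k_1}\dots\phi^{(n)}_{k_n}}.
\]
Then I would substitute the \gls{peps}/\gls{pepo} contraction formula \eqref{eq:peps-repr} for each of the two factors on the right-hand side, using an independent set of bond labels $\{c(e)\}_{e\in E}$ for the tensors $G_x$ and $\{d(e)\}_{e\in E}$ for the tensors $H_x$.

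The key step is then the regrouping of the sums. Each physical index $j_x$ occurs in exactly one tensor of the network for $G$ (namely $G_x$) and in exactly one tensor of the network for $H$ (namely $H_x$), so the sum $\sum_{j_x}$ couples only $G_x$ to $H_x$ and can be carried out site by site; doing so produces precisely the right-hand side of \eqref{eq:peps-product-local-tensors}, i.e.\ the tensor $F_x$. What remains are the sums over the two disjoint families of bond labels $\{c(e)\}$ and $\{d(e)\}$; pairing them into a single bond label $b(e) = (c(e),d(e))$ ranging over $\{1,\dots,D_G(e)\}\times\{1,\dots,D_H(e)\}$ turns the remaining expression into a single \gls{peps} contraction of the tensors $F_x$ over the edges of $(\Lambda,E)$, with bond dimension $D_G(e)D_H(e)$ on edge $e$ (cf.\ the shaded box in \cref{fig:lattice-and-peps-tn}). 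Finally I would check the size bookkeeping: $F_x$ carries one doubled physical index of size $[d(x)]^2$, obtained by grouping $(i_x,k_x)$, together with one index of size $D_G(n^{(x)}_\ell)D_H(n^{(x)}_\ell)$ for each incident edge $n^{(x)}_\ell$, which is exactly the index structure required of a \gls{pepo} tensor by \eqref{eq:peps-repr}.

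I do not expect a genuine obstacle here; the statement is a bookkeeping identity. The only point that requires a little care is making the two reindexings precise and consistent with the structure fixed in the definition of a \gls{peps} representation: the pairing $b(e)=(c(e),d(e))$ of bond labels and the grouping $(i_x,k_x)$ of physical indices must respect the fixed edge enumeration $n^{(x)}_1,\dots,n^{(x)}_{z_x}$ at each site, so that the new local tensors $F_x$ have exactly the slot ordering demanded by \eqref{eq:peps-repr}.
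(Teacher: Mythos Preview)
Your proposal is correct and follows exactly the route the paper takes: the paper simply declares that \eqref{eq:peps-product-local-tensors} proves the \namecref{lem:peps-product-bond-dimension}, and what you have written is precisely the unpacking of that claim (resolution of the identity on the physical indices, site-local contraction over $j_x$, and pairing of bond labels $b(e)=(c(e),d(e))$). There is nothing to add or correct.
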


The next \namecref{lem:peps-bond-dim-upper-bound} gives an explicit upper bound on the bond dimension of the \gls{peps} representation of an arbitrary tensor:

\begin{lem}
  \label{lem:peps-bond-dim-upper-bound}
  Let $t$ a tensor with $n$ indices and size $d(1) \times \dots \times d(n)$.
  Then $t$ admits a \gls{peps} representation with maximal bond dimension $D \le d^{n}$ where $d = \max_{x \in \Lambda}d(x)$.
\end{lem}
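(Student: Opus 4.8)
The plan is to give an explicit construction in the spirit of the argument used for \glspl{mps} by \textcite{Jozsa2006}, with the linear chain replaced by a spanning tree of the graph $(\Lambda, E)$. Since $(\Lambda, E)$ is connected it contains a spanning tree $T \subset E$; fix one and root it at the vertex $1 \in \Lambda$. For $x \in \Lambda$ let $S_{x} \subset \Lambda$ be the vertex set of the subtree of $T$ rooted at $x$, so that $\abs{S_{x}} \le n - 1$ for $x \ne 1$, and the sets $S_{c}$ over the children $c$ of a vertex $x$ are pairwise disjoint and, together with $\{x\}$, form $S_{x}$. First I would let the (unique) parent edge of each non-root vertex $x$ carry one composite bond index ranging over the tuples $(j_{y})_{y \in S_{x}}$ with $j_{y} \in \{1 \dots d(y)\}$; this edge then has bond dimension $\prod_{y \in S_{x}} d(y) \le d^{\abs{S_{x}}} \le d^{n}$. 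Every edge in $E \setminus T$ is assigned the trivial bond dimension $1$.

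Next I would specify the local tensors. For $x \ne 1$, take $G_{x}$ to be the Kronecker (``copy'') tensor which is $1$ exactly when the composite index on the parent edge of $x$ coincides with the tuple built from the physical index $i_{x}$ and the composite indices incoming from the children of $x$, and $0$ otherwise; $G_{x}$ is constant along each incident non-tree edge, whose index takes a single value. At the root, $G_{1}$ receives $i_{1}$ together with the composite indices from its children --- which between them already determine the full tuple $(i_{1}, \dots, i_{n})$, since the child subtrees are disjoint and cover $\Lambda \setminus \{1\}$ --- and returns the number $t_{i_{1}\dots i_{n}}$. (For $n = 1$ the statement is trivial.)

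It then remains to check \cref{eq:peps-repr}. Fix $i_{1}, \dots, i_{n}$ and consider the sum over all bond indices. Processing $T$ from the leaves upward, each copy tensor $G_{x}$ forces the composite index on the parent edge of $x$ to its unique value consistent with $i_{x}$ and the child indices, while the non-tree bonds contribute only a factor $1$; hence exactly one summand is nonzero and it equals the value $t_{i_{1}\dots i_{n}}$ emitted by $G_{1}$. Since every tree edge has bond dimension at most $d^{n}$ and every non-tree edge has bond dimension $1$, the maximal bond dimension satisfies $D \le d^{n}$.

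The construction is routine; the only step needing care is the index bookkeeping --- fixing the orientation of $T$, describing how composite indices are concatenated at a vertex and distributed among its children, and verifying that the branches meeting at the root reassemble $(i_{1}, \dots, i_{n})$ with neither omission nor overlap, which is precisely where the tree structure (unique paths, disjoint sibling subtrees) enters. I do not expect a genuine obstacle here. The bound $d^{n}$ is deliberately loose: replacing the copy tensors by iterated Schmidt decompositions along $T$ would lower the exponent on each edge to the size of the smaller side of the cut it induces, recovering a bound of the form $d^{\floor{n/2}}$, but that refinement is not needed here.
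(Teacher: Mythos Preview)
Your construction is correct. It differs from the paper's argument, though both start from a spanning tree of $(\Lambda, E)$. Rather than placing copy tensors at non-root vertices and the entire tensor $t$ at the root, the paper linearises: it performs a depth-first search, obtaining a walk that visits every vertex and traverses each tree edge at most twice; the tensor with indices permuted into first-visit order then admits an \gls{mps}/\gls{tt} representation of bond dimension at most $d^{\floor{n/2}}$, and re-embedding this chain into the graph costs at most a squaring of the bond dimension on each edge, giving $D \le d^{n}$. Your route is more elementary and self-contained --- it needs no \gls{mps} theory and no walk bookkeeping --- and in fact yields $D \le d^{n-1}$ on every tree edge directly. The paper's route, on the other hand, makes the slack in the bound more visible: when the graph admits a Hamiltonian path the walk never doubles back and one obtains $d^{\floor{n/2}}$ outright, which is exactly the refinement you allude to in your final paragraph.
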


\begin{rem}
  \label{rem:peps-bond-dim-upper-bound}
  When applying \cref{lem:peps-bond-dim-upper-bound} to an operator which acts non-trivially on a region $Y \subset \Lambda$ remember that this region $Y$ must be connected in terms of the \gls{peps} graph $(\Lambda, E)$.
\end{rem}

\glsreset{dfs}

\begin{proof}
  Suppose that the connected, simple graph $(\Lambda, E)$ is such that it admits a permutation $(x_{1}, \dots, x_{n})$ of all the vertices such that $\{x_{k-1}, x_{k}\} \in E$ is a valid edge ($k \in \{2 \dots n\}$; such a permutation is called a \emph{Hamiltonian path}).
  In this case, an \gls{mps}/\gls{tt} representation of the suitably permuted tensor provides a valid \gls{peps} representation with bond dimension $D \le d^{\floor{n/2}} < d^{n}$ \parencite[e.g.]{Schollwoeck2011}.
  However, the graph may not admit such a permutation.\footnote{Example: A central vertex connected to three surrounding vertices.}
  In this case, we perform a \gls{dfs} on the graph to obtain a tree graph with the same vertices and a subset of the edges of the original graph (we can start the \gls{dfs} on any vertex).
  Walking through the resulting tree graph in the \gls{dfs} order visits each vertex at least once and each edge at most twice.\footnote{%
    Tarry's algorithm returns a bidirectional double tracing, i.e.\ a walk over the graph which visits each edge exactly twice \parencite[Sec.~4.2.4]{Gross2014}.
    Omitting visits to already-visited vertices in this walk represents a depth-first search \parencite[Sec.~2.1.2]{Gross2014}.
  }
  The tensor with indices permuted according to their first visit in the \gls{dfs} order%
  \footnote{%
    It would be equally permissible to use the second or a later visit in the \gls{dfs} order.
  }
  can be represented as \gls{mps}/\gls{tt} of bond dimension $d^{\floor{n/2}}$.
  Because each edge is visited at most twice, the resulting \gls{mps} can be converted to a \gls{peps} with bond dimension $D \le (d^{\floor{n/2}})^{2} \le d^{n}$.
\end{proof}

The bond dimension of a unitary circuit can be bounded with \cref{lem:peps-product-bond-dimension,lem:peps-bond-dim-upper-bound} as follows:

\begin{lem}
  \label{lem:unit-circuit-eff-peps}
  Let $U = U_{1} \dots U_{G}$ a unitary circuit composed of $G$ gates where each gate $U_{j}$ acts non-trivially on at most $K$ connected sites ($j \in \{1\dots G\}$).
  Let at most $L$ gates act on any site of the system.
  The unitary $U$ admits an exact \gls{pepo} representation of bond dimension $D \le d^{2KL}$ where $d = \max_{x\in \Lambda}d(x)$ is the maximum local dimension. 
\end{lem}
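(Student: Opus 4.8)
The plan is to assemble the \gls{pepo} of $U$ gate by gate: first represent each individual gate $U_{j}$ as a \gls{pepo} of small bond dimension, then multiply these representations using \cref{lem:peps-product-bond-dimension}, keeping track of how many non-trivial factors can accumulate on any single edge of $(\Lambda, E)$.

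First I would fix a gate $U_{j}$ and let $S_{j} \subset \Lambda$ denote the (at most $K$) sites on which it acts non-trivially; by hypothesis $S_{j}$ is connected in the graph $(\Lambda, E)$. Writing $U_{j} = \tilde U_{j} \otimes \idm_{\Lambda \setminus S_{j}}$ and viewing $\tilde U_{j}$ as an operator on $S_{j}$, its \gls{pepo} tensor has $\abs{S_{j}} \le K$ indices, each of dimension $[d(x)]^{2} \le d^{2}$. I would then invoke \cref{lem:peps-bond-dim-upper-bound} with maximal local dimension $d^{2}$ — the connectedness requirement there, cf.\ \cref{rem:peps-bond-dim-upper-bound}, is precisely the assumption that $S_{j}$ is connected — to obtain a \gls{pepo} representation of $\tilde U_{j}$ with bond dimension at most $(d^{2})^{K} = d^{2K}$. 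Tensoring back the identity on $\Lambda \setminus S_{j}$ extends this to a \gls{pepo} of $U_{j}$ on the whole graph in which $D_{U_{j}}(e) \le d^{2K}$ whenever both endpoints of $e$ lie in $S_{j}$, and $D_{U_{j}}(e) = 1$ for every other edge.

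Next I would apply \cref{lem:peps-product-bond-dimension} iteratively to $U = U_{1} \cdots U_{G}$, obtaining a \gls{pepo} of $U$ with bond dimension $D_{U}(e) \le \prod_{j=1}^{G} D_{U_{j}}(e)$ on each edge $e \in E$. Only gates $j$ with both endpoints of $e$ contained in $S_{j}$ contribute a factor exceeding $1$; if $e = \{x, y\}$, each such gate acts non-trivially on the site $x$, and by hypothesis at most $L$ gates do. Hence at most $L$ of the factors exceed $1$, and each is at most $d^{2K}$, so $D_{U}(e) \le (d^{2K})^{L} = d^{2KL}$ for every $e \in E$, which is the claimed bound.

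I expect the whole argument to be routine bookkeeping layered on top of \cref{lem:peps-product-bond-dimension,lem:peps-bond-dim-upper-bound}; the only steps deserving attention are checking that the connectedness hypothesis of \cref{lem:peps-bond-dim-upper-bound} is met (it is, since each gate acts on a connected set of sites) and remembering that applying that lemma to a \gls{pepo} — i.e.\ to the operator's doubled-index tensor — produces the factor $2$ in the exponent, because such a tensor carries local dimension $[d(x)]^{2}$ at site $x$ rather than $d(x)$.
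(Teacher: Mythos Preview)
Your proposal is correct and follows essentially the same route as the paper: represent each gate as a \gls{pepo} of bond dimension $(d^{2})^{K}$ via \cref{lem:peps-bond-dim-upper-bound}, multiply with \cref{lem:peps-product-bond-dimension}, and count that at most $L$ non-trivial factors contribute on any edge. Your treatment is in fact slightly more explicit than the paper's about the connectedness hypothesis and the origin of the factor~$2$ in the exponent.
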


\begin{proof}
  The statement is proven by repeating a simple counting argument which has been used before by \textcite{Jozsa2006} for one-dimensional \gls{mps}.\footnote{%
    The argument could be improved by counting how often each edge is used instead of counting how often each site is used; cf.\ \textcite{Holzaepfel2014}.
  }
  As the operator $U_{j}$ acts non-trivially on at most $K$ connected sites, it admits a \gls{pepo} representation with bond dimension $D' = (d^{2})^{K}$ (\cref{lem:peps-bond-dim-upper-bound}).
  At each edge, the bond dimension of $U$ is at most the product of the bond dimensions of the operators $U_{1}$, \dots, $U_{G}$ (\cref{lem:peps-product-bond-dimension}): $D_{U}(e) \le \prod_{j=1}^{G} D_{U_{j}}(e)$, $e \in E$.
  We have $D_{U_{j}}(e) \le D'$ for all edges and $D_{U_{j}}(e) = 1$ if the edge $e$ involves a site on which $U_{j}$ acts as the identity. 
  At most $L$ of the $G$ operators $U_{1}$, \dots, $U_{G}$ act non-trivially on an arbitrary site $j$ and this bounds $U$'s bond dimension to $D_{U}(e) \le (D')^{L} = d^{2KL}$ for all edges.
\end{proof}

\subsection[Arbitrary lattice]{Efficient representation of time evolution: Arbitrary lattice%
  \label{sec:timeevo-repr-arb-lattice}}

Suppose that a local Hamiltonian $H(t)$ is perturbed by a spatially local and possibly time-dependent perturbation $A(t)$.
The following \namecref{lem:osborne-hamilton-perturbation} states that there is a spatially local unitary $V'$ such that $\opnorm{V' U^{H - A}_{ts} - U^{H}_{ts}}$ is small; the \namecref{lem:osborne-hamilton-perturbation} has been proven for one-dimensional systems by \textcite{Osborne2006}.
His proof also works for higher-dimensional systems if combined with \cref{thm:timeevo-quasilocality} (proven by \cite{Barthel2012}). 
We pretend to extend the existing proof by accounting for time-dependent Hamiltonians explicitly.

\begin{lem}
  \label{lem:osborne-hamilton-perturbation}
  
  Let $a$, $\mcz$ and $J$ be finite and $t \in \R$.
  Let $Y \subset R \subset \Lambda$ and let $A(s)$ act on $Y$.
  Let $A(s)$ be continuous except for finitely many discontuinuities in any finite interval.
  Choose $q \in (0, 1)$ and set $D = (1-q) d_{a}$ where $d_{a} = d(Y, \Lambda \setminus R) / a$. 
  Let $\ceil{d_{a}}$ large enough (\cref{lem:timeevo-scale-distance}). 
  Let $V'_{s}(t)$ on $\bar R$ be the solution of $\partial_{s} V'_{s}(t) = \ii L'_{t}(s) V'_{s}(t)$ where $L'_{t}(s) = \tau^{H_{\bar R}}_{ts} (A(s))$ and $V'_{t}(t) = \idm$ ($s \in \R$).
  Then
  \begin{align}
    \opnorm{ V'_{s}(t)U^{H-A}_{ts} - U^{H}_{ts} }
    \le
    \frac{2M \alpha_{q}}{v \mcz} \abs A \exp(v\abs{t - s} - D)
  \end{align}
  where $\abs A = \max_{r\in [s, t]} \opnorm{A(r)}$.
  The Lieb--Robinson velocity is given by $v = J \mcz \exp(1)$ and $\alpha_{q} \in (1 / \ee^{1-q}, 1]$.
  Specifically, $\alpha_{q} = \exp(-(1-q) (\ceil{d_{a}} - d_{a}))$.
\end{lem}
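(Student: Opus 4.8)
The plan is to follow Osborne's interaction-picture argument, replacing the one-dimensional Lieb--Robinson estimate with Corollary~\ref{cor:timeevo-quasilocality}. First I would set up two interaction pictures with respect to the ``full'' Hamiltonians $H$ and $H_{\bar R}$. Define $W_s = U^{H}_{ts}\,U^{H-A}_{st}$; differentiating in $s$ using $\partial_s U^{H-A}_{st} = \ii U^{H-A}_{st}(H(s) - A(s))$ and $\partial_s U^{H}_{ts} = \ii U^{H}_{ts}H(s)$ shows that $\partial_s W_s = -\ii\,U^{H}_{ts}\,A(s)\,U^{H}_{st}\cdot W_s = -\ii\,L_t(s)\,W_s$ with $L_t(s) = \tau^{H}_{ts}(A(s))$ and $W_t = \idm$. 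Thus $W_s$ and $V'_s(t)^{\adjm}$ (which solves the same type of equation with generator $L'_t(s) = \tau^{H_{\bar R}}_{ts}(A(s))$) are two unitaries generated by nearly-equal time-dependent generators, and $V'_s(t)U^{H-A}_{ts} - U^{H}_{ts} = U^{H}_{ts}\bigl(V'_s(t)^{\adjm}{}^{\adjm}\cdots\bigr)$ — more precisely, left-multiplying the difference by $U^{H}_{st}$ reduces the claim to bounding $\opnorm{\,\tilde V_s - W_s\,}$ where $\tilde V_s$ is the unitary generated by $L'_t(s)$.

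The second step is the standard Duhamel/Gronwall comparison: for two unitaries $X_s, X'_s$ with $X_t = X'_t = \idm$ and generators $K(s)$, $K'(s)$, one has
\begin{align}
  \opnorm{X_s - X'_s} \le \int_{\min(s,t)}^{\max(s,t)} \opnorm{K(r) - K'(r)}\,\dd r .
\end{align}
Apply this with $K(r) = L_t(r) = \tau^{H}_{tr}(A(r))$ and $K'(r) = L'_t(r) = \tau^{H_{\bar R}}_{tr}(A(r))$. Writing $\tau^{H}_{tr} = \tau^{H}_{r}{}^{-1}\circ\cdots$ — concretely, using the cocycle property $\tau^{H}_{tr}(X) = U^{H}_{tr}XU^{H}_{rt}$ and the fact that $H_{\bar R}$ generates $\tau^{H_{\bar R}}$, the integrand equals $\opnorm{\tau^{H}_{tr}(A(r)) - \tau^{H_{\bar R}}_{tr}(A(r))}$. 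This is exactly the quasilocality difference bounded by Corollary~\ref{cor:timeevo-quasilocality} (with evolution time $|t-r|$, observable $A(r)$, region $R$, and the same $Y$, $d_a$, $D$, $q$): it is at most $\frac{2M\alpha_q}{\mcz}\opnorm{A(r)}\exp(v|t-r| - D) \le \frac{2M\alpha_q}{\mcz}\,\abs A\,\exp(v|t-r| - D)$. Note $\alpha_q$ depends only on $d_a$, $q$, $\kappa$, not on $r$, so it comes out of the integral.

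The third step is just the elementary integral. Substituting $u = |t-r|$,
\begin{align}
  \int_{\min(s,t)}^{\max(s,t)} \exp(v|t-r| - D)\,\dd r
  = e^{-D}\int_0^{|t-s|} e^{vu}\,\dd u
  = \frac{e^{-D}}{v}\bigl(e^{v|t-s|} - 1\bigr)
  \le \frac{1}{v}\,\exp(v|t-s| - D),
\end{align}
which produces the claimed factor $\frac{2M\alpha_q}{v\mcz}\abs A\exp(v|t-s| - D)$. The statements $v = J\mcz e$ and $\alpha_q\in(1/e^{1-q},1]$ with $\alpha_q = \exp(-(1-q)(\ceil{d_a}-d_a))$ are inherited verbatim from Corollary~\ref{cor:timeevo-quasilocality} and Lemma~\ref{lem:timeevo-scale-distance}. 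The main obstacle — and the place to be careful rather than hard — is the bookkeeping in the first step: getting the order of the unitaries and the sign of the generators right so that the difference $V'_s(t)U^{H-A}_{ts} - U^{H}_{ts}$ is genuinely unitarily equivalent to $\tilde V_s - W_s$, and verifying that the generators that appear after this reduction are precisely $\tau^{H}_{tr}(A(r))$ and $\tau^{H_{\bar R}}_{tr}(A(r))$ so that Corollary~\ref{cor:timeevo-quasilocality} applies without modification. Handling the possible discontinuities of $A(s)$ only requires interpreting the integral equation in its mild (Volterra) form, which is already built into the hypotheses on $H(t)$ and $A(t)$.
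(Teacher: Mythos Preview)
Your proposal is correct and follows essentially the same route as the paper: define $W_s = U^{H}_{ts}U^{H-A}_{st}$, reduce by unitary invariance to $\opnorm{V'_s(t) - W_s}$, bound this by $\int \opnorm{L_t(r)-L'_t(r)}\,dr$, apply Corollary~\ref{cor:timeevo-quasilocality}, and integrate. The paper obtains the integral comparison via a Trotter product and the telescoping inequality $\opnorm{U_1U_2 - V_1V_2}\le\opnorm{U_1-V_1}+\opnorm{U_2-V_2}$ rather than Duhamel, and in your bookkeeping step the correct sign is $\partial_s W_s = +\ii L_t(s)W_s$ and the reduction is by \emph{right}-multiplying by $U^{H-A}_{st}$ (not left-multiplying by $U^{H}_{st}$), but these do not affect the argument.
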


\begin{proof}
  Let $V_{s}(t) = U^{H}_{ts} U^{H-A}_{st}$.\footnote{%
    Alternatively, one can obtain an approximation of the form $U^{H}_{ts} \approx U^{H-A}_{ts} W'_{t}(s)$ where $W'_{t}(s)$ is the solution of $\partial_{t} W'_{t}(s) = -\ii W'_{t}(s) \tau^{H_{\bar R}}_{st}(A_{t})$, $W'_{s}(s) = \idm$.
    $W'_{t}(s)$ is an approximation of $W_{t}(s) = U^{H-A}_{st} U^{H}_{ts}$. 
    This approach is a bit more similar to the original proof by \textcite{Osborne2006}.
  }
  Due to unitary invariance of the operator norm, we have
  \begin{align*}
    \opnorm{V'_{s}(t) U^{H-A}_{ts} - U_{ts}}
    = \opnorm{(V'_{s}(t) U^{H-A}_{ts} - U_{ts}) U^{H-A}_{st}}
    = \opnorm{V'_{s}(t) - V_{s}(t)}.
  \end{align*}
  For fixed $t \in \R$, $V_{s}(t)$ satisfies the differential equation
  \begin{align}
    \partial_{s} V_{s}(t) = \ii U^{H}_{ts}(H(s) - H(s) + A(s)) U^{H-A}_{st}
    = \ii L_{t}(s) V_{s}(t)
  \end{align}
  where $L_{t}(s) = U^{H}_{ts} A(s) U^{H}_{st} = \tau^{H}_{ts}(A(s))$ and $V_{t}(t) = \idm$.
  The Trotter decomposition of $V$ is given by%
  \footnote{
    See e.g.\ Theorem~1.1 and~1.2 by \textcite{Dollard1979a} or Theorem~3.1 and~4.3 by \textcite{Dollard1979}.
  }
  \begin{align}
    V_{t}(s) = 
    \lim_{m\to\infty}
    \prod_{j=1}^{m}
    \ee^{-\ii L_{t}(s + j\delta_{m}) \delta_{m}}
  \end{align}
  where $\delta_{m} = (t-s) / m$.
  The operator norm is unitarily invariant, therefore the triangle inequality implies $\norm{U_{1} U_{2} - V_{1} V_{2}} \le \norm{U_{1} - V_{1}} + \norm{U_{2} - V_{2}}$ (\cref{lem:unit-seq-triangle-unit-inv}).
  We obtain
  \begin{subequations}
  \begin{align}
    \opnorm{V_{s}(t) - V'_{s}(t)}
    &\le
      \lim_{m \to \infty} \sum_{j=1}^{m} 
      \opnorma{\ee^{-\ii L_{t}(s + j\delta_{m}) \delta_{m}} -   \ee^{-\ii L'_{t}(s + j\delta_{m}) \delta_{m}}}
    \\
    &\le
      \lim_{m \to \infty} \sum_{j=1}^{m}
      \abs{\delta_{m}} \opnorma{L_{t}(s + j \delta_{m}) - L'_{t}(s + j \delta_{m})}
    \\
    &=
      \int_{s}^{t} \opnorma{L_{t}(r) - L'_{t}(r)} \vd r.
      \label{eq:osborne-hamilton-perturbation-integral}%
  \end{align}%
  \end{subequations}
  For all $r, r' \in [s, t]$, \cref{cor:timeevo-quasilocality} provides the bound
  \begin{align}
    \opnorma{\tau^{H}_{tr}(A(r')) - \tau^{H_{\bar R}}_{tr}(A(r'))}
    \le    
    \frac{2M \alpha_{q}}{\mcz} \opnorm{A(r')} \exp(v\abs{t-r} - D). 
  \end{align}
  Inserting $r' = r$ provides a bound on $\opnorma{L_{t}(r) - L'_{t}(r)}$; inserting this bound into \eqref{eq:osborne-hamilton-perturbation-integral} completes the proof.
\end{proof}

In the following \namecref{lem:time-evo-eff-repr}, we decompose the global evolution $U_{ts}$ into a sequence of local unitaries by removing all local terms of the Hamiltonian which involve site $n$, then removing those which involve site $n-1$ and so on.
Here, the order of the sites does not matter and the geometry of the lattice enters only via the constants introduced before.
However, the subsequent \cref{thm:time-evo-eff-repr} shows that ordering the sites of the system in a certain way improves the properties of the resulting unitary circuit.

\begin{lem}
  \label{lem:time-evo-eff-repr}
  
  Let $H_{j} = \sum_{Z \subset \Lambda_{j}} h_{Z}$ denote the sum of all terms which act on the first $j$ sites $\Lambda_{j} = \{1 \dots j\}$.
  Denote by $Y_{j} \subset \Lambda_{j}$ the set of sites on which $F_{j} = H_{j} - H_{j-1}$ acts non-trivially.
  Choose $q \in (0, 1)$.
  Let $r$ be such that $\ceil{r} > 2\kappa + 1$ and $\ceil{r} \ge \frac{2\kappa}{q} \ln(\frac{\kappa}{q})$.
  Let $R = (1-q)r$ satisfy
  \begin{align}
    R \ge v\abs{t-s} + \ln\parena{\frac n\epsilon} + c_{2}
  \end{align}
  where $c_{2} = \ln\parena{\frac{M}{\mcz \exp(1)}} + 2(1-q)$.
  Set $R_{j} = B^{o}_{da}(Y_{j}) \cap \Lambda_{j}$ where $d = r - 2$.
  Let $\bar R_{j}$ be the extension of $R_{j}$ in terms of the Hamiltonian $H_{j}$, i.e.\ $\bar R_{j} \subset \Lambda_{j}$.
  Then $\bar R_{j} \subset B^{o}_{ra}(\{j\}) \cap \Lambda_{j}$. 
  Let $V'_{js}(t)$ on $\bar R_{j}$ be the solution of $\partial_{s} V'_{js}(t) = \ii \tau^{G_{j}}_{ts}(F_{j}(s)) V'_{js}(t)$ where $G_{j} = H_{\bar R_{j}}$ and $V'_{jt}(t) = \idm$. 
  Then
  \begin{align}
    \opnorm{U^{H}_{ts} - V'_{n} \dots V'_{2} V'_{1}}
    \le
    \epsilon
  \end{align}
  holds where $V'_{j} = V'_{js}(t)$.
\end{lem}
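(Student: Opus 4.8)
The plan is to peel the Hamiltonian one site at a time and then truncate. Put $H_{0} = 0$, so that $U^{H_{0}}_{ts} = \idm$, and for $j \in \{1 \dots n\}$ define the exact ``interaction-picture'' unitaries $V_{j} = U^{H_{j}}_{ts} \, U^{H_{j-1}}_{st}$. Since $H_{j} = H_{j-1} + F_{j}$, the unitary $V_{j}$ is exactly the operator called $V_{s}(t)$ in \cref{lem:osborne-hamilton-perturbation} applied to the subsystem $\Lambda_{j}$ with Hamiltonian $H_{j}$ and perturbation $A = F_{j}$ (note $H_{j} - A = H_{j-1}$), and $V'_{j} = V'_{js}(t)$ is the corresponding truncated unitary $V'_{s}(t)$. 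A short induction shows $V_{k} V_{k-1} \cdots V_{1} = U^{H_{k}}_{ts}$ for every $k$, since consecutive factors cancel; in particular $V_{n} \cdots V_{1} = U^{H_{n}}_{ts} = U^{H}_{ts}$. Hence it suffices to bound $\opnorm{V_{n} \cdots V_{1} - V'_{n} \cdots V'_{1}}$, and by the triangle inequality for products of unitaries (\cref{lem:unit-seq-triangle-unit-inv}) this is at most $\sum_{j=1}^{n} \opnorm{V_{j} - V'_{j}}$. By unitary invariance of the operator norm, $\opnorm{V_{j} - V'_{j}} = \opnorm{V'_{j} U^{H_{j-1}}_{ts} - U^{H_{j}}_{ts}}$, which is precisely the quantity bounded by \cref{lem:osborne-hamilton-perturbation}.

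\textbf{Geometric preliminaries.} Every term in $F_{j}$ contains site $j$ and has diameter at most $a$, so $Y_{j} \subset B^{c}_{a}(\{j\})$; if $F_{j} = 0$ then $V_{j} = V'_{j} = \idm$ and site $j$ contributes nothing, so assume $Y_{j} \neq \emptyset$. From $R_{j} = B^{o}_{da}(Y_{j}) \cap \Lambda_{j}$ one gets $d(Y_{j}, \Lambda_{j} \setminus R_{j}) \ge da$ (every site of $\Lambda_{j}$ within distance $da$ of $Y_{j}$ already lies in $R_{j}$; cf.\ \cref{lem:finite-metric-space}), so the quantity $d_{a}$ of \cref{lem:osborne-hamilton-perturbation}, computed inside $\Lambda_{j}$, satisfies $d_{a} \ge d = r - 2$; since $\ceil{\cdot}$ is monotone and the ``large enough'' conditions of \cref{lem:timeevo-scale-distance} are one-sided inequalities, $\ceil{d_{a}}$ is large enough because $\ceil{r-2}$ (a fortiori $\ceil{r}$) is. Combining $Y_{j} \subset B^{c}_{a}(\{j\})$ with $R_{j} \subset B^{o}_{(d+1)a}(\{j\})$ and the fact that the extension $\bar{\,\cdot\,}$ adds at most one interaction range $a$ gives $\bar R_{j} \subset B^{o}_{ra}(\{j\})$, and since the extension is taken within $\Lambda_{j}$ we get $\bar R_{j} \subset B^{o}_{ra}(\{j\}) \cap \Lambda_{j}$, the claimed locality of $V'_{j}$.

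\textbf{The estimate.} Feeding this into \cref{lem:osborne-hamilton-perturbation} with $D_{j} = (1-q) d_{a} \ge (1-q)(r-2) = R - 2(1-q)$, with $\abs{F_{j}} \le \mcz J / 2$ (at most $\mcz$ terms, each of norm at most $J/2$), and with $\alpha_{q} \le 1$ yields
\begin{align}
  \opnorm{V_{j} - V'_{j}}
  &\le \frac{2M\alpha_{q}}{v\mcz}\,\abs{F_{j}}\,\exp(v\abs{t-s} - D_{j})
  \nonumber \\
  &\le \frac{MJ}{v}\,\exp(2(1-q))\,\exp(v\abs{t-s} - R).
\end{align}
Using $v = J \mcz \exp(1)$ we have $MJ/v = M/(\mcz \exp(1))$, so summing over $j \in \{1 \dots n\}$ gives
\begin{align}
  \sum_{j=1}^{n} \opnorm{V_{j} - V'_{j}}
  \le \exp\!\parena{ \ln n + \ln\!\parena{ \tfrac{M}{\mcz \exp(1)} } + 2(1-q) + v\abs{t-s} - R },
\end{align}
which is at most $\epsilon$ exactly when $R \ge v\abs{t-s} + \ln(n/\epsilon) + c_{2}$ with $c_{2} = \ln(M/(\mcz \exp(1))) + 2(1-q)$ --- precisely the hypothesis. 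This closes the argument.

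\textbf{Main obstacle.} The error bookkeeping is essentially forced once \cref{lem:osborne-hamilton-perturbation} and \cref{lem:unit-seq-triangle-unit-inv} are available; the delicate part is the handling of the restricted lattices $\Lambda_{j}$. One must justify that applying \cref{lem:osborne-hamilton-perturbation} to $(\Lambda_{j}, H_{j})$ rather than to $(\Lambda, H)$ is legitimate --- it is, because $\bar R_{j} \subset \Lambda_{j}$ forces $H_{\bar R_{j}}$ to coincide with the corresponding truncation of $H_{j}$, and the constants $a, \mcz, J, M, \kappa$ for $H_{j}$ are bounded by those for $H$ --- and one must verify the open/closed-ball inclusions of the previous paragraph with the right radii. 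The edge case $F_{j} = 0$ and the transfer of the ``$\ceil{d_{a}}$ large enough'' condition from the assumption on $r$ each warrant an explicit line.
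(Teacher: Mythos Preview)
Your proof is correct and follows essentially the same approach as the paper: the paper writes the telescoping identity $U^{H}_{ts} - V'_{n}\cdots V'_{1} = \sum_{j} V'_{n}\cdots V'_{j+1}\bigl(U^{H_{j}}_{ts} - V'_{j} U^{H_{j-1}}_{ts}\bigr)$ directly, while you package it via $V_{j} = U^{H_{j}}_{ts} U^{H_{j-1}}_{st}$, but in both cases the bound reduces to $\sum_{j}\opnorm{U^{H_{j}}_{ts} - V'_{j}U^{H_{j-1}}_{ts}}$ and is controlled by \cref{lem:osborne-hamilton-perturbation} with the same geometric estimates ($Y_{j}\subset B^{c}_{a}(\{j\})$, $d(Y_{j},\Lambda_{j}\setminus R_{j})\ge da$, $\opnorm{F_{j}}\le \mcz J/2$). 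Your explicit treatment of the restriction to $\Lambda_{j}$, the edge case $F_{j}=0$, and the transfer of the ``large enough'' hypothesis are welcome clarifications the paper leaves implicit.
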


\begin{proof}
  There are at most $\mcz$ non-zero local terms $h_{Z}$ with $j \in Z$.
  As a consequence, $\opnorm{F_{j}(s)} \le \mcz J / 2$ holds.
  In addition, $Y_{j} \subset B^{c}_{a}(\{j\})$ holds and implies $\bar R_{j} \subset B^{c}_{a}(R_{j}) \cap \Lambda_{j} \subset B^{c}_{a}(B^{o}_{da}(B^{c}_{a}(\{j\}))) \cap \Lambda_{j} \subset B^{o}_{ra}(\{j\}) \cap \Lambda_{j}$ (\cref{lem:finite-metric-space}). 
  The definitions imply that $d(Y_{j}, \Lambda_{j} \setminus R_{j}) / a \ge d$  (\cref{lem:finite-metric-space}).\footnote{%
    Note that we restrict to the sublattice $\Lambda_{j}$.
  }
  Set $D = (1-q) d$.
  Note that $D = R - 2(1-q)$.
  Therefore, \cref{lem:osborne-hamilton-perturbation} implies that
  \begin{align}
    \opnorm{ U^{H_{j}}_{ts} - V'_{j} U^{H_{j}-F_{j}}_{ts} } \le
    \frac{J M}{v} \exp(v \abs{t-s} - D)
    \le
    \frac{J M}{v} \frac{\epsilon}{n} \exp(2(1-q) -c_{2})
    =
    \frac{\epsilon}{n}
  \end{align}
  holds for all $j \in \{1\dots n\}$. Note that we have
  \begin{align}
    U^{H}_{ts} - V'_{n} \dots V'_{2} V'_{1} \quad=\quad
    \sum_{j=1}^{n} V'_{n} \dots V'_{j+1} U^{H_{j}}_{st} - V'_{n} \dots V'_{j} U^{H_{j-1}}_{ts}
  \end{align}
  where $H = H_{n}$ and $U^{H_{0}}_{ts} = \idm$. The triangle inequality and unitary invariance of the operator norm imply
  \begin{align}
    \opnorm{U^{H}_{ts} - V'_{n} \dots V'_{2} V'_{1}}
    \le
    \sum_{j=1}^{n} \opnorm{ U^{H_{j}}_{ts} - V'_{j} U^{H_{j-1}}_{ts} }
    \le
    \epsilon
  \end{align}
  where we have used $H_{j-1} = H_{j} - F_{j}$.
  This completes the proof of the Lemma.
\end{proof}

\begin{cor}
  \label{cor:time-evo-eff-peps}
  Let $d$ be the graph metric of the \gls{peps} graph and let $t$ be poly-logarithmic in $n$. 
  The operator $V = V'_{n} \dots V'_{1}$ from \cref{lem:time-evo-eff-repr} provides an efficient, approximate \gls{peps} representation of the time evolution $U^{H}_{ts}$ because it admits a \gls{peps} representation with bond dimension $D = \poly(R)$.
  
  Specifically, the bond dimension is $D = d^{2n_{r}^{2}}$ where $n_{r} = \max_{j \in \Lambda} \abs{B^{o}_{ar}(\{j\})}$ is the maximal number of sites in a ball of radius $ar$. 
\end{cor}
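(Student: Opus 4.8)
The plan is to read $V = V'_{n} \cdots V'_{1}$ as a unitary circuit of $n$ gates and feed it into \cref{lem:unit-circuit-eff-peps}, whose bound $d^{2KL}$ will become $d^{2 n_{r}^{2}}$ once the gate size $K$ and the number of gates per site $L$ are estimated geometrically. First I would record, from \cref{lem:time-evo-eff-repr}, that each factor $V'_{j}=V'_{js}(t)$ is supported on $\bar R_{j} \subset B^{o}_{ra}(\{j\}) \cap \Lambda_{j}$. The hypothesis that $d$ is the graph metric of the \gls{peps} graph is exactly what makes $B^{o}_{ra}(\{j\})$ a \emph{graph} ball: any site within graph-distance $ra$ of $j$ is joined to $j$ by a shortest path all of whose sites lie at distance $\le ra$ from $j$, so $B^{o}_{ra}(\{j\})$ is connected in $(\Lambda, E)$. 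Padding $V'_{j}$ with identities on the at most $n_{r} := \max_{k \in \Lambda} \abs{B^{o}_{ar}(\{k\})}$ sites of that ball, each gate may be regarded as acting non-trivially inside a connected region of at most $K := n_{r}$ sites.

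Next I would bound how many gates touch a fixed site $x$: $V'_{j}$ acts non-trivially on $x$ only if $x \in B^{o}_{ra}(\{j\})$, i.e.\ $j \in B^{o}_{ra}(\{x\})$, so at most $\abs{B^{o}_{ra}(\{x\})} \le n_{r}$ of the gates involve $x$, giving $L := n_{r}$. Plugging $K = L = n_{r}$ into \cref{lem:unit-circuit-eff-peps} yields a \gls{pepo} (equivalently, a \gls{peps} of the operator-as-tensor) representation of $V$ with bond dimension $D \le d^{2 n_{r}^{2}}$, where now $d = \max_{x} d(x)$ is the maximal local dimension. Combined with $\opnorm{U^{H}_{ts} - V} \le \epsilon$ from \cref{lem:time-evo-eff-repr}, this is an approximate \gls{peps} representation of the time evolution, which is the ``Specifically'' clause of the \namecref{cor:time-evo-eff-peps}.

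It then remains to see that this is efficient for poly-logarithmic times. The geometric assumptions of \cref{sec:timeevo-lr} give $\abs{B^{o}_{ar}(\{k\})} = \bigo{(ar)^{\eta}} = \bigo{r^{\eta}}$ with $\eta = \kappa + 1$, and $R = (1-q) r$, so $n_{r} = \bigo{R^{\eta}} = \poly(R)$. Taking $R$ equal to the lower bound $v \abs{t-s} + \ln(n/\epsilon) + c_{2}$ permitted by \cref{lem:time-evo-eff-repr} makes $R$ --- hence $n_{r}$ --- poly-logarithmic in $n$ and logarithmic in $1/\epsilon$ whenever $t$ is poly-logarithmic in $n$; consequently $D = d^{2 n_{r}^{2}} = \exp\bigl(\bigo{(\log n)^{c}}\bigr)$ is quasi-polynomial in $n$ (and in $1/\epsilon$). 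For the running time: each tensor $V'_{j}$ is an evolution on at most $n_{r}$ sites and can be produced in time $\exp(\bigo{n_{r}})$, while assembling the $n$-fold operator product through \cref{eq:peps-product-local-tensors} costs time polynomial in $D$; both are quasi-polynomial, so the whole construction is efficient.

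I expect the main obstacle to be precisely the connectedness bookkeeping: \cref{lem:unit-circuit-eff-peps} relies on \cref{lem:peps-bond-dim-upper-bound,rem:peps-bond-dim-upper-bound}, which demand that each gate's support lie in a \emph{connected} subgraph, so one must enlarge the (a priori possibly disconnected) support $\bar R_{j}$, or $\bar R_{j} \cap \Lambda_{j}$, to the genuinely connected graph ball $B^{o}_{ra}(\{j\})$ before counting, and likewise count gates per site against that enlarged ball. Once the metric/graph compatibility hypothesis is used to justify this enlargement, the two counting estimates and the substitution of the volume bound $\abs{B^{o}_{ar}(\cdot)} = \bigo{r^{\eta}}$ are routine.
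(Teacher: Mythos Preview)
Your proposal is correct and follows essentially the same route as the paper's own proof: identify the support of each $V'_{j}$ inside the graph-ball $B^{o}_{ar}(\{j\})$, use the graph-metric hypothesis to ensure connectedness, count $K = L = n_{r}$ via the symmetry $x \in B^{o}_{ar}(\{j\}) \Leftrightarrow j \in B^{o}_{ar}(\{x\})$, and invoke \cref{lem:unit-circuit-eff-peps}. Your treatment is in fact more explicit than the paper's on the connectedness bookkeeping and on the efficiency/running-time discussion, but the underlying argument is the same.
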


\begin{proof}
  All open balls $B^{o}_{k}(Z)$ ($k \ge 0$, $Z \subset \Lambda$) are connected in terms of the \gls{peps} graph because $d$ is the graph metric of that graph. 
  The unitary $V'_{j}$ acts as the identity outside the connected set $B^{o}_{ar}(\{j\})$ which contains at most $n_{r} = \max_{j \in \Lambda} \abs{B^{o}_{ar}(\{j\})} = \poly(R)$ sites.
  At most $\abs{B^{o}_{ar}(\{j\})} \le n_{r}$ of the $n$ operators $V'_{1}$, \dots, $V'_{n}$ act non-trivially on a given, arbitrary site $j$.
  Applying \cref{lem:unit-circuit-eff-peps} with $K = L = n_{r}$ completes the proof. 
\end{proof}

\Cref{lem:time-evo-eff-repr} provides an efficient, approximate representation of the time evolution $U^{H}_{ts}$.
However, this representation may not be particularly useful:
Consider a one-dimensional setting where $V'_{j}$ acts only on $\{j, j-1\}$ and $A$ is an observable which acts on site $n$.\footnote{%
  Indeed, the operators $V'_{j}$ would need to act on larger numbers of neighbouring sites to achieve a non-zero value of $R$ if the Hamiltonian contains any interactions. 
}
We want to compute the expectation value $\tr(\tau^{H}_{st}(A) \rho(s))$ where the initial state $\rho(s)$ is a product state.
The time-evolved observable is given by $\tau^{H}_{st}(A) = U^{H}_{st} A U^{H}_{ts}$.
We could obtain an approximation from $\tau^{H}_{st}(A) \approx (V'_{1})^{\adjm} \dots (V'_{n})^{\adjm} A V'_{n} \dots V'_{1}$, but the latter operator can act non-trivially on the full system.
The structure of the approximation does not convey the fact that operators propagate with the finite Lieb--Robinson velocity, as shown e.g.\ by \cref{thm:timeevo-quasilocality}.
The next \namecref{thm:time-evo-eff-repr} shows how the representation can be improved by reordering the sites of the system before applying the \namecref{lem:time-evo-eff-repr}. 

\begin{thm}
  \label{thm:time-evo-eff-repr}
  Choose $R > 0$, $q > 0$ and set $r = R / (1 - q)$. 
  Let $L = \max_{j \in \Lambda} \abs{B^{o}_{2ar}(\{j\})}$.
  There is an efficiently computable colouring function $C \colon \Lambda \to \{1 \dots L\}$ which has the property that $C(x) = C(y)$ implies $d(x, y) / a \ge 2r$.
  Suppose that the sites of the system are ordered such that there are integers $a_{k}$ with $k \in \{0 \dots L\}$, $a_{0} = 1$ and $a_{L} = n$ in terms of which the consecutive sites $\{a_{k-1}+1 \dots a_{k}\}$ have the same colour $k \in \{1 \dots L\}$.
  In this case, $V = V'_{n} \dots V'_{1}$ from \cref{lem:time-evo-eff-repr} can be expressed as $V = W_{L} \dots W_{1}$ where $W_{k} = V'_{a_{k}} \dots V'_{a_{k-1}+1}$.
  $V'_{j}$ and $V'_{j'}$ do not act non-trivially on the same site if $j$ and $j'$ have the same colour.
  At most $L$ of the $n$ operators $V'_{1}$, \dots, $V'_{n}$ act non-trivially on any given site $j \in \Lambda$. 
\end{thm}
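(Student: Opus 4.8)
The plan is to split the statement into a purely combinatorial part — the existence and efficient computability of the colouring $C$ — and a bookkeeping part about the circuit $V = V'_{n} \dots V'_{1}$, for which the only input needed is the support bound $\bar R_{j} \subset B^{o}_{ar}(\{j\}) \cap \Lambda_{j}$ from \cref{lem:time-evo-eff-repr} together with the triangle inequality for the metric $d$.

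For the colouring I would form the \emph{conflict graph} $\mathcal G = (\Lambda, E')$ with $E' = \{\{x,y\} \colon 0 < d(x,y) < 2ar\}$. Since $\{y \in \Lambda \colon 0 < d(x,y) < 2ar\} = B^{o}_{2ar}(\{x\}) \setminus \{x\}$, every vertex of $\mathcal G$ has degree at most $\abs{B^{o}_{2ar}(\{x\})} - 1 \le L - 1$. A finite graph of maximum degree $L - 1$ is properly colourable with $L$ colours by the greedy algorithm — process the vertices in any fixed order and assign to each the least colour in $\{1 \dots L\}$ not yet used on an already-coloured neighbour; this never fails, because a vertex has at most $L - 1$ neighbours in total and hence at most $L - 1$ already-coloured neighbours. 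The resulting map $C \colon \Lambda \to \{1 \dots L\}$ is a proper colouring, so $C(x) = C(y)$ with $x \ne y$ forces $\{x,y\} \notin E'$, i.e.\ $d(x,y) \ge 2ar$, which is the claimed separation $d(x,y)/a \ge 2r$. Efficiency follows since the graph metric $d$ is computed by breadth-first search from each vertex, the adjacency lists of $\mathcal G$ are read off from the balls $B^{o}_{2ar}(\{x\})$, and greedy colouring is linear in the size of $\mathcal G$ — all polynomial in $n$.

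For the circuit I would order the sites so that sites of equal colour form consecutive blocks, i.e.\ so that the integers $a_{k}$ of the theorem statement exist (just list the colour classes one after another), and then invoke \cref{lem:time-evo-eff-repr} with this ordering to obtain the operators $V'_{j}$. Grouping the product $V'_{n} \dots V'_{1}$ into the blocks of equal colour yields $V = W_{L} \dots W_{1}$ with $W_{k} = V'_{a_{k}} \dots V'_{a_{k-1}+1}$ by associativity of composition alone — no commutation relation is invoked at this step. Next, by \cref{lem:time-evo-eff-repr}, $V'_{j}$ acts non-trivially only on $\bar R_{j} \subset B^{o}_{ar}(\{j\})$. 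If $C(j) = C(j')$ and $j \ne j'$ then $d(j,j') \ge 2ar$, and any common point $x \in B^{o}_{ar}(\{j\}) \cap B^{o}_{ar}(\{j'\})$ would give $d(j,j') \le d(j,x) + d(x,j') < 2ar$, a contradiction; hence $V'_{j}$ and $V'_{j'}$ have disjoint supports and in particular never act non-trivially on the same site (equivalently, the factors within each $W_{k}$ are pairwise commuting and act on pairwise disjoint regions). Finally, for a fixed site $x$, at most one of the operators of a given colour acts non-trivially on $x$, since their supports within that colour class are pairwise disjoint; as there are $L$ colours, at most $L$ of $V'_{1}, \dots, V'_{n}$ act non-trivially on $x$. (Directly: $V'_{j}$ touches $x$ only if $j \in B^{o}_{ar}(\{x\})$, and $\abs{B^{o}_{ar}(\{x\})} \le \abs{B^{o}_{2ar}(\{x\})} \le L$.)

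The only genuinely substantive step is the colouring, which is the standard fact that a metric space with uniformly bounded ball cardinalities admits a bounded proper colouring separating equally-coloured points by a prescribed distance, realised here by greedy colouring of the conflict graph; once this is in hand, every remaining assertion is immediate from the support bound $B^{o}_{ar}(\{j\})$ of \cref{lem:time-evo-eff-repr}. The one point demanding care is the matching of radii: the colouring separates equally-coloured sites by $2ar$, whereas each $V'_{j}$ is supported within a ball of radius $ar$ about $j$, and it is precisely this factor of two that makes the supports within a colour class disjoint.
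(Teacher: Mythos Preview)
Your proposal is correct and follows essentially the same approach as the paper: both construct the conflict graph $\{\{x,y\}\colon 0<d(x,y)<2ar\}$, observe that its maximum degree is $L-1$, apply a greedy colouring to obtain the map $C$, and then deduce disjointness of the supports $B^{o}_{ar}(\{j\})$ and $B^{o}_{ar}(\{j'\})$ for equally-coloured $j,j'$ from the separation $d(j,j')\ge 2ar$. Your write-up is in fact more detailed than the paper's --- you spell out the grouping into $W_{k}$, the triangle-inequality argument for disjointness, and the final count of operators touching a fixed site (for which you also give the direct alternative via $\abs{B^{o}_{ar}(\{x\})}\le L$), whereas the paper leaves these as implicit consequences.
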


\begin{proof}
  Consider a graph with sites given by $\Lambda$ and edges given by $E_{C} = \{\{x, y\} \colon x, y \in \Lambda, 0 < d(x, y) < 2ar\}$.
  The number of nearest neighbours (degree) of this graph is $L - 1$.
  A so-called greedy colouring of the graph $(\Lambda, E_{C})$, which can be computed in $\bigo{n L}$ time,\footnote{%
    A greedy colouring is obtained by picking a vertex which has not been assigned a colour and assigning the first colour which has not been assinged to any neighbour of the given vertex (neighbour in terms of $E_{C}$).
    See e.g.\ \textcite[Sec.~14.1, Heuristic~14.3, p.~363]{Bondy2008} or \textcite[Sec.~5.1.2, Fact~F13]{Gross2014}. 
  }
  has the property $d(x, y) < 2ar \Rightarrow C(x) \ne C(y)$.
  I.e.\ a greedy colouring already has the necessary property $C(x) = C(y) \Rightarrow d(x, y) \ge 2ar$. 
  Note that $B^{o}_{ar}(\{j\})$ and $B^{o}_{ar}(\{j'\})$ have an empty intersection if $d(j, j') \ge 2ar$ (\cref{lem:finite-metric-space}). Therefore, in this case, at most one of $V'_{j}$ and $V'_{j'}$ act non-trivially on any site. 
\end{proof}

\begin{rem}
  \label{rem:time-evo-eff-repr-thm}
  
  The operator $V$ from \cref{thm:time-evo-eff-repr} admits a \gls{peps} representation with the bond dimension mentioned in \cref{cor:time-evo-eff-peps}.

  Note that \cref{thm:time-evo-eff-repr} states that at most $L = \max_{j \in \Lambda} \abs{B^{o}_{2ar}(\{j\})}$ unitary operations act on a given site while we already know that this number is at most $n_{r} = \max_{j \in \Lambda} \abs{B^{o}_{ar}(\{j\})}$ (proof of \cref{cor:time-evo-eff-peps}). 
  This difference enables efficient computation of the colouring function which arranges the operations $V'_{j}$ into $L$ groups of non-overlapping operations. 
  
  Let $A$ act non-trivially only on site $j$.
  The advantage of \cref{thm:time-evo-eff-repr} over \cref{lem:time-evo-eff-repr} is that $V^{\adjm} A V$ now acts non-trivially at most on $n_{A} = \abs{B^{o}_{s}(\{j\})}$ sites where $s = 2arL = \poly(R)$, i.e.\ at most on $n_{A} = \poly(R)$ sites (use \cref{lem:op-support-growth} and \cref{lem:finite-metric-space}).
  This is an improvement over \cref{lem:time-evo-eff-repr} alone where $V^{\adjm} A V$ can (appear to) act non-trivially on the full system.
  The radius $s$ increases polynomially with $R$, i.e.\ polynomially with time.
  Below, we construct an improved representation where $s$ increases linearly with time (\cref{cor:timeevo-cubic-decomp-support}), which matches what is already known from Lieb--Robinson bounds (e.g.\ \cref{thm:timeevo-quasilocality}). 
\end{rem}

\begin{lem}
  \label{lem:op-support-growth}
  Let $A$ be an operator which acts non-trivially on $Y \subset \Lambda$.
  Let $V = V_{1} V_{2} \dots V_{b}$ where the $V_{k}$ ($k \in \{1 \dots b\}$) are unitary and $V_{k}$ acts non-trivially (at most) on $B^{o}_{r}(\{j_{k}\})$ with some $j_{k} \in \Lambda$ and $r > 0$.
  Let the $V_{k}$ commute pairwise, i.e.\ $[V_{k}, V_{l}] = 0$ for all $k, l \in \{1 \dots b\}$.
  Then $B = V^{\adjm} A V$ acts non-trivially at most on $B^{o}_{2r}(Y)$.
\end{lem}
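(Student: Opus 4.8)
The plan is to use the pairwise commutativity to cancel every factor $V_{k}$ whose support $B^{o}_{r}(\{j_{k}\})$ misses $Y$, and then bound the support of the surviving factors by a triangle-inequality argument. First I would split the index set $\{1 \dots b\}$ into $S = \{k : B^{o}_{r}(\{j_{k}\}) \cap Y \ne \emptyset\}$ and its complement. For $k \notin S$ the region $B^{o}_{r}(\{j_{k}\})$ on which $V_{k}$ acts non-trivially is disjoint from $Y$, so $V_{k}$ and $A$ act on disjoint tensor factors and therefore $[V_{k}, A] = 0$; by hypothesis $[V_{k}, V_{l}] = 0$ for all $l$ as well. Since all the $V_{k}$ commute pairwise, the product $V = V_{1} V_{2} \dots V_{b}$ is order-independent, and I can write $V = PQ$ where $P$ is the product of the factors with index in $S$ and $Q$ is the product of the factors with index not in $S$.

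Next I would observe that $Q$ commutes with $A$ (each of its factors does) and with $P$ and $P^{\adjm}$ (all the $V_{k}$ commute), so
\[
  B = V^{\adjm} A V = Q^{\adjm} P^{\adjm} A P\, Q = P^{\adjm} A P\, Q^{\adjm} Q = P^{\adjm} A P .
\]
The operator $P$ is a product of unitaries each acting non-trivially at most on some $B^{o}_{r}(\{j_{k}\})$ with $k \in S$, so $P$ and $P^{\adjm}$ act non-trivially at most on $W := \bigcup_{k \in S} B^{o}_{r}(\{j_{k}\})$; hence $B = P^{\adjm} A P$ acts non-trivially at most on $Y \cup W$.

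Finally I would verify the containment $Y \cup W \subset B^{o}_{2r}(Y)$. Since $r > 0$, each $y \in Y$ has $d(Y, y) = 0 < 2r$, so $Y \subset B^{o}_{2r}(Y)$. For $k \in S$ pick $y_{0} \in B^{o}_{r}(\{j_{k}\}) \cap Y$; then for any $x \in B^{o}_{r}(\{j_{k}\})$ the triangle inequality gives $d(Y, x) \le d(x, y_{0}) \le d(x, j_{k}) + d(j_{k}, y_{0}) < 2r$, so $x \in B^{o}_{2r}(Y)$, and thus $B^{o}_{r}(\{j_{k}\}) \subset B^{o}_{2r}(Y)$ (cf.\ \cref{lem:finite-metric-space}). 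Hence $W \subset B^{o}_{2r}(Y)$ and $B$ acts non-trivially at most on $B^{o}_{2r}(Y)$, as claimed. No Lieb--Robinson input is needed; the only mildly delicate point is the bookkeeping in the first step, i.e.\ making precise that ``acting non-trivially at most on a disjoint region'' yields exact commutation with $A$ and that the commuting factors outside $S$ can be collected and cancelled, but this is routine once the partition by $S$ is in place.
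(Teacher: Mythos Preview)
Your proposal is correct and follows essentially the same approach as the paper's proof: both drop the factors $V_{k}$ whose support misses $Y$ using the pairwise commutativity, and then bound the support of the remaining product via the triangle inequality on the open balls of radius $r$. Your version is slightly more explicit (defining $S$, $P$, $Q$ and the union $W$), but the underlying argument is identical.
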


\begin{proof}
  In the expression $B = V_{b}^{\adjm} \dots V_{1}^{\adjm} A V_{1} \dots V_{b}^{\adjm}$\,, all $V_{k}$ which commute with $A$ can be omitted (because a given $V_{k}$ commutes with all other $V_{l}$).
  In particular, all $V_{k}$ which do not act non-trivially on $Y$ can be omitted without changing~$B$. 
  Let $x \in \Lambda$ be a site on which $B$ acts non-trivially.
  If $x \in Y$ holds, $x \in B^{o}_{2r}(Y)$ holds as well and we are finished.
  In the following, let $x \not\in Y$. 
  Then, there is a $k \in \{1 \dots b\}$ such that $x \in B^{o}_{r}(\{j_{k}\})$.
  In addition, there is a $y \in B^{o}_{r}(\{j_{k}\}) \cap Y$ (otherwise, $V_{k}$ and $A$ commute and $V_{k}$ can be omitted from $B$).
  Note that $d(x, y) < 2r$ (the diameter of the given open ball).
  As a consequence, $x \in B^{o}_{2r}(Y)$ holds, which finishes the proof. 
\end{proof}

\subsection[Hypercubic lattice]{Efficient representation of time evolution: Hypercubic lattice%
  \label{sec:timeevo-repr-cubic-lattice}}

In this section, we construct a representation of time evolution under a local Hamiltonian which has a smaller bond dimension than the representation presented above.
In order to split the complete time evolution into independent parts in a more efficient way, we consider a cubic lattice $\Lambda$ of finite dimension $\eta$ with $L$ sites in each direction:
\begin{align}
  \label{eq:def-cubic-lattice}
  \Lambda = \{ (x_{1}, \dots x_{\eta}) \colon x_{i} \in [1:L], i \in [1:\eta] \}
\end{align}
Here, we used the notation $[1:L] = \{1, 2, \dots, L\}$ to denote a set of consecutive integers.
The total number of sites is $n = \abs\Lambda = L^{\eta}$. 
In this section, $\floorofa = \floor a$ denotes the interaction range rounded down. 

Formally, we use the Cartesian product $A \times B \times C = \{(a, b, c)\colon a \in A, b \in B, c \in C\}$ of sets $A$, $B$ and $C$.
Assuming suitable equivalence relations, the Cartesian product becomes associative, i.e.\ $A \times (B \times C) = (A \times B) \times C = A \times B \times C$. 
The Cartesian product has the basic property $(A \times B) \cap (C \times D) = (A \cap C) \times (B \cap D)$.
Powers of sets are given by the cartesian product, e.g.\ $[1:L]^{2} = [1:L] \times [1:L]$, and this allows us to write $\Lambda = [1:L]^{\eta} \subset \Z^{\eta}$ where $\Z$ is the set of all integers.

We assume a metric $d$ on $\Lambda$ which satisfies the property
\begin{align}
  \label{eq:metric-normalized}
  \abs{x_{i} - y_{i}} \le d(x, y)
  \quad \forall \;
  i \in [1:\eta]
\end{align}
For example, the metric induced by the vector-$p$ norm, $d(x, y) = [\sum_{i=1}^{\eta} \abs{x_{i} - y_{i}}^{p}]^{1/p}$ with $p \in [1, \infty]$, has this property.
Below, we partition the lattice into cubic sets defined as follows:
\begin{defn}
  \label{defn:metric-cube}

  Two points $x, y \in \Z^{\eta}$ define the cube $C(x, y) = \Lambda \cap \bigtimes_{i=1}^{\eta} [x_{i}:y_{i}]$.
  For a non-negative integer $r$, the enlarged cube is defined as $C_{r}(C(x, y)) = C(x - rv, y + rv)$ where $v = (1, 1, \dots, 1) \in \Z^{\eta}$. 
\end{defn}

We employ the following notation for Cartesian products: Let $c, d \in \Z$ and $x, y \in \Z^{\eta-1}$, then
\begin{align}
  \nonumber
  & [c:d]_{i} \times C(x, y) = \\
  & [x_{1}:y_{1}] \times \dots \times [x_{i-1}:y_{i-1}]
  \times [c:d] \times
    [x_{i}:y_{i}] \times \dots \times [x_{\eta-1}:y_{\eta-1}].
    \label{eq:cart-product-at-i}
\end{align}

We partition the full lattice $\Lambda$ into cubes $Q_{m}$ of size $\Omega$ and aim at splitting the full time evolution into independent evolutions on the cubes $Q_{m}$.
\Cref{fig:timeevo-cubic-partition} illustrates the partition $\Lambda = \bigdotcup_{m} Q_{m}$ and outlines the way forward.
The next \namecref{lem:lattice-partition-coupling-terms} identifies all local terms $h_{Z}$ which couple at least two cubes $Q_{m}$ and $Q_{m'}$:

\begin{figure}[t]
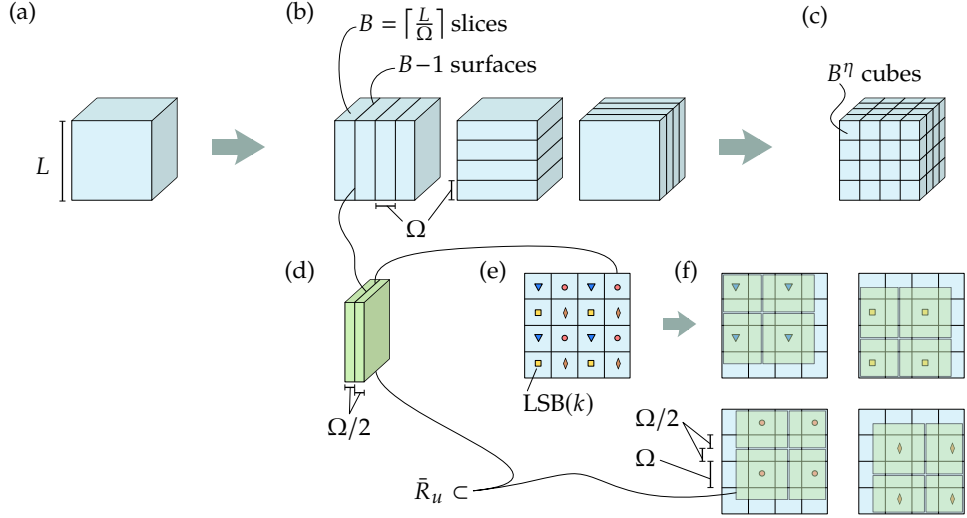

  \centering
  \isvgc{graphics/local_time_evolution_repr_3d}
  \caption{%
    Decomposition of an $\eta$-dimensional hypercube, illustrated for $\eta = 3$.
    (a)~The lattice $\Lambda$ lives inside a cube of edge length $L$.
    (b)~Along each direction, the cube is split into $B = \ceila{L/\Omega}$ slices of width $\Omega$ (\cref{lem:lattice-partition-coupling-terms}).
    (c)~As a result, the cube is split into $B^{\eta}$ smaller cubes $Q_{m}$ of edge length $\Omega$ (\cref{eq:lattice-partition-cubes}).
    (d)~Centered around each surface from~(b), there is a slice of width $\Omega$.
    (e)~Each surface from~(b) is split into $B^{\eta-1}$ surface segments $\tilde Q_{k}$ of edge length $\Omega$ (\cref{lem:coupling-term-partition}).
    The surface segments are divided into $2^{\eta-1}$ groups of non-neighbouring surfaces as specified by $\lsb(k)$ and indicated by the symbols (\cref{lem:def-time-evo-decomp-cubic}).
    (f)~The sets $\bar R_{u}$ do not overlap for surface segments with the same symbol (the same value of $\lsb(k)$) (\cref{lem:correction-support-overlap}). 
  }
  \label{fig:timeevo-cubic-partition}
\end{figure}

\begin{lem}
  \label{lem:lattice-partition-coupling-terms}

  Let $\Omega$ be a positive integer and $B = \ceil{L / \Omega}$.
  For $m \in [1:B]^{\eta}$, set
  \begin{align}
    \label{eq:lattice-partition-cubes}
    Q_{m} &= C(x_{m}, y_{m}),
    &
      x_{m} &= [\Omega (m_{i} - 1) + 1]_{i=1}^{\eta},
    &
      y_{m} &= [\Omega m_{i}]_{i=1}^{\eta}.
  \end{align}
  These cubes partition the lattice, $\Lambda = \bigdotcup_{m} Q_{m}$.
  For $i \in [1:\eta]$ and $j \in [1:B-1]$, set
  \begin{align}
    \label{eq:lattice-bipartition}
    A_{ij} &= [1:\Omega j]_{i} \times [1:L]^{\eta-1},
    &
    B_{ij} &= [\Omega j+1:L]_{i} \times [1:L]^{\eta-1}
  \end{align}
  and
  \begin{align}
    \label{eq:lattice-bipartition-terms}
    S_{ij} &= \{ Z \subset \Lambda \colon
             h_{Z} \ne 0, Z \cap A_{ij} \ne \emptyset, Z \cap B_{ij} \ne \emptyset \}.
  \end{align}
  The complete Hamiltonian is given by $H = H_{Q} + H_{S}$ where $H_{Q}$ contains all terms which act within one of the cubes $Q_{m}$ and $H_{S}$ contains all terms which couple at least two cubes:
  \begin{align}
    \label{eq:lattice-partition-H}
    H_{Q} &= \sum_{m \in [1:B]^{\eta}} \sum_{Z \subset Q_{m}} h_{Z},
    &
      H_{S} &= \sum_{Z \in S} h_{Z},
    &
      S = \bigcup_{i=1}^{\eta} \bigcup_{j=1}^{B-1} S_{ij}.
  \end{align}
\end{lem}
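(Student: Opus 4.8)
The plan is to prove the two claims of the \namecref{lem:lattice-partition-coupling-terms} in turn: that the cubes $Q_{m}$ partition $\Lambda$, and that the term set $S = \bigcup_{i,j} S_{ij}$ is exactly the collection of nonzero local terms $h_{Z}$ whose support $Z$ meets two or more of the cubes $Q_{m}$. Granting these, the decomposition $H = H_{Q} + H_{S}$ is immediate: $H = H_{\Lambda} = \sum_{Z \subset \Lambda,\, h_{Z} \ne 0} h_{Z}$, and every such $Z$ is either contained in a single cube (contributing to $H_{Q}$) or couples at least two cubes (contributing to $H_{S}$), with no double counting because the $Q_{m}$ are pairwise disjoint and $S$ is a set union.

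For the partition, I would fix $x = (x_{1}, \dots, x_{\eta}) \in \Lambda$ and note that $x \in Q_{m}$ holds if and only if $\Omega(m_{i}-1) + 1 \le x_{i} \le \Omega m_{i}$ for all $i$ (by \eqref{eq:lattice-partition-cubes}), which is equivalent to $m_{i} = \lfloor (x_{i}-1)/\Omega \rfloor + 1$. Since $x_{i} \in [1:L]$ one has $0 \le \lfloor (x_{i}-1)/\Omega \rfloor \le \lfloor (L-1)/\Omega \rfloor = \lceil L/\Omega \rceil - 1 = B-1$ (the last identity is an elementary floor/ceiling computation, valid also when $\Omega \ge L$, in which case $B = 1$), so $m \in [1:B]^{\eta}$ and $m$ is uniquely determined by $x$. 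Existence of $m$ gives $\Lambda = \bigcup_{m} Q_{m}$; uniqueness gives $Q_{m} \cap Q_{m'} = \emptyset$ for $m \ne m'$; and each $Q_{m}$ is a subset of $\Lambda$ by its definition (\cref{defn:metric-cube}) as an intersection with $\Lambda$. Hence $\Lambda = \bigdotcup_{m} Q_{m}$.

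For the characterisation of coupling terms I would establish both inclusions, using throughout that a point $z$ lies in the cube whose $i$-th index equals $\lfloor (z_{i}-1)/\Omega \rfloor + 1$, and that $z \in A_{ij}$ (resp.\ $z \in B_{ij}$) means precisely $z_{i} \le \Omega j$ (resp.\ $z_{i} \ge \Omega j + 1$), by \eqref{eq:lattice-bipartition}. If $Z \in S_{ij}$, pick $x \in Z \cap A_{ij}$ and $y \in Z \cap B_{ij}$; then the $i$-th cube-index of $x$ is at most $\lfloor (\Omega j - 1)/\Omega \rfloor + 1 = j$ while that of $y$ is at least $\lfloor \Omega j/\Omega \rfloor + 1 = j+1$, so $x$ and $y$ lie in distinct cubes and $Z$ is a coupling term. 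Conversely, if $x, y \in Z$ lie in distinct cubes $Q_{m}$ and $Q_{m'}$, choose a direction $i$ with $m_{i} \ne m_{i}'$ and, relabelling if necessary, assume $m_{i} < m_{i}'$; setting $j = m_{i}$ one has $1 \le j \le B - 1$ (from $m_{i} < m_{i}' \le B$), $x_{i} \le \Omega m_{i} = \Omega j$, and $y_{i} \ge \Omega(m_{i}' - 1) + 1 \ge \Omega j + 1$, so $x \in A_{ij}$, $y \in B_{ij}$, and $Z \in S_{ij} \subset S$. Note that neither direction assumes $Z$ is connected, which matters because one term may straddle several of the hyperplane cuts simultaneously.

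Putting the pieces together: a nonzero term $h_{Z}$ that is \emph{not} a coupling term has $Z$ meeting only one cube $Q_{m}$, and since $Z \subset \Lambda = \bigdotcup_{m} Q_{m}$ this forces $Z \subset Q_{m}$; thus the non-coupling terms are exactly those appearing in $H_{Q} = \sum_{m}\sum_{Z \subset Q_{m}} h_{Z}$, while the coupling terms are exactly those with $Z \in S$, hence those appearing in $H_{S}$. As these two families are complementary and disjointly indexed, $H = H_{Q} + H_{S}$, which completes the proof. I expect the only delicate point — modest as it is — to be the index bookkeeping, especially confirming $\lfloor (L-1)/\Omega \rfloor = B-1$ and that any straddled cut can be labelled by some $j \in [1:B-1]$; the remainder is routine manipulation of floor functions.
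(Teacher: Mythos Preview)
Your proof is correct and follows essentially the same approach as the paper's: the paper argues the partition is ``directly implied'' and then shows the converse direction (a term not in $H_{Q}$ lies in some $S_{ij}$) via exactly the same index argument you give, picking $i$ with $m_{i}\ne m_{i}'$ and setting $j=m_{i}$. Your version is in fact slightly more complete, since you also prove the forward inclusion $S_{ij}\Rightarrow$ coupling term (which the paper leaves implicit) and spell out the floor/ceiling bookkeeping for the partition.
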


\begin{proof}
  The definition directly implies that the cubes $Q_{m}$ partition the lattice $\Lambda$ (any two cubes do not intersect and the union of all cubes equals the complete lattice).
  As the sets $Q_{m}$ are disjoint, $H_{Q}$ contains each local term from $H$ at most once.
  It remains to show that $H_{S}$ contains exactly once all local terms which are not in $H_{Q}$.
  Let $Z \subset \Lambda$ be such that $h_{Z} \ne 0$ is not in $H_{Q}$, i.e.\ there are $m, m' \in [1:B]^{\eta}$ with $m \ne m'$ such that both $Z \cap Q_{m} \ne \emptyset$ and $Z \cap Q_{m'} \ne \emptyset$.
  There is an $i \in [1:\eta]$ such that $m_{i} \ne m'_{i}$.
  Without loss of generality, assume that $m_{i} < m'_{i}$ (otherwise, exchange $m$ and $m'$).
  Let $a \in Z \cap Q_{m}$, then $a_{i} \le m_{i} \Omega$ holds.
  Let $b \in Z \cap Q_{m'}$, then $b_{i} \ge \Omega(m'_{i}-1) + 1 \ge \Omega m_{i} + 1$ holds.
  Set $j = m_{i}$, then $a \in Z \cap A_{ij}$ and $b \in Z \cap B_{ij}$ and this shows that both intersections are non-empty, i.e.\ $Z \in S_{ij} \subset S$.
  This shows that the local term $h_{Z}$, which is not in $H_{Q}$, appears in $H_{S}$ exactly once.
\end{proof}

The last \namecref{lem:lattice-partition-coupling-terms} has identified the local terms which we want to remove from $H$.
The next \namecref{lem:lattice-partition-coupling-term-location} determines the possible extent of these local terms:

\begin{lem}
  \label{lem:lattice-partition-coupling-term-location}

  Let $Z \in S_{ij}$. Then $Z \subset I_{ij} \times [1:L]^{\eta-1}$ where the interval $I_{ij} = [\Omega j - \floorofa + 1: \Omega j + \floorofa]_{i}$ is along dimension $i$ (cf.\ \cref{eq:cart-product-at-i}). 
\end{lem}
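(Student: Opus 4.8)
The plan is to exploit the strictly finite interaction range together with the normalization property \cref{eq:metric-normalized} of the metric. First I would recall that $Z \in S_{ij}$ entails $h_{Z} \ne 0$, so by the definition of $a$ in \cref{eq:timeevo-lr-J-a} we have $\diam(Z) \le a$. Since every coordinate of every lattice point is an integer and $\abs{x_{i} - y_{i}} \le d(x, y)$ holds by \cref{eq:metric-normalized}, this yields $\abs{x_{i} - y_{i}} \le \floorofa$ for all $x, y \in Z$: the difference along direction $i$ is an integer that is bounded by $a$, hence bounded by $\floor a = \floorofa$.

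Next I would use the two witness points guaranteed by $Z \in S_{ij}$ (see \cref{eq:lattice-bipartition,eq:lattice-bipartition-terms}): pick $p \in Z \cap A_{ij}$ and $q \in Z \cap B_{ij}$. By the definitions of $A_{ij}$ and $B_{ij}$ we have $p_{i} \le \Omega j$ and $q_{i} \ge \Omega j + 1$. For an arbitrary $z \in Z$, combining $\abs{z_{i} - p_{i}} \le \floorofa$ with $p_{i} \le \Omega j$ gives $z_{i} \le \Omega j + \floorofa$, and combining $\abs{z_{i} - q_{i}} \le \floorofa$ with $q_{i} \ge \Omega j + 1$ gives $z_{i} \ge \Omega j + 1 - \floorofa$. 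Hence $z_{i} \in [\Omega j - \floorofa + 1 : \Omega j + \floorofa]$, i.e.\ $z_{i} \in I_{ij}$. The remaining coordinates of $z$ lie trivially in $[1:L]$, so $z \in I_{ij} \times [1:L]^{\eta-1}$ in the sense of \cref{eq:cart-product-at-i}; as $z \in Z$ was arbitrary, $Z \subset I_{ij} \times [1:L]^{\eta-1}$.

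There is essentially no obstacle here; the argument is a short exercise. The one point worth stating explicitly is the passage from the real-valued bound $\diam(Z) \le a$ to the integer bound $\abs{z_{i} - p_{i}} \le \floorofa$, which is immediate once one observes that $z_{i} - p_{i} \in \Z$ but which explains why the floor function (and not $a$ itself) appears in $I_{ij}$. I would also remark that $I_{ij}$ consists of $2\floorofa$ consecutive integers, a width independent of $j$; this uniform bound is precisely what makes the subsequent grouping of the coupling terms in \cref{lem:coupling-term-partition} possible.
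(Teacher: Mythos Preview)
Your argument is correct and follows essentially the same route as the paper's proof: both use $\diam(Z)\le a$ together with the coordinate bound \cref{eq:metric-normalized} and the two witness points in $A_{ij}$ and $B_{ij}$ to squeeze the $i$-th coordinate into $I_{ij}$. The only difference is packaging: the paper phrases the steps via the set inclusions $Z\subset B^{c}_{a}(A_{ij})\subset C_{\floorofa}(A_{ij})$ and $Z\subset B^{c}_{a}(B_{ij})\subset C_{\floorofa}(B_{ij})$ (invoking auxiliary lemmas on closed balls and enlarged cubes) and then intersects, whereas you unpack this directly on coordinates---which makes your explanation of why $\floorofa$ rather than $a$ appears particularly transparent.
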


\begin{proof}
  Recall that $\diam(Z) \le a$ as $h_{Z} \ne 0$. 
  The property $Z \cap A_{ij} \ne \emptyset$
  implies
  $Z \subset B^{c}_{a}(A_{ij}) \subset C_{\floorofa}(A_{ij}) = [1:\Omega j + \floorofa]_{i} \times [1:L]^{\eta-1}$
  (\cref{lem:metric-set-diam-closed-ball,lem:metric-cube-open-ball}). 
  In the same way, $Z \cap B_{ij} \ne \emptyset$ implies
  $Z \subset B^{c}_{a}(B_{ij}) \subset C_{\floorofa}(B_{ij}) = [\Omega j - \floorofa + 1:L]_{i} \times [1:L]^{\eta-1}$. 
  Combining both provides
  $Z \subset B^{c}_{a}(A_{ij}) \cap B^{c}_{a}(B_{ij}) \subset I_{ij} \times [1:L]^{\eta-1}$ (\cref{lem:metric-cube-intersection}).
\end{proof}

The local terms $Z \in S_{ij} \subset S$, which we aim at removing, generally cover the full volume described in the last \namecref{lem:lattice-partition-coupling-term-location}; if we removed all $Z \in S_{ij}$ with a single application of \cref{lem:osborne-hamilton-perturbation}, the resulting correction $V'$ would act on a large fraction of the lattice, which we want to avoid.
In addition, a given local term $Z$ may be a member of more than one of the sets $S_{ij}$. 
We construct a partition of the set $S$ which addresses these issues:

\begin{lem}
  \label{lem:coupling-term-partition}

  Let $[1:L]^{\eta-1} = \bigdotcup_{k \in [1:B]^{\eta-1}} \tilde Q_{k}$ a partition into cubes as in \cref{eq:lattice-partition-cubes}.\footnote{%
    I.e.\  $\tilde Q_{k} = C(x_{k}, y_{k})$, $x_{k} = [\Omega (k_{i} - 1) + 1]_{i=1}^{\eta-1}$ and $y_{k} = [\Omega k_{i}]_{i=1}^{\eta-1}$.
  }
  For $i \in [1:\eta]$, $j \in [1:B-1]$ and $k \in [1:B]^{\eta-1}$, let
  \begin{align}
    \label{eq:coupling-term-partition-pre}
    S_{ijk} &= \{ Z \in S_{ij} \colon Z \cap Q_{ijk} \ne \emptyset \},
    &
      Q_{ijk} = I_{ij} \times \tilde Q_{k}
  \end{align}
  where $I_{ij}$ is from \cref{lem:lattice-partition-coupling-term-location}. 
  Then $S_{ij} = \bigcup_{k \in [1:B]^{\eta-1}} S_{ijk}$ holds and $Z \in S_{ijk}$ implies $Z \subset I_{ij} \times C_{\floorofa}(\tilde Q_{k})$.
  Subsets $S'_{ijk} \subset S_{ijk}$ which partition $S$, $S = \bigdotcup S'_{ijk}$, can be chosen in $\bigo{n^{2}}$ computational time. 
\end{lem}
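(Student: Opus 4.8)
The plan is to establish the three assertions in the order stated, the third drawing on the first. \emph{The covering $S_{ij}=\bigcup_{k}S_{ijk}$.} The inclusion $\supseteq$ is immediate from \eqref{eq:coupling-term-partition-pre}, since each $S_{ijk}\subseteq S_{ij}$. For the reverse inclusion I would take $Z\in S_{ij}$; as $h_{Z}\ne0$ the set $Z$ is non-empty, and by \cref{lem:lattice-partition-coupling-term-location} it lies in $I_{ij}\times[1:L]^{\eta-1}$, so the transverse part of any chosen site $z\in Z$ (its coordinates other than the $i$-th) is a point of $[1:L]^{\eta-1}$. Because $\{\tilde Q_{k}\}_{k}$ partitions $[1:L]^{\eta-1}$, that point lies in a unique $\tilde Q_{k}$, whence $z\in I_{ij}\times\tilde Q_{k}=Q_{ijk}$ and $Z\cap Q_{ijk}\ne\emptyset$, i.e.\ $Z\in S_{ijk}$.

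\emph{The containment $Z\in S_{ijk}\Rightarrow Z\subset I_{ij}\times C_{\floorofa}(\tilde Q_{k})$.} I would fix $Z\in S_{ijk}$ together with a witness $w\in Z\cap Q_{ijk}$. For an arbitrary $z\in Z$ one has $d(z,w)\le\diam(Z)\le a$ (valid since $h_{Z}\ne0$), so the property \eqref{eq:metric-normalized} gives $\abs{z_{l}-w_{l}}\le a$ for every coordinate $l$, and since all coordinates are integers this forces $\abs{z_{l}-w_{l}}\le\floorofa$. Along the transverse coordinates the corresponding part of $w$ lies in $\tilde Q_{k}$, so the corresponding part of $z$ lies in the enlarged cube $C_{\floorofa}(\tilde Q_{k})$ (and in $[1:L]^{\eta-1}$, being a lattice point); along the $i$-th coordinate \cref{lem:lattice-partition-coupling-term-location} already gives $z_{i}\in I_{ij}$, since $Z\in S_{ij}$. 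Letting $z$ range over $Z$ proves the claim. One could alternatively argue through $Z\subset B^{c}_{a}(Q_{ijk})$, \cref{lem:metric-cube-open-ball}, and the Cartesian-product intersection identity, but the coordinatewise version makes transparent why the enlargement is by $\floorofa$ and not by $a$.

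\emph{The disjoint refinement $S=\bigdotcup S'_{ijk}$ with $S'_{ijk}\subset S_{ijk}$.} By \cref{lem:lattice-partition-coupling-terms} we have $S=\bigcup_{i,j}S_{ij}$, and combined with the covering just proved, $S=\bigcup_{i,j,k}S_{ijk}$; hence every $Z\in S$ belongs to at least one, and only finitely many, of the sets $S_{ijk}$. I would fix any total order on the triples $(i,j,k)$, assign each $Z\in S$ to the least triple with $Z\in S_{ijk}$, and let $S'_{ijk}$ collect the terms so assigned. Then $S'_{ijk}\subseteq S_{ijk}$ by construction, the $S'_{ijk}$ are pairwise disjoint as the fibres of the assignment map, and their union is $S$, so $S=\bigdotcup S'_{ijk}$. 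For the running time, recall that there are at most $\mcz n=\bigo{n}$ local terms $h_{Z}\ne0$, each supported on at most $\mcy$ sites of diameter at most $a$; for one such $Z$, membership in $S$ and the set of admissible triples are obtained in $\bigo{1}$ time from the slab-indices and cube-indices of its sites along each of the $\eta$ directions, and only $\bigo{1}$ triples can occur because the $i$-extent of $Z$ is at most $a$ and each of its at most $\mcy$ sites has its transverse part in a single $\tilde Q_{k}$. This gives $\bigo{n}$ time overall, which in particular meets the stated $\bigo{n^{2}}$ bound.

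The step needing genuine care is the last: one must check that the greedy assignment really produces a partition — immediate, since it is defined through a function — and that the number of candidate triples per term stays $\bigo{1}$ as the lattice grows. The one substantive (though still routine) ingredient elsewhere is the appeal to coordinate integrality in the containment step, which is precisely what pins the enlargement down to $\floorofa$.
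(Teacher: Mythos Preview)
Your proof is correct and follows essentially the same approach as the paper: the covering via the partition of $I_{ij}\times[1:L]^{\eta-1}$, the containment via the diameter bound intersected with $Z\subset I_{ij}\times[1:L]^{\eta-1}$ (the paper packages your coordinatewise step into auxiliary lemmas on cubes and closed balls, precisely the alternative route you mention), and the greedy disjoint refinement. Your running-time analysis is in fact sharper than the paper's stated $\bigo{n^{2}}$: by observing that each $Z$ admits only $\bigo{1}$ candidate triples you obtain $\bigo{n}$, which of course still meets the claimed bound.
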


\begin{proof}
  The equality $S_{ij} = \bigcup_{k \in [1:B]^{\eta-1}} S_{ijk}$ holds because the $Q_{ijk}$ partition $I_{ij} \times [1:L]^{\eta-1}$, which is a superset of all $Z \in S_{ij}$ (\cref{lem:lattice-partition-coupling-term-location});
  this equality also implies $S = \bigcup_{ijk} S_{ijk}$.

  Let $Z \in S_{ijk}$. 
  This implies $h_{Z} \ne 0$ and $\diam(Z) \le a$.
  We have $Z \subset B^{c}_{a}(Q_{ijk}) \subset C_{\floorofa}(Q_{ijk}) = C_{\floorofa}(I_{ij}) \times C_{\floorofa}(\tilde Q_{k})$ (\cref{lem:metric-set-diam-closed-ball,lem:metric-cube-open-ball}, \cref{defn:metric-cube}).
  Combining this with $Z \subset I_{ij} \times [1:L]^{\eta-1}$ provides $Z \subset [C_{\floorofa}(I_{ij}) \times C_{\floorofa}(\tilde Q_{k})] \cap [I_{ij} \times [1:L]^{\eta-1}] = I_{ij} \times C_{\floorofa}(\tilde Q_{k})$.

  In order to obtain suitable subsets $S'_{ijk} \subset S_{ijk}$, choose any fixed order for the sets $S_{ijk}$ and remove all elements from $S_{ijk}$ which are already an element of a previous $S_{ijk}$.
  This takes computational time $\bigo{n^{2}}$ where $n = L^{\eta} = \abs{\Lambda}$. 
\end{proof}

We aim at removing all interactions in a set $S'_{ijk}$ with a single application of \cref{lem:osborne-hamilton-perturbation}.
For this purpose, we define a sequence $H_{0}, \dots, H_{\Xi}$ of Hamiltonians where $H_{0} = H_{Q}$, $H_{\Xi} = H$.
Consecutive Hamiltonians in this sequence differ precisely by the local terms contained in one of the sets $S'_{ijk}$. 
In order to define this sequence of Hamiltonians, we define a specific order of the sets $S'_{ijk}$ which also proves to be advantageous below.

\begin{defn}
  \label{def:lsb-vec-function}
  For $k \in [1:B]^{\eta-1}$, let $b = \lsb(k) \in [0:1]^{\eta-1}$ be the vector whose component $b_{i}$ is the least significant bit of $k_{i}$; i.e.\ $b_{i} = 1$ ($b_{i} = 0$) if $k_{i}$ is odd (even).
\end{defn}

\begin{lem}
  \label{lem:coupling-term-order}

  Let $i \in [1:\eta]$, $j \in [1:B-1]$, $k \in [1:B]^{\eta-1}$ and $\Xi = \eta(B-1)B^{\eta-1}$.  
  Let $\omega\colon [1:\Xi] \to [1:\eta]\times[1:B-1]\times[1:B]^{\eta-1}$ be a bijective function such that its inverse $\omega^{-1}$ maps all $(i, j, k)$ with the same value of $(i, \lsb(k))$ to consecutive integers from $[1:\Xi]$.%
  \footnote{%
    For example, $\omega^{-1}(i, j, k)$ can be defined as position the of $(i, \lsb(k), j, k)$ within the lexicographically ordered sequence of all $(i, \lsb(k), j, k)$.
  }
  For $u \in [1:\Xi]$, set
  \begin{align}
    \label{eq:coupling-term-ordering-set}
    \Sigma_{u} &= \Sigma_{u-1} \dotcup S'_{\omega(u)},
    & F_{u} &= \sum_{Z \in S'_{\omega(u)}} h_{Z}
  \end{align}
  where $\Sigma_{0} = \bigdotcup_{m\in[1:B]^{\eta}} \{Z \subset Q_{m}\colon h_{Z} \ne 0\}$.
  For $u \in [0:\Xi]$ and subsets $R_{u} \subset \Lambda$, set
  \begin{align}
    \label{eq:hamiltonian-sequence}
    H_{u} &= \sum_{Z \in \Sigma_{u}} h_{Z},
    & H'_{u} &= \sum_{\substack{Z \in \Sigma_{u}\\Z\subset \bar R_{u}}} h_{Z}
  \end{align}
  Then, $H_{0} = H_{Q}$, $H_{\Xi} = H$ and $H_{u} - H_{u-1} = F_{u}$ ($u \in [1:\Xi]$).
\end{lem}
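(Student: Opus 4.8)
The plan is to establish the three asserted identities by bookkeeping which local terms $h_{Z}$ enter each $H_{u}$, relying entirely on the partition statements already proved in \cref{lem:lattice-partition-coupling-terms,lem:coupling-term-partition}. First I would note that a bijection $\omega$ with the required property exists: group the triples $(i,j,k)$ by the value of $(i,\lsb(k))$, order the groups and their members arbitrarily, and let $\omega^{-1}$ assign consecutive integers in that order (the lexicographic recipe in the footnote is one concrete choice). The particular ordering property of $\omega$ is irrelevant to the present claims and is only exploited later; here any bijection onto $[1:\eta]\times[1:B-1]\times[1:B]^{\eta-1}$ would do.

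The identity $H_{0}=H_{Q}$ is immediate from the definitions: unwinding $\Sigma_{0}=\bigdotcup_{m}\{Z\subset Q_{m}\colon h_{Z}\ne 0\}$ gives $H_{0}=\sum_{Z\in\Sigma_{0}}h_{Z}=\sum_{m}\sum_{Z\subset Q_{m}}h_{Z}=H_{Q}$, vanishing terms being free to include or omit. For the increment $H_{u}-H_{u-1}=F_{u}$ the key point is disjointness. By \cref{lem:lattice-partition-coupling-terms} every $Z$ with $h_{Z}\ne 0$ lies in exactly one of the collections ``$Z$ contained in some cube $Q_{m}$'' and ``$Z\in S$'', so $\Sigma_{0}\cap S=\emptyset$; by \cref{lem:coupling-term-partition} the sets $S'_{ijk}$ are pairwise disjoint with $\bigdotcup_{ijk}S'_{ijk}=S$. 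Since $\Sigma_{u-1}=\Sigma_{0}\cup\bigcup_{v=1}^{u-1}S'_{\omega(v)}$ (immediate induction from \cref{eq:coupling-term-ordering-set}) and $\omega$ is injective, it follows that $\Sigma_{u-1}\cap S'_{\omega(u)}=\emptyset$, so $\Sigma_{u}=\Sigma_{u-1}\dotcup S'_{\omega(u)}$ really is a disjoint union and $H_{u}-H_{u-1}=\sum_{Z\in S'_{\omega(u)}}h_{Z}=F_{u}$.

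Finally, because $\omega$ is a bijection onto all $\Xi=\eta(B-1)B^{\eta-1}$ triples, $\bigcup_{u=1}^{\Xi}S'_{\omega(u)}=\bigdotcup_{ijk}S'_{ijk}=S$, so $\Sigma_{\Xi}=\Sigma_{0}\cup S$ contains every $Z$ with $h_{Z}\ne 0$ (the within-cube terms together with all coupling terms), and hence $H_{\Xi}=\sum_{Z\colon h_{Z}\ne 0}h_{Z}=H$. I do not expect any genuine obstacle here: the whole content is the disjointness bookkeeping, and the only point that needs a moment's care --- that $\Sigma_{0}$ and $S$ are disjoint and that the $S'_{ijk}$ exhaust $S$ --- is exactly what the two preceding lemmas supply.
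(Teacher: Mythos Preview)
Your proposal is correct and follows essentially the same approach as the paper's proof: both reduce the claims to the disjointness bookkeeping supplied by \cref{lem:lattice-partition-coupling-terms} (giving $\Sigma_{0}\cap S=\emptyset$ and $H_{0}=H_{Q}$) and \cref{lem:coupling-term-partition} (giving the partition $S=\bigdotcup_{ijk}S'_{ijk}$), then use bijectivity of $\omega$ to conclude. Your version is slightly more explicit about the inductive form of $\Sigma_{u-1}$ and about the existence of $\omega$, but the argument is the same.
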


\begin{proof}
  The sets $\{Z \subset Q_{m}\colon h_{Z} \ne 0\}$ are disjoint because the sets $Q_{m}$ are disjoint (\cref{lem:lattice-partition-coupling-terms}). 
  Let $E_{H} = \{ Z \subset \Lambda \colon h_{Z} \ne 0 \}$. 
  \Cref{lem:lattice-partition-coupling-terms} implies $H_{0} = H_{Q}$ and $E_{H} = S \dotcup \Sigma_{0}$.
  $S = \bigdotcup_{u=1}^{\Xi} S'_{\omega(u)}$ is provided by in \cref{lem:coupling-term-partition} and implies $E_{H} = \Sigma_{0} \,\dotcup\, \bigdotcup_{u} S'_{\omega(u)}$, $\Sigma_{u-1}\cap S'_{\omega(u)} = \emptyset$ as well as $H_{\Xi} - H_{0} = H_{S}$, i.e.\ $H_{\Xi} = H_{Q} + H_{S} = H$. 
  $F_{u} = H_{u} - H_{u-1}$ is implied by the definitions. 
\end{proof}

The correction for removing the interactions from $S'_{\omega(u)}$ is to be supported on $\bar R_{u}$ and the choice of $R_{u} \subset \Lambda$ is still open. 
The next \namecref{lem:correction-support-overlap} defines the sets $R_{u}$ and discusses whether two given $\bar R_{u}$ overlap. 

\begin{lem}
  \label{lem:correction-support-overlap}
  
  Let $\Omega$ be an even integer and $\Omega > 4\floorofa$.
  Let $u \in [1:\Xi]$ and set
  \begin{align}
    Y_{u} &= \bigcup_{Z \in S'_{\omega(u)}} Z,
    & R_{u} &= B^{o}_{r}(Y_{u}),
    & r = \Omega/2 - 2\floorofa.
  \end{align}
  Let $i \in [1:\eta]$,\, $j, j' \in [1:B-1]$,\, $k, k' \in [1:B]^{\eta-1}$, $u = \omega^{-1}(i, j, k)$ and $u' = \omega^{-1}(i, j', k')$.
  The set $\bar R_{u}$ is at most $\bar R_{u} \subset [\Omega(j-\frac12)+1:\Omega(j+\frac12)]_{i} \times C_{\Omega/2}(\tilde Q_{k})$. 
  $\bar R_{u} \cap \bar R_{u'} = \emptyset$ holds if (i)~$j \ne j'$ or (ii)~$k \ne k'$ and $\lsb(k) = \lsb(k')$.   
\end{lem}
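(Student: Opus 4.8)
The plan is to first establish the containment $\bar R_u \subset \hat R_u$ asserted in the first sentence of the lemma, and then read off both disjointness statements from the explicit shape of $\hat R_u$ and $\hat R_{u'}$. Write $(i,j,k) = \omega(u)$ and $(i,j',k') = \omega(u')$ (the first coordinates agree by hypothesis). Since $\Omega$ is even with $\Omega > 4\floorofa$, both $\Omega/2$ and $r = \Omega/2 - 2\floorofa$ are positive integers; and if $S'_{\omega(u)} = \emptyset$ then $Y_u = R_u = \bar R_u = \emptyset$ and every assertion is trivial, so assume otherwise.

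First I would localise $Y_u$ inside a single cube. Because $S'_{\omega(u)} = S'_{ijk} \subset S_{ijk}$, \cref{lem:coupling-term-partition} together with \cref{lem:lattice-partition-coupling-term-location} gives $Z \subset I_{ij} \times C_{\floorofa}(\tilde Q_k)$ for every $Z \in S'_{\omega(u)}$, where $I_{ij} = [\Omega j - \floorofa + 1 : \Omega j + \floorofa]_i$; hence $Y_u \subset \hat Q_u$ where $\hat Q_u = I_{ij} \times C_{\floorofa}(\tilde Q_k)$, a cube in the sense of \cref{defn:metric-cube} (an interval of half-width $\floorofa$ about $\Omega j + \tfrac12$ along direction $i$, and $\tilde Q_k$ widened by $\floorofa$ in the other directions).

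The core step is to widen $\hat Q_u$ once more so that it also contains $\bar R_u$. Every $Z$ with $h_Z \neq 0$ has $\diam(Z) \le a$, so any such $Z$ meeting $R_u$ lies in $B^c_a(R_u)$ (\cref{lem:metric-set-diam-closed-ball}); hence $\bar R_u \subset B^c_a(R_u) = B^c_a(B^o_r(Y_u))$. Chaining $Y_u \subset \hat Q_u$ with monotonicity, the standard containments of metric balls in lattice cubes (\cref{lem:metric-cube-open-ball}), and the additivity of cube enlargements (\cref{defn:metric-cube}) yields
\[
  \bar R_u \subset B^c_a\big(C_r(\hat Q_u)\big) \subset C_{\floorofa}\big(C_r(\hat Q_u)\big) = C_{r + \floorofa}(\hat Q_u).
\]
Enlarging the product cube $\hat Q_u$ by $r + \floorofa$ widens $I_{ij}$ to $[\Omega j - 2\floorofa - r + 1 : \Omega j + 2\floorofa + r]_i$ along direction $i$ and replaces $C_{\floorofa}(\tilde Q_k)$ by $C_{2\floorofa + r}(\tilde Q_k)$ in the remaining directions; substituting $r = \Omega/2 - 2\floorofa$, so that $2\floorofa + r = \Omega/2$, collapses these to $[\Omega(j - \tfrac12) + 1 : \Omega(j + \tfrac12)]_i$ and $C_{\Omega/2}(\tilde Q_k)$, which is precisely $\bar R_u \subset \hat R_u$ with $\hat R_u = [\Omega(j - \tfrac12) + 1 : \Omega(j + \tfrac12)]_i \times C_{\Omega/2}(\tilde Q_k)$. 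I expect this to be the main obstacle — not a deep point, but one must keep three successive enlargements straight (the open ball of radius $r$ defining $R_u$, the extra radius-$a$ ball from passing to the extension $\bar R_u$, and the passage from metric balls to integer lattice cubes of half-width $\floorofa$) and verify that the specific choice $r = \Omega/2 - 2\floorofa$ makes them telescope to exactly the claimed cube.

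For disjointness I would use that $\hat R_u$ and $\hat R_{u'}$ are cubes with the same ``insert at coordinate $i$'' shape, so by \cref{lem:metric-cube-intersection} they are disjoint as soon as their ranges in some single coordinate are disjoint. In case (i), $j \neq j'$ (say $j < j'$): the coordinate-$i$ ranges $[\Omega j - \tfrac\Omega2 + 1 : \Omega j + \tfrac\Omega2]$ and $[\Omega j' - \tfrac\Omega2 + 1 : \Omega j' + \tfrac\Omega2]$ are disjoint because $\Omega(j' - j) \ge \Omega > \Omega - 1$. In case (ii), $k \neq k'$ with $\lsb(k) = \lsb(k')$: pick a transverse direction $l$ with $k_l \neq k'_l$; equal least-significant bits force $\abs{k_l - k'_l} \ge 2$, and along direction $l$ the cube $C_{\Omega/2}(\tilde Q_k)$ has range $[\Omega k_l - \tfrac32\Omega + 1 : \Omega k_l + \tfrac12\Omega]$, so $\Omega\abs{k_l - k'_l} \ge 2\Omega > 2\Omega - 1$ again forces disjointness. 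In either case $\bar R_u \subset \hat R_u$ and $\bar R_{u'} \subset \hat R_{u'}$ are then disjoint, which finishes the proof.
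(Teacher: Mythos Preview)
Your proof is correct and follows essentially the same route as the paper: first localise $Y_{u}$ in $I_{ij}\times C_{\floorofa}(\tilde Q_{k})$ via \cref{lem:coupling-term-partition}, then enlarge by $r$ (for $R_{u}$) and by $a$ (for $\bar R_{u}$), pass from metric balls to lattice cubes, and use $r+2\floorofa=\Omega/2$ to land exactly on $[\Omega(j-\tfrac12)+1:\Omega(j+\tfrac12)]_{i}\times C_{\Omega/2}(\tilde Q_{k})$; the two disjointness cases are then checked coordinatewise exactly as in the paper. The only cosmetic difference is the order in which you convert balls to cubes (you do $B^{o}_{r}\subset C_{r}$ first and then $B^{c}_{a}\subset C_{\floorofa}$, whereas the paper combines $B^{c}_{a}\circ B^{o}_{r}\subset B^{o}_{r+a}$ before passing to $C_{r+\floorofa}$), which yields the same enclosing cube.
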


\begin{proof}
  \Cref{lem:coupling-term-partition} implies $Y_{u} \subset I_{ij} \times C_{\floorofa}(\tilde Q_{k})$.
  We have
  $\bar R_{u} \subset B^{c}_{a}(B^{o}_{r}(Y_{u})) \subset B^{o}_{r+a}(Y_{u}) \subset C_{r+\floorofa}(I_{ij}) \times C_{r+2\floorofa}(\tilde Q_{k})$
  and the same for $\bar R_{u'}$ and $(i, j', k')$ (\cref{lem:finite-metric-space,lem:metric-cube-open-ball,defn:metric-cube}).
  Note that $r+2\floorofa = \Omega/2$. 
  
  Assume that $j \ne j'$ holds.
  $C_{r+\floorofa}(I_{ij}) = [\Omega j - (r+2\floorofa) + 1:\Omega j + r+2\floorofa]_{i} = [\Omega(j-\frac12)+1:\Omega(j+\frac12)]_{i}$.
  This set does not intersect with the same set for $j'$ if $j \ne j'$.
  As a consequence, $\bar R_{u}$ and $\bar R_{u'}$ do not intersect (use \cref{lem:metric-cube-intersection}).

  Assume that $k \ne k'$ and $\lsb(k) = \lsb(k')$ hold.
  Let $\mu \in [1:\eta-1]$ such that $k_{\mu} \ne k'_{\mu}$.
  Without loss of generality, assume that $k_{\mu} < k'_{\mu}$ (exchange $k$ and $k'$ if necessary).
  Note that this implies $k'_{\mu} - k_{\mu} \ge 2$ because $k_{\mu}$ and $k'_{\mu}$ are both even or both odd (which follows from $\lsb(k) = \lsb(k')$).
  Note that $C_{\Omega/2}(\tilde Q_{k}) = C(x_{k} - \frac\Omega2 v, y_{k} + \frac\Omega2 v)$ where $v = (1, 1, \dots, 1)^{\eta} \in \Z^{\eta}$ and the same for $k'$. 
  We have
  \begin{align}
    [x_{k'} - \tfrac\Omega2 v]_{\mu} - [y_{k} + \tfrac\Omega2 v]_{\mu}
    &= \parena{\Omega (k'_{\mu}-1-\tfrac12)+1} - \Omega (k_{\mu}+\tfrac12) \ge 1
  \end{align}%
  where we have used $k'_{\mu} - k_{\mu} \ge 2$.
  As a consequence, $C_{\Omega/2}(\tilde Q_{k})$ does not overlap with the same set for $k'$ and this implies that $\bar R_{u}$ and $\bar R_{u'}$ do not overlap (use \cref{lem:metric-cube-intersection}).
\end{proof}

The next \namecref{lem:def-time-evo-decomp-cubic} provides the necessary definitions for applying \cref{lem:osborne-hamilton-perturbation}, taking advantage of the particular ordering function $\omega$ (\cref{lem:coupling-term-order}) and of non-overlapping sets $\bar R_{u}$ (\cref{lem:correction-support-overlap}):

\begin{lem}
  \label{lem:def-time-evo-decomp-cubic}

  Let $\Omega$ be an even integer and $\Omega > 4\floorofa$.
  For $u \in [1:\Xi]$ and $s, t \in \R$, let $V'_{us}(t)$ on $\bar R_{u}$ be the solution of $\partial_{s} V'_{us}(t) = \ii L'_{ut}(s) V'_{us}(t)$ where $L'_{ut}(s) = \tau^{H'_{u}}_{ts}(F_{u}(s))$ and $V'_{ut}(t) = \idm$. 
  Set $V'_{u} = V'_{us}(t)$ and $V' = V'_{\Xi} \dots V'_{2} V'_{1}$.
  Then, $V'$ is given by
  \begin{align}
    \label{eq:timeevo-cubic-decomp}
    V' &= \prod_{i=1}^{\eta} \prod_{l\in[0:1]^{\eta-1}} V'_{il},
    &
    V'_{il} &= \bigotimes_{j=1}^{B-1} \bigotimes_{
    \substack{
    k \in [1:B]^{\eta-1}\\\lsb(k) = l
    }
    }
    V'_{ijk}.
  \end{align}
  where $V'_{ijk} = V'_{u}$ with $u = \omega^{-1}(i, j, k)$.\footnote{%
    The order of the terms $V'_{il}$ in \eqref{eq:timeevo-cubic-decomp} is specified by the function $\omega$. 
  }
  In addition, set $V = V' U^{H_{Q}}_{ts}$.
\end{lem}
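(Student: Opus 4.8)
The plan is to show that the sequential product $V' = V'_{\Xi}\dots V'_{1}$ from the hypothesis coincides with the grouped expression \eqref{eq:timeevo-cubic-decomp}; this is essentially a bookkeeping argument that reorganises the product using the ordering function $\omega$ from \cref{lem:coupling-term-order} together with the disjointness statement of \cref{lem:correction-support-overlap}. First I would record that each factor $V'_{u}$ acts non-trivially only on $\bar R_{u}$: the generator $F_{u} = \sum_{Z \in S'_{\omega(u)}} h_{Z}$ is supported on $Y_{u} \subset R_{u} \subset \bar R_{u}$, the Hamiltonian $H'_{u}$ is supported on $\bar R_{u}$ by its definition in \cref{lem:coupling-term-order}, hence $L'_{ut}(s) = \tau^{H'_{u}}_{ts}(F_{u}(s))$ is supported on $\bar R_{u}$ as well, so the time-ordered exponential solving $\partial_{s} V'_{us}(t) = \ii L'_{ut}(s) V'_{us}(t)$ with $V'_{ut}(t) = \idm$ acts as the identity outside $\bar R_{u}$.

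Next I would exploit the structure of $\omega$. By \cref{lem:coupling-term-order} (and the explicit lexicographic choice given in its footnote), $\omega^{-1}$ maps all triples $(i,j,k)$ sharing a common value of $(i,\lsb(k))$ to a block of consecutive integers in $[1:\Xi]$. Splitting $V'_{\Xi}\dots V'_{1}$ along these blocks immediately gives $V' = \prod_{i=1}^{\eta}\prod_{l\in[0:1]^{\eta-1}} V'_{il}$ (with the products read in the order induced by $\omega$, as announced in the footnote to \eqref{eq:timeevo-cubic-decomp}), where $V'_{il}$ collects the $V'_{u}$ with $\omega(u) = (i,j,k)$ and $\lsb(k) = l$, taken in the order inherited from $\omega$.

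It then remains to collapse each $V'_{il}$ into the tensor product $\bigotimes_{j=1}^{B-1}\bigotimes_{k\colon\lsb(k)=l} V'_{ijk}$. Fix $i$ and $l$, and take two distinct index pairs $(j,k)\ne(j',k')$ with $\lsb(k)=\lsb(k')=l$, writing $u = \omega^{-1}(i,j,k)$ and $u' = \omega^{-1}(i,j',k')$. If $j\ne j'$, case~(i) of \cref{lem:correction-support-overlap} gives $\bar R_{u}\cap\bar R_{u'}=\emptyset$; if $j=j'$, then necessarily $k\ne k'$ with $\lsb(k)=\lsb(k')$, which is case~(ii) and again gives $\bar R_{u}\cap\bar R_{u'}=\emptyset$. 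Combined with the first step, the factors appearing in $V'_{il}$ act on pairwise disjoint regions, so they commute and their ordered product equals the (order-independent) tensor product $\bigotimes V'_{ijk}$; this establishes \eqref{eq:timeevo-cubic-decomp}. The final clause $V = V' U^{H_{Q}}_{ts}$ is a definition and needs nothing further.

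I expect the only genuine subtlety to be the index bookkeeping: verifying that within each block of $\omega$ the hypothesis of \cref{lem:correction-support-overlap} holds for \emph{every} pair of factors — which is precisely why $\omega$ is designed to keep $(i,\lsb(k))$ constant on a block, forcing every within-block pair into case~(i) or case~(ii) — and matching the left-to-right ordering conventions between $V'_{\Xi}\dots V'_{1}$ and the nested product in \eqref{eq:timeevo-cubic-decomp}, which the paper defers to its footnote on the ordering function $\omega$.
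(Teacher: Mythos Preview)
Your proposal is correct and follows precisely the approach indicated in the paper's proof, which simply reads ``Use \cref{lem:coupling-term-order,lem:correction-support-overlap} recalling that all $(i, j, k) = \omega(u)$ with the same value of $(i, \lsb(k))$ appear consecutively as $u$ proceeds from $1$ to $\Xi$.'' You have spelled out in full detail exactly what that one-line proof means: the support statement for $V'_{u}$, the block decomposition coming from the ordering $\omega$, and the disjointness from \cref{lem:correction-support-overlap} that turns each block into a tensor product.
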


\begin{proof}
  Use \cref{lem:coupling-term-order,lem:correction-support-overlap} recalling that all $(i, j, k) = \omega(u)$ with the same value of $(i, \lsb(k))$ appear consecutively as $u$ proceeds from $1$ to $\Xi$.
\end{proof}

Finally, we have completed the preparations for applying \cref{lem:osborne-hamilton-perturbation}:

\begin{thm}
  \label{thm:time-evo-decomp-cubic}

  Let $\Lambda = [1:L]^{\eta}$, $n = \abs{\Lambda} = L^{\eta}$, let $\Omega$ be an even integer and $B = \ceil{L / \Omega}$.
  Choose $q \in (0, 1)$ and let $\Omega$ be such that $r = \Omega/2 - 2\floorofa$ satisfies $r > 0$, $\ceil{r/a} > 2\kappa + 1$ and $\ceil{r/a} \ge \frac{2\kappa}{q} \ln(\frac{\kappa}{q})$ where $\kappa = \eta - 1$.
  The distance between $V = V' U^{H_{Q}}_{ts}$ from \cref{lem:def-time-evo-decomp-cubic} and the exact time evolution $U^{H}_{ts}$ is at most 
  \begin{align}
    \label{eq:thm-time-evo-decomp-cubic-result}
    \opnorma{V' U^{H_{Q}}_{ts} - U^{H}_{ts}} \le \epsilon
  \end{align}
  if
  \begin{align}
    \label{eq:thm-time-evo-decomp-cubic-cond}
    \Omega \ge \frac{2a}{1-q} \sqba{ v \abs{t - s} + \ln\parena{\frac n\epsilon} + \ln(c_{3})}
  \end{align}
  where $c_{3} = 2 \eta a M \exp(1)/\mcz$. 

  The operator $U^{H_{Q}}_{ts}$ is the tensor product of $B^{\eta} < n$ independent time evolutions on $\Omega^{\eta}$ sites.
  The operator $V'$ consists of $\Xi = \eta (B-1)B^{\eta-1} < \eta n$ independent time evolutions on $\Omega (2\Omega)^{\eta-1}$ sites.
  All constituents of the two operators can be computed in $\bigo{n\eta \exp(\Omega^{\eta})}$ computational time.
  The operator $V$ admits a \gls{peps} representation of bond dimension $D \eqbigo{\exp(\eta 4^{\eta}\Omega^{\eta} \ln(d))}$ where $d = \max_{x\in\Lambda} d(x)$ is the maximal local dimension.
\end{thm}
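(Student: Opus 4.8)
The plan is to prove the operator-norm estimate \eqref{eq:thm-time-evo-decomp-cubic-result} first and then read off the structural and complexity statements from the explicit construction. For the error bound I would mimic the proof of \cref{lem:time-evo-eff-repr}. Using the Hamiltonian sequence $H_{0}=H_{Q},H_{1},\dots,H_{\Xi}=H$ of \cref{lem:coupling-term-order} and the shorthand $T_{u}=V'_{\Xi}\cdots V'_{u+1}U^{H_{u}}_{ts}$ (so that $T_{\Xi}=U^{H}_{ts}$ and $T_{0}=V'U^{H_{Q}}_{ts}$), write
\[
U^{H}_{ts}-V'U^{H_{Q}}_{ts}=\sum_{u=1}^{\Xi}\bigl(T_{u}-T_{u-1}\bigr)=\sum_{u=1}^{\Xi}V'_{\Xi}\cdots V'_{u+1}\bigl(U^{H_{u}}_{ts}-V'_{u}U^{H_{u-1}}_{ts}\bigr),
\]
so unitary invariance of the operator norm and the triangle inequality reduce the task to bounding each $\opnorm{U^{H_{u}}_{ts}-V'_{u}U^{H_{u-1}}_{ts}}$. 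For fixed $u$, $H_{u}-H_{u-1}=F_{u}$ acts on $Y_{u}$, and $R_{u}=B^{o}_{r}(Y_{u})$ gives $d(Y_{u},\Lambda\setminus R_{u})\ge r$ (\cref{lem:finite-metric-space}), hence $d_{a,u}=d(Y_{u},\Lambda\setminus R_{u})/a\ge r/a$; since $\ceil{r/a}$ is large enough by hypothesis (with $\kappa=\eta-1$), so is $\ceil{d_{a,u}}$. Applying \cref{lem:osborne-hamilton-perturbation} with $H\mapsto H_{u}$, $A\mapsto F_{u}$, $R\mapsto R_{u}$ — its local Hamiltonian $H_{\bar R}$ then being exactly $H'_{u}$ of \eqref{eq:hamiltonian-sequence} and its correction exactly $V'_{u}$ of \cref{lem:def-time-evo-decomp-cubic} — and using $\alpha_{q}\le1$, $D_{u}=(1-q)d_{a,u}\ge(1-q)r/a$ yields
\[
\opnorm{U^{H_{u}}_{ts}-V'_{u}U^{H_{u-1}}_{ts}}\le\frac{2M}{v\mcz}\,\abs{F_{u}}\,\exp\bigl(v\abs{t-s}-(1-q)r/a\bigr).
\]

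The decisive point is to sum these contributions without a prefactor growing in $\Omega$. Since $F_{u}=\sum_{Z\in S'_{\omega(u)}}h_{Z}$ and $\opnorm{h_{Z}}\le J/2$, we get $\abs{F_{u}}\le\frac{J}{2}\abs{S'_{\omega(u)}}$; the sets $S'_{\omega(u)}$ are disjoint with union $S$ (\cref{lem:coupling-term-partition,lem:coupling-term-order}) and $\abs{S}\le\abs{\{Z\colon h_{Z}\ne0\}}\le\mcz n$, so $\sum_{u=1}^{\Xi}\abs{F_{u}}\le\frac{J}{2}\abs{S}\le\frac{\mcz Jn}{2}$. With $v=J\mcz\exp(1)$ the total error is therefore at most $\frac{Mn}{\mcz\exp(1)}\exp\bigl(v\abs{t-s}-(1-q)r/a\bigr)$, which is $\le\epsilon$ whenever $(1-q)r/a\ge v\abs{t-s}+\ln(n/\epsilon)+\ln\bigl(M/(\mcz\exp(1))\bigr)$. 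Substituting $r=\Omega/2-2\floorofa$ rearranges this to $\Omega\ge\frac{2a}{1-q}\bigl[v\abs{t-s}+\ln(n/\epsilon)+\ln(M/(\mcz\exp(1)))\bigr]+4\floorofa$; since $\floorofa\le a$ and $1-q<1$, the surplus $\tfrac{2a}{1-q}(\ln(2\eta a)+2)$ contained in $\tfrac{2a}{1-q}\ln c_{3}$ dominates $4\floorofa$, so the stated condition \eqref{eq:thm-time-evo-decomp-cubic-cond} is sufficient.

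For the remaining assertions I would argue directly from the construction. The cubes $Q_{m}$ are pairwise disjoint (\cref{lem:lattice-partition-coupling-terms}), the local Hamiltonians $H_{Q_{m}}$ commute, and so $U^{H_{Q}}_{ts}=\bigotimes_{m}U^{H_{Q_{m}}}_{ts}$ is a tensor product of $B^{\eta}\le L^{\eta}=n$ mutually independent evolutions on $\abs{Q_{m}}\le\Omega^{\eta}$ sites each. Equation \eqref{eq:timeevo-cubic-decomp} presents $V'$ as a product of $\eta\,2^{\eta-1}$ layers $V'_{il}$, each a tensor product over the $V'_{ijk}$ with $\lsb(k)=l$; by \cref{lem:correction-support-overlap} these $\bar R_{u}$ — indeed their bounding boxes $[\Omega(j-\tfrac12)+1:\Omega(j+\tfrac12)]_{i}\times C_{\Omega/2}(\tilde Q_{k})$, of at most $\Omega(2\Omega)^{\eta-1}$ sites — are pairwise disjoint inside a layer, and there are $\Xi=\eta(B-1)B^{\eta-1}<\eta n$ constituents $V'_{u}$ in total. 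Each constituent is computed by exponentiating (resp.\ time-ordered-integrating) a Hamiltonian on $\bigo{\Omega^{\eta}}$ sites, i.e.\ a matrix of size $d^{\bigo{\Omega^{\eta}}}$, in $\exp(\bigo{\Omega^{\eta}})$ time; with $\bigo{\eta n}$ such constituents and the $\bigo{n^{2}}$ bookkeeping of \cref{lem:coupling-term-partition,lem:coupling-term-order} (which \eqref{eq:thm-time-evo-decomp-cubic-cond} makes negligible against $\exp(\Omega^{\eta})$), the total is $\bigo{n\eta\exp(\Omega^{\eta})}$.

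Finally, write $V=V'_{\Xi}\cdots V'_{1}\,U^{H_{Q_{1}}}_{ts}\cdots U^{H_{Q_{B^{\eta}}}}_{ts}$ as a product of unitary gates, each supported on a connected box of at most $K=2^{\eta-1}\Omega^{\eta}$ sites; by the layered structure every site of $\Lambda$ lies in exactly one cube $Q_{m}$ and in at most one box per layer, hence in the supports of at most $L=\eta\,2^{\eta-1}+1$ gates. \Cref{lem:unit-circuit-eff-peps} then produces a \gls{pepo} representation of $V$ on the hypercubic lattice graph with bond dimension $D\le d^{2KL}=\exp\bigl((2^{\eta}+\eta\,2^{2\eta-1})\Omega^{\eta}\ln d\bigr)\le\exp(\eta\,4^{\eta}\Omega^{\eta}\ln d)$. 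The main obstacle is the error estimate: the telescoping must be arranged so that \cref{lem:osborne-hamilton-perturbation} applies verbatim at every step (which relies on $H'_{u}$ being precisely the restriction of $H_{u}$ to $\bar R_{u}$ and on $\ceil{d_{a,u}}$ being large enough for all $u$), and the $\bigo{\eta n}$ per-step errors must be summed through the disjointness identity $\sum_{u}\abs{F_{u}}\le\frac{J}{2}\abs{S}\le\frac{\mcz Jn}{2}$ — a naive per-step bound on $\abs{F_{u}}$ would scale like $\Omega^{\eta-1}$ and would destroy the $\Omega$-independence of $c_{3}$. A lesser subtlety in the bond-dimension count is to bound the gates through a site by the number $\eta\,2^{\eta-1}$ of layers in \eqref{eq:timeevo-cubic-decomp}, not by the much larger $\Xi$.
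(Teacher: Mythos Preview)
Your proof is correct and follows essentially the same route as the paper: the same telescoping over the Hamiltonian sequence $H_{0},\dots,H_{\Xi}$, the same per-step application of \cref{lem:osborne-hamilton-perturbation}, and the same layer count for the \gls{peps} bond dimension (the paper invokes \cref{lem:peps-product-bond-dimension,lem:peps-bond-dim-upper-bound} directly rather than the packaged \cref{lem:unit-circuit-eff-peps}, but the arithmetic is identical). The one substantive variation is how the per-step errors are summed. The paper bounds $\opnorm{F_{u}}\le (J/2)\mcz\,|Q_{ijk}|$ and then sums $\sum_{ijk}|Q_{ijk}|<2\eta a n$ geometrically, absorbing the shift $r=\Omega/2-2\floorofa$ via $d_{a}\ge\Omega/(2a)-2$ so that the factor $e^{2}$ and the factor $2\eta a$ together produce exactly $c_{3}=2\eta a M e/\mcz$. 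You instead exploit the disjointness $\sum_{u}|S'_{\omega(u)}|=|S|\le\mcz n$ directly, obtain the smaller prefactor $M/(\mcz e)$, and then check that the leftover $4\floorofa$ is dominated by the surplus $\tfrac{2a}{1-q}(\ln(2\eta a)+2)$ contained in $\tfrac{2a}{1-q}\ln c_{3}$. Both arguments land on condition~\eqref{eq:thm-time-evo-decomp-cubic-cond}; yours is a touch cleaner and makes more transparent why the constant does not depend on~$\Omega$.
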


\begin{proof}
  Let $u \in [1:\Xi]$ and $(i, j, k) = \omega(u)$.
  Note that $\bigdotcup_{k\in[1:B]^{\eta-1}} Q_{ijk} = I_{ij} \times [1:L]^{\eta-1}$ (cf.\ \cref{lem:coupling-term-partition}), which implies $\sum_{k} \abs{Q_{ijk}} \le 2\floorofa L^{\eta-1}$.
  The operator $F_{u}$ is the sum of a subset of all terms which intersect with $Q_{ijk}$ (\cref{lem:coupling-term-partition,eq:coupling-term-ordering-set}); i.e.\ $F_{u}$ is the sum of at most $\abs{Q_{ijk}}\mcz$ local terms.
  As a consequence, $\opnorm{F_{u}} \le (J/2) \abs{Q_{ijk}}\mcz = v \abs{Q_{ijk}}/2\ee$.
  We have $R_{u} = B^{o}_{r}(Y_{u})$ with $r = \Omega/2 - 2\floorofa$ (cf.~\cref{lem:correction-support-overlap}), therefore $d(Y_{u}, \Lambda\setminus R_{u}) / a \ge r/a = \Omega/(2a) - 2\floor a/a \ge \Omega/(2a) - 2$ (\cref{lem:finite-metric-space}).
  We have (use \cref{lem:osborne-hamilton-perturbation})
  \begin{align}
    \opnorma{
    V'_{ijk} U^{H_{u-1}}_{ts} - U^{H_{u}}_{ts}
    }
    \le
    \frac{M}{\mcz \ee} \abs{Q_{ijk}} \exp(v \abs{t-s} - (1-q)\Omega/2a + 2).
  \end{align}
  The total distance is at most the sum of such terms for all $u \in [1:\Xi]$ or all $(i, j, k)$, respectively.\footnote{%
    Completely analogous to the proof of \cref{lem:time-evo-eff-repr}.
  }
  We evaluate
  \begin{align}
    \sum_{ijk} \abs{Q_{ijk}} \le \sum_{ij} 2\floorofa L^{\eta-1} = 2\floorofa \eta (B-1) L^{\eta-1} < 2 \eta a n.
  \end{align}
  This provides
  \begin{align}
    \opnorma{V' U^{H_{Q}}_{ts} - U^{H}_{ts}} \le n c_{3} \exp(v\abs{t-s} - (1-q)\Omega/2a) \le \epsilon
  \end{align}
  where $c_{3} = 2 \eta a M \exp(1)/\mcz$.
  Note that $\abs{Q_{m}} \le \Omega^{\eta}$ and $\abs{\bar R_{u}} \le \Omega(2\Omega)^{\eta-1}$ (\cref{lem:correction-support-overlap}). 
  Using \cref{lem:peps-product-bond-dimension}, \cref{lem:peps-bond-dim-upper-bound} and \cref{eq:timeevo-cubic-decomp} shows that the bond dimension of a \gls{peps} representation of $V' U^{H_{Q}}_{ts}$ is at most $D \le \exp[(\Omega^{\eta} + \eta 2^{\eta-1} \Omega(2\Omega)^{\eta-1}) \ln(d^{2})]$. 
\end{proof}

\begin{cor}
  \label{cor:timeevo-cubic-decomp-support}
  
  Let $A$ operator which acts non-trivially on a single site $x$.
  Then $V^{\adjm} A V$ with $V = V' U^{H_{Q}}_{ts}$ acts non-trivially at most on $C_{r}(\{x\})$ and the radius $r = \lambda \Omega$ increases linearly with time and with $\ln(n/\epsilon)$ ($\lambda = \eta2^{\eta} + 1$). (Proof: Analogous to \cref{lem:op-support-growth}.)
\end{cor}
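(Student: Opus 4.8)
The plan is to bound the support of $V^{\adjm} A V$ by conjugating $A$ through the factors of $V = V' U^{H_{Q}}_{ts}$ one commuting layer at a time, iterating the light-cone observation of \cref{lem:op-support-growth}: conjugating an operator supported on a set $Y$ by a tensor product of pairwise-commuting unitaries produces an operator supported on $Y$ together with those tensor factors whose support actually meets $Y$.

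First I would write $V^{\adjm} A V = (U^{H_{Q}}_{ts})^{\adjm}\,(V')^{\adjm} A V'\, U^{H_{Q}}_{ts}$ and recall from \cref{lem:def-time-evo-decomp-cubic} that $V' = \prod_{i=1}^{\eta}\prod_{l\in[0:1]^{\eta-1}} V'_{il}$ is a product of $\eta 2^{\eta-1}$ layers, each layer $V'_{il}=\bigotimes_{j,k\colon\lsb(k)=l}V'_{ijk}$ being a tensor product of unitaries with pairwise disjoint supports $\bar R_{u}$; the disjointness within a layer is exactly \cref{lem:correction-support-overlap}, since two distinct factors of $V'_{il}$ have either $j\ne j'$ or ($k\ne k'$ with $\lsb(k)=\lsb(k')$). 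I would then process the $\eta 2^{\eta-1}$ layers of $V'$ from the innermost outward. If at some stage the current support lies in $C_{\rho}(\{x\})$, conjugating by one more layer enlarges it to $C_{\rho}(\{x\})\cup\bigcup\{\bar R_{u}\colon\bar R_{u}\cap C_{\rho}(\{x\})\ne\emptyset\}$, exactly as in the proof of \cref{lem:op-support-growth}. By \cref{lem:correction-support-overlap} each $\bar R_{u}\subset[\Omega(j-\frac12)+1:\Omega(j+\frac12)]_{i}\times C_{\Omega/2}(\tilde Q_{k})$, so any two of its sites differ by at most $2\Omega-1$ in every coordinate; combined with \eqref{eq:metric-normalized} this shows that any $\bar R_{u}$ meeting $C_{\rho}(\{x\})$ is contained in $C_{\rho+2\Omega}(\{x\})$. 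Hence each layer enlarges the radius by at most $2\Omega$, and after all $\eta 2^{\eta-1}$ layers the support of $(V')^{\adjm}A V'$ lies in $C_{\eta 2^{\eta}\Omega}(\{x\})$.

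Next I would conjugate by $U^{H_{Q}}_{ts}$, which by \cref{thm:time-evo-decomp-cubic} is the tensor product of the independent cube evolutions $U^{H_{Q_{m}}}_{ts}$ over the partition $\Lambda=\bigdotcup_{m}Q_{m}$ into cubes of edge length $\Omega$. The same step applies: only cubes meeting the current support contribute, and since any two sites of a cube $Q_{m}$ differ by at most $\Omega-1$ in each coordinate, this enlarges the radius by at most $\Omega$. Combining the two stages, $V^{\adjm}A V$ acts non-trivially at most on $C_{(\eta 2^{\eta}+1)\Omega}(\{x\})=C_{\lambda\Omega}(\{x\})$ with $\lambda=\eta 2^{\eta}+1$. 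For the scaling claim I would invoke \cref{thm:time-evo-decomp-cubic}, which allows $\Omega$ to be chosen as $\frac{2a}{1-q}[v\abs{t-s}+\ln(n/\epsilon)+\ln(c_{3})]$ rounded up to an even integer meeting the remaining ($t$- and $n$-independent) lower bounds; for this choice $\Omega$, and hence $r=\lambda\Omega$, grows linearly with $\abs{t-s}$ and with $\ln(n/\epsilon)$.

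The one place that needs attention is the accounting of the per-layer growth: verifying that the coordinate-wise extent of each $\bar R_{u}$ is $2\Omega-1<2\Omega$ and of each cube $Q_{m}$ is $\Omega-1<\Omega$, and that the $\eta 2^{\eta-1}$ correction layers of $V'$ plus the single cube-evolution $U^{H_{Q}}_{ts}$ together yield precisely the constant $\lambda=\eta 2^{\eta}+1$ (the trailing $+1$ coming from $U^{H_{Q}}_{ts}$). Beyond this bookkeeping, the argument is a direct, iterated repetition of \cref{lem:op-support-growth}.
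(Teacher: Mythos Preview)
Your argument is correct and is precisely the natural unfolding of the paper's one-line hint ``Analogous to \cref{lem:op-support-growth}'': iterate the support-growth step over the $\eta 2^{\eta-1}$ pairwise-disjoint layers of $V'$ (each contributing at most $2\Omega$ because $\bar R_{u}$ has coordinate extent at most $2\Omega-1$ by \cref{lem:correction-support-overlap}) and then over the single tensor-product layer $U^{H_{Q}}_{ts}$ (contributing at most $\Omega$), yielding $r=(\eta 2^{\eta}+1)\Omega$. The bookkeeping you flag is indeed all that the proof requires.
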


\begin{rem}
  The radius in the last \namecref{cor:timeevo-cubic-decomp-support} is proportional to $\Omega$;
  using the representation for an arbitrary lattice, this radius is proportional to $\Omega^{\eta}$ (\cref{rem:time-evo-eff-repr-thm}). 
\end{rem}

\section{Discussion%
  \label{sec:timeevo-discussion}}

In this work, we have discussed the unitary time evolution operator $U_{t}$ induced by a time-dependent finite-range Hamiltonian on an arbitrary lattice with $n$ sites.
In addition, we have discussed time-evolved states $\ket{\psi(t)} = U_{y} \ket{\psi(0)}$ where the initial state $\ket{\psi(0)}$ is a product state. 
We have shown that such a time-evolved state can be certified or verified efficiently, i.e.\ there is an efficient method to determine an upper bound $\beta$ on the infidelity of the time-evolved state $\ket{\psi(t)}$ and an arbitrary, unknown state $\rho$.
We presented a method where the measurement effort for obtaining the upper bound $\beta$ was only $\bigo{n^{3} \exp[(v\abs{t} + \ln(n/\mci))^{\eta}]}$ instead of $\bigo{\exp(n)}$.
If the time-evolved state $\ket{\psi(t)}$ and the unknown state $\rho$ are sufficiently close, the upper bound $\beta$ is guaranteed to not exceed $\mci$. 
The measurement effort is seen to increase quasi-polynomially with $n$ if the spatial dimension $\eta$ is two or larger and polynomially with $n$ in one spatial dimension.
The scaling in a single spatial dimension matches what was obtained previosly \parencite[Supplementary material]{Lanyon2016}.
The complete time evolution operator $U_{t}$ can be encoded into a time-evolved state $\ket{\psi(t)}$ if each site of the lattice is augmented by a second site of the same dimension and the initial state is one where each pair of sites is maximally entangled \parencite{Holzaepfel2014}.
A certificate for this time-evolved state then also provides a certificate for the time evolution operator $U_{t}$.
This enables assumption-free verification of the output of methods which, under the assumption that it is a finite-ranged Hamiltonian, determine the unknown Hamiltonian of a system \parencite{Silva2011,Holzaepfel2014}.

We have also shown that the time evolution operator $U_{t}$ admits an efficient \gls{pepo} representation on the same lattice as the Hamiltonian, implying that the time-evolved state $\ket{\psi(t)}$ admits an efficient \gls{peps} representation.
This holds if time $t$ is at most poly-logarithmic in the number of sites $n$.
An efficient representation on the same lattice is different from efficient \gls{pepo} representations of $U_{t}$ based on the Trotter decomposition, which use a lattice of a larger dimension than the Hamiltonian itself. 
Our result provides guidelines on the necessary resources for numerically computing the time-evolved state $\ket{\psi(t)}$ with \glspl{peps} (or a suitable subclass thereof); such methods typically attempt to represent the time-evolved state $\ket{\psi(t)}$ on the same lattice as the Hamiltonian.
We construct an efficient representation of $U_{t}$ which approximates $U_{t}$ up to an error $\epsilon$ and which is based on a unitary circuit which propagates a local observable to a region whose diameter grows only linearly with $v \abs{t} + \ln(n/\epsilon)$.
This highlights that $U_{t}$ is approximated by a \gls{pepo} with a very specific structure; a general \gls{pepo} might e.g.\ displace local observables by arbitrarily large distances.
This property can also be used for an alternative proof of efficient certification of time-evolved states $\ket{\psi(t)}$, following the original approach pursued in one spatial dimension \parencite[Supplementary material]{Lanyon2016}. 

We have shown that time-evolved states of finite-range Hamiltonians can be certified and represented efficiently.
At this point, it remains an open question whether these results can be extended to Hamiltonians with exponentially decaying couplings.


\section*{Acknowledgements}
We acknowledge discussions with Dario Egloff and Ish Dhand.
We acknowledge support from an Alexander von Humboldt Professorship, the ERC Synergy grant BioQ, the EU project QUCHIP, the US Army Research Office Grant No. W91-1NF-14-1-0133. 

\appendix
\counterwithin{equation}{section}
\counterwithin{theorem}{section}

\section{Removing single-site terms
  \label{sec:lr-remove-single-site-terms}}

The Lieb--Robinson bounds discussed in \cref{sec:timeevo-lr} show that information propagates with a maximal velocity $v$, the Lieb--Robinson velocity, if the Hamiltonian $H(t) = \sum_{Z \subset \Lambda} h_{Z}(t)$ which governs the dynamics satisfies certain conditions.
The Lieb--Robinson velocity is given by $v = J \mcz \exp(1)$ where $J = 2\sup_{t,Z \subset \Lambda} \opnorm{h_{Z}(t)}$ is twice the maximal norm of a local term of the Hamiltonian (\cref{eq:timeevo-lr-J-a}).
Adding a term $h_{x}$ acting only on a single lattice site $x \in \Lambda$ to the Hamiltonian can increase the Lieb--Robinson velocity arbitrarily but one would not expect that it affects how information propagates in the system because it acts only on a single site.
In infinite-dimensional systems, Lieb--Robinson bounds unaffected even by unbounded single-site terms have been proven \parencite{Nachtergaele2009,Nachtergaele2010a,Nachtergaele2014}.
In the following, we provide a simple way to use \cref{thm:timeevo-quasilocality} without single-site terms influencing the Lieb--Robinson velocity.
This is achieved by switching to a suitable interaction or Dirac picture before applying the \namecref{thm:timeevo-quasilocality}.
\Cref{lem:interaction-picture} introduces the interaction picture we use and \cref{cor:interaction-picture-lr} applies it to the Lieb--Robinson bound from \cref{thm:timeevo-quasilocality}. 

\newcommand{\diracp}{\text{(D)}}
\newcommand{\dirac}{\text{D}}

The interaction or Dirac picture is introduced in most quantum mechanics textbooks and we present our version in the following \namecref{lem:interaction-picture}.
A Hamiltonian $H$ is split into two parts, $H = F + G$.
Observables $A_{\dirac}(t)$ evolve according to $F$ and states $\ket{\psi_{\dirac}(t)}$ evolve such that the correct expectation values arise.

\begin{lem}
  \label{lem:interaction-picture}

  Fix a time $r \in \R$ and let the two times $s, t \in \R$ be arbitrary. 
  Let $H(t) = F(t) + G(t)$ be a Hamiltonian and let $A(t)$ be an observable.
  Set $\ket{\psi(t)} = U^{H}_{tr}\ket{\psi(r)}$ and
  \begin{align}
    \label{eq:intpic-op-state}
      A_{\dirac}(t) &= U^{F}_{rt} \, A(t) \, U^{F}_{tr}\,,
    &
      \ket{\psi_{\dirac}(t)} &= U^{F}_{rt} \, U^{H}_{tr} \, \ket{\psi(r)}.
  \end{align}
  Expectation values are given by
  \begin{align}
    \label{eq:intpic-ept}
    \bra{\psi_{\dirac}(t)} \, A_{\dirac}(t) \, \ket{\psi_{\dirac}(t)} &= \bra{\psi(t)} \, A(t) \, \ket{\psi(t)}
  \end{align}
  Set
  \begin{align}
    \label{eq:intpic-propa-def}
    U^{\diracp}_{ts} &= U^{F}_{rt} U^{H}_{ts} U^{F}_{sr}\,.
  \end{align}
  This operator propagates the states $\ket{\psi_{\dirac}(t)}$ via
  \begin{align}
    \label{eq:intpic-propa-state}
    \ket{\psi_{\dirac}(t)} = U^{\diracp}_{ts}\ket{\psi_{\dirac}(s)}.
  \end{align}
  and it is the solution of the differential equation
  \begin{align}
    \label{eq:intpic-propa-diffeq}
    \partial_{t} U^{\diracp}_{ts} = -\ii \tilde G(t) U^{\diracp}_{ts}
  \end{align}
  where $\tilde G(t) = U^{F}_{rt} G(t) U^{F}_{tr}$ and $U^{\diracp}_{ss} = \idm$, i.e.\ $U^{\diracp}_{ts} = U^{\tilde G}_{ts}$. 
\end{lem}

\begin{proof}
  \Cref{eq:intpic-ept,eq:intpic-propa-state} follow directly from the definitions. \Cref{eq:intpic-propa-diffeq} is shown by
  \begin{align*}
    \partial_{t} U^{\diracp}_{ts}
    &=
      +\ii U^{F}_{rt} [ F(t) - H(t) ] U^{H}_{ts} U^{F}_{sr}
      =
      -\ii U^{F}_{rt} G(t) U^{F}_{tr} U^{F}_{rt} U^{H}_{ts} U^{F}_{sr}
      =
      -\ii \tilde G(t) U^{\diracp}_{ts}\,,
  \end{align*}
  which completes the proof.
\end{proof}

\begin{cor}
  \label{cor:interaction-picture-lr}
  In the setting from \cref{sec:timeevo-lr}, let $H(t) = \sum_{Z \subset \Lambda} h_{Z}(t)$, $F(t) = \sum_{x \in \Lambda} h_{\{x\}}(t)$ and $G(t) = \sum_{Z \subset \Lambda, \abs Z \ge 2} h_{Z}(t)$.
  Assume that $G(t) \ne 0$ for some $t$ and consider the parameters defined in \cref{eq:timeevo-lr-J-a,eq:timeevo-lr-Z,eq:timeevo-lr-R-n,eq:timeevo-lr-abs-R-n,eq:timeevo-lr-sites-per-term}.
  The maximal range $a$ of $H$, $G$ and $\tilde G$ is the same.
  The maximal norm $J$ satisfies $J(\tilde G) = J(G) \le J(H)$ and the same holds for the parameters $\mcz$, $M$, $\kappa$ and $\mcy$.

  Let $Y \subset R \subset \Lambda$ and let $A$ act on $Y$.
  \Cref{thm:timeevo-quasilocality} provides a bound 
  \begin{align}
    \opnorma{\tau^{H}_{ts}(A) - \tau^{H_{\bar R}}_{ts}(A)} \le \epsilon(H)
    \label{eq:interaction-picture-lr}
  \end{align}
  where $\epsilon(H)$ depends on the parameters of $H$ just mentioned, as specified \cref{thm:timeevo-quasilocality}.
  \cref{eq:interaction-picture-lr} still holds if $\epsilon(H)$ is replaced by the smaller $\epsilon(\tilde G) = \epsilon(G)$. 
\end{cor}

\begin{proof}
  $\tilde G$ and $G$ have the same value of $J$ because the operator norm is unitarily invariant.
  $\tilde G$ and $G$ have the same value of $a$, $\mcz$, $M$, $\kappa$ and $\mcy$ because the tensor product $U^{F}_{ts}$ does not change the set of sites on which a local term acts non-trivially.
  Inspection of \cref{eq:timeevo-lr-J-a,eq:timeevo-lr-Z,eq:timeevo-lr-R-n,eq:timeevo-lr-abs-R-n,eq:timeevo-lr-sites-per-term} yields claimed inequalities between parameters of $H$ and parameters of $G$.

  Applying \cref{thm:timeevo-quasilocality} to $\tilde G$ and $A'$ acting on $Y$ provides
  \begin{align}
    \label{eq:interaction-picture-lr-proof}
    \opnorma{\tau^{\tilde G}_{ts}(A') - \tau^{\tilde G_{\bar R}}_{ts}(A')}
    \le \epsilon'(\tilde G).
  \end{align}
  As $\mcz$ appears in the denominator of $\epsilon$, the claimed $\epsilon'(\tilde G) \le \epsilon(H)$ might fail to hold if $\mcz(\tilde G) < \mcz(H)$. 
  $\mcz(\tilde G) = \mcz(H)$ can be ensured by keeping arbitrarily small single-site terms in $G$ instead of removing them completely.
  If we similarly set $\kappa(G) = \kappa(H)$ and $M(G) = M(H)$, \cref{eq:timeevo-lr-abs-R-n} is satisfied for $G$. 
  Inserting $A' = \tau^{F}_{rs}(A) = \tau^{F_{Y}}_{rs}(A)$, which acts only on $Y$, into \eqref{eq:interaction-picture-lr-proof} and using the unitary invariance of the operator norm provides
  \begin{align*}
    \opnorma{\tau^{H}_{ts}(A) - \parena{\tau^{F_{\bar R}}_{tr}\tau^{\tilde G_{\bar R}}_{ts}\tau^{F_{Y}}_{rs}}(A)}
    =
    \opnorma{\tau^{F}_{tr}\parena{\tau^{\tilde G}_{ts}\parena{\tau^{F}_{rs}(A)} - \tau^{\tilde G_{\bar R}}_{ts}\parena{\tau^{F_{Y}}_{rs}(A)}}}
    \le \epsilon'(\tilde G).
  \end{align*}
  Here, we used $U^{H}_{ts} = U^{F}_{tr} U^{\tilde G}_{ts} U^{F}_{rs}$ (\cref{eq:intpic-propa-def}).
  Note that $\parenc{\tau^{F_{\bar R}}_{tr}\tau^{\tilde G_{\bar R}}_{ts}\tau^{F_{Y}}_{rs}}(A) = V_{ts} A V_{ts}^{\adjm}$ where $V_{ts} = U^{F_{\bar R}}_{tr} U^{\tilde G_{\bar R}}_{ts} U^{F_{\bar R}}_{rs}$.
  Applying \cref{lem:interaction-picture} to $H_{\bar R} = F_{\bar R} + G_{\bar R}$, where $H_{\bar R}$ was split in the same way as $H$, provides $U^{\tilde G_{\bar R}}_{ts} = U^{F_{\bar R}}_{rt} U^{H_{\bar R}}_{ts} U^{F_{\bar R}}_{sr}$, i.e.\ $U^{H_{\bar R}}_{ts} = U^{F_{\bar R}}_{tr} U^{\tilde G_{\bar R}}_{ts} U^{F_{\bar R}}_{rs} = V_{ts}$, which completes the proof. 
\end{proof}

\begin{rem}
  \label{rem:cor:interaction-picture-lr}
  Before applying \cref{cor:interaction-picture-lr}, it can be worthwhile to minimize the norm of $h_{Z}$ with $\abs Z \ge 2$ by subtracting single-site terms from it.
  These single-site terms can reduce the norm of $h_{Z}$ (i.e.\ $J$ and $v$) and they are added to the Hamiltonian as single-site terms in order to leave the total Hamiltonian unchanged. 
\end{rem}

\section{Various lemmata}

\begin{lem}
  \label{lem:unit-seq-triangle-unit-inv}
  Let $\norm\cdot$ be a unitarily invariant norm and let $U_{2}$, $V_{1}$ be unitary, $i \in \{1, 2\}$. Let $A$ be an arbitrary matrix. Then $\norm{U_{1} A U_{2} - V_{1} A V_{2}} \le \norm{(U_{1} - V_{1}) A} + \norm{A (U_{2} - V_{2})}$.
\end{lem}

\begin{proof}
  \begin{alignat*}{4}
    \norm{U_{1} A U_{2} - V_{1} A V_{2}}
    &=
      \| U_{1} A U_{2} - V_{1} A U_{2} &\;+\;& && V_{1} A U_{2} - V_{1} A V_{2} \| \\
      &\le
      \norm{U_{1} A U_{2} - V_{1} A U_{2}} &\;+\;& &\|&V_{1} A U_{2} - V_{1} A V_{2}\| \\
      &=
      \norm{(U_{1} - V_{1}) A} &\;+\;& &\| & A (U_{2} - V_{2}) \|
  \end{alignat*}
  where the triangle inequality and unitary invariance have each been used once.
\end{proof}

The following three Lemmata are used in \cref{sec:timeevo-lr}.

\begin{lem}
  \label{lem:timeevo-poly-exp-bound}
  Let $n \ge 0$, $a > 0$ and $x \ge \max\{0, \frac{2n}{a} \ln(\frac{n}{a})\}$. Then $x^{n} \exp(-ax) \le 1$. 
\end{lem}

\begin{proof}
  For $n = 0$ or $x = 0$, the Lemma holds. Let $n > 0$ and $x > 0$.
  Let $z = \frac an x$ and $c = \ln(\frac na)$.
  The inequalities $z \ge 2c$ (implied by the premise) and $\ln(z) \le \frac z2$ (see \cref{lem:timeevo-ln-x-bound}) imply $\ln(z) + c \le \frac z2 + c \le z$. 
  We have
  \begin{align}
    \ln(z) + c \le z
    \quad \Leftrightarrow \quad
    \ln(x) \le \frac{a x}{n}
    \quad \Leftrightarrow \quad
    n \ln(x) - a x \le 0
    \quad \Leftrightarrow \quad
    x^{n} \ee^{-ax} \le 1.
  \end{align}
  This completes the proof because the inequality on the very left is implied by the premise.
\end{proof}

\begin{lem}
  \label{lem:timeevo-ln-x-bound}
  $\ln(x) \le \frac x2 - (1 - \ln 2) < \frac x2$ for $x \in [0, \infty)$ with equality if and only if $x = 2$. 
\end{lem}

\begin{proof}
  Let $f(x) = \frac x2 - \ln(x) - (1 - \ln 2)$. The derivative satisfies
  \begin{align}
    f'(x)
    &=
      \frac12 - \frac 1x
      \quad\quad
      \begin{cases}
        > 0, & \text{if } x > 2, \\
        = 0, & \text{if } x = 2, \\
        < 0, & \text{if } x < 2.
      \end{cases}
  \end{align}
  In addition, $f(2) = 0$. This shows the claim.
\end{proof}

\begin{lem}
\label{lem:local-te-cert-fid-tracedist}
(i)
Let $\trnorm\cdot$ denote the trace norm, $\psi = \ketbra\psi\psi$ and $\psi' = \ket{\psi'}\bra{\psi'}$.
If $\norm{\ket{\psi} - \ket{\psi'}} \le \epsilon \le \sqrt2$, then $\trnorm{\psi - \psi'} \le 2\epsilon$.

(ii)
Let $1 - \abs{\braket{\psi}{\psi'}} = \epsilon$. Then $\min_{\alpha \in [0, 2\pi]} \norm{\ket{\psi} - \ee^{\ii \alpha}\ket{\psi'}} = \sqrt{2\epsilon}$. Let in addition $\epsilon \le 1$, then $\trnorm{\psi - \psi'} \le 2\sqrt{2\epsilon}$. 
\end{lem}

\begin{proof}
(i)
Assume that $\norm{\ket\psi - \ket{\psi'}} \le \epsilon$ holds.
This gives us
\begin{align}
  \epsilon^{2}
  &
    \ge
    \norm{\ket\psi - \ket{\psi'}}^{2}
    = 2 (1 - \Re(\braket{\psi}{\psi'})) \ge 2 (1 - \sqrt{F})
\end{align}
where $F = |\braket{\psi}{\psi'}|^{2} = F(\ket\psi, \ket{\psi'})$.
This gives $\sqrt{F} \ge 1 - \epsilon^{2}/2$ and $1 - F \le 1 - (1 - \epsilon^{2}/2)^{2} = \epsilon^{2} - \epsilon^{4}/4 \le \epsilon^{2}$. The equality $\trnorm{\psi - \psi'} = 2 \sqrt{1 - F}$ completes the proof \parencite[Eqs.~9.11, 9.60, 9.99]{Nielsen2007}. 

(ii)
Choose $\alpha \in \mathbb R$ such that, with $\ket{\psi''} = \ee^{\ii \alpha} \ket{\psi'}$, the equalities $\abs{\braket{\psi}{\psi'}} = \braket{\psi}{\psi''} = \Re(\braket{\psi}{\psi''})$ hold. In this case, we have
\begin{align}
  \min_{\alpha \in [0, 2\pi]} \norm{\ket{\psi} - \ee^{\ii \alpha}\ket{\psi'}}
  \le
  \norm{\ket{\psi} - \ket{\psi''}}^{2}
  = 2[1 - \Re(\braket{\psi}{\psi''})]
  = 2\epsilon
\end{align}
and it is clear that for all other values of $\alpha \in \mathbb R$, the value of $1 - \Re(\braket{\psi}{\psi''})$ will be larger. Part (i) proofs the remaining part of (ii). 
\end{proof}

\section{Metric spaces
  \label{sec:app-metric-spaces}}

\begin{rem}
  Given two sets $A$ and $B$, the expression $A \subset B$ is used to refer to the implication $x \in A \Rightarrow x \in B$.
\end{rem}

\begin{defn}
  \label{def:app-metric-space}
  Let $\Lambda$ be a set. A function $d \colon \Lambda \times \Lambda \to \R$ is called a metric if, for all $x, y, z \in \Lambda$, $d(x, y) \ge 0$, $d(x, y) = 0$ if and only if $x = y$, $d(x, y) = d(y, x)$ and $d(x, z) \le d(x, y) + d(y, z)$ (triangle inequality).
  The pair $(\Lambda, d)$ is called a metric space and a finite metric space is a metric space where $\Lambda$ has finitely many elements.
  Statements in this section for infinite metric spaces should be treated with caution (they are not used in the main text).

  Distances between sets are given by $d(A, B) = \inf_{a \in A, b \in B} d(a, b)$ and the infimum turns into a minimum if both sets are finite. Accordingly, we have
  \begin{subequations}
  \begin{align}
    \label{eq:app-setdist-upper-bound}
    \exists \, a_{0} \in A, b_{0} \in B \colon
    d(a_{0}, b_{0}) < r
    \quad &\Rightarrow \quad
    d(A, B) \le d(a_{0}, b_{0}) < r, \\
    \label{eq:app-setdist-lower-bound}
    \forall \, a \in A, b \in B \colon
    d(a, b) > r
    \quad &\Rightarrow \quad
    d(A, B) > r.
  \end{align}
  \end{subequations}
  Strict inequalities can be replaced by equalities in both equations.
  If the metric space is infinite, the strict inequality in the second equation turns into an inequality.

  The diameter of a subset $Y \subset \Lambda$ is given by $\diam(Y) = \sup_{x, y \in Y} d(x, y)$ and the supremum turns into a maximum for a finite set $Y$.
  Let $\mathcal M$ a set of subsets of $\Lambda$ with $a = \sup_{Z \in \mathcal M} \diam(Z) < \infty$.
  Define the extension of $R \subset \Lambda$ via $\bar R = \bigcup_{Z \in \mathcal M, Z \cap R \ne \emptyset} Z$.

  The open and closed ball around $Y \subset \Lambda$ are defined by
  \begin{subequations}
  \begin{align}
    B^{o}_{r}(Y) &= \{ x \in \Lambda \colon d(x, Y) < r \}, \\
    B^{c}_{r}(Y) &= \{ x \in \Lambda \colon d(x, Y) \le r \}.
  \end{align}
  \end{subequations}
\end{defn}

\begin{lem}
  \label{lem:finite-metric-space}
  The following hold ($Y \subset \Lambda$, $r, s \ge 0$):
  \begin{subequations}
  \begin{align}
    \label{eq:app-d-Y-o-ball-complement}
    d(Y, \Lambda \setminus B^{o}_{r}(Y)) &\ge r \\
    \label{eq:app-d-Y-c-ball-complement}
    d(Y, \Lambda \setminus B^{c}_{r}(Y)) &> r \\
    \label{eq:app-d-o-ball-complement}
    d(B^{o}_{s}(Y), \Lambda \setminus R) &> d(Y, \Lambda \setminus R) - s \\
    \label{eq:app-d-c-ball-complement}
    d(B^{c}_{s}(Y), \Lambda \setminus R) &\ge d(Y, \Lambda \setminus R) - s \\
    \label{eq:ball-subset-complement-dist}
    B^{o}_{r}(Y) &\subset R \quad\text{where}\quad r = d(Y, \Lambda \setminus R)
    \\
    \label{eq:open-ball-subset}
    \sqba{
    B^{c}_{r}(B^{o}_{s}(Y)) \cup
    B^{o}_{r}(B^{c}_{s}(Y)) \cup
    B^{o}_{r}(B^{o}_{s}(Y))} & \;\subset\;
    B^{o}_{r+s}(Y) \\
    \label{eq:app-bar-R-subset-closed-ball}
    \bar R \subset B^{c}_{a}(R)
    \\
    \label{eq:app-ball-diam}
    \diam(B^{o}_{r}(Y)) < 2r + \diam(Y).
  \end{align}
  \end{subequations}
  Strict inequalities turn into non-strict inequalities for infinite metric spaces.

  Let $x, y \in \Lambda$. Then $d(x, y) \ge r + s$ implies $B^{o}_{r}(x) \cap B^{c}_{s}(y) = \emptyset$.
\end{lem}

\begin{proof}
  For all $y \in Y$ and $z \in \Lambda \setminus B^{o}_{r}(Y)$, it is true that $z \not \in B^{o}_{r}(Y)$ and thus $d(y, z) \ge r$. \eqref{eq:app-setdist-lower-bound} thus implies \eqref{eq:app-d-Y-o-ball-complement}.

  For all $y \in Y$ and $z \in \Lambda \setminus B^{c}_{r}(Y)$, it is true that $z \not \in B^{c}_{r}(Y)$ and thus $d(y, z) > r$. \eqref{eq:app-setdist-lower-bound} thus implies \eqref{eq:app-d-Y-c-ball-complement}.
  
  Let $z \in \Lambda \setminus R$ and $x \in B^{o}_{s}(Y)$.
  There is a $y \in Y$ such that $d(x, y) < s$.
  Note that $d(y, z) \ge d(Y, \Lambda \setminus R)$ holds.
  This implies that $d(z, x) \ge d(z, y) - d(y, x) > d(Y, \Lambda \setminus R) - s$.
  \eqref{eq:app-setdist-lower-bound} implies \eqref{eq:app-d-o-ball-complement}.

  Let $z \in \Lambda \setminus R$ and $x \in B^{c}_{s}(Y)$.
  There is a $y \in Y$ such that $d(x, y) \le s$.
  Note that $d(y, z) \ge d(Y, \Lambda \setminus R)$ holds.
  This implies that $d(z, x) \ge d(z, y) - d(y, x) \ge d(Y, \Lambda \setminus R) - s$.
  \eqref{eq:app-setdist-lower-bound} implies \eqref{eq:app-d-c-ball-complement}.

  Let $x \in B^{o}_{r}(Y)$, then there is a $y \in Y$ such that $d(x, y) < r$.
  If $x \in \Lambda \setminus R$ was true, it would imply $d(Y, \Lambda \setminus R) \le d(x, y) < r$ (see~\eqref{eq:app-setdist-upper-bound}), which is a contradiction.
  Therefore, we infer $x \not\in \Lambda \setminus R$ and thus $x \in R$.
  This shows \eqref{eq:ball-subset-complement-dist}. 

  Let $x \in B^{c}_{r}(B^{o}_{s}(Y))$.
  Then there are $z \in B^{o}_{s}(Y)$ and $y \in Y$ such that
  $d(x, z) \le r$ and $d(z, y) < s$.
  This implies $d(x, y) < r + s$ and thus $x \in B^{o}_{r+s}(Y)$.
  The remaining parts of \eqref{eq:open-ball-subset} are shown in the same way.

  Let $x \in \bar R$. If $x \in R$, then $x \in B^{c}_{a}(R)$ holds.
  Let $x \in \bar R \setminus R$.
  Then there is a $Z \subset \Lambda$ such that $\diam(Z) \le a$ and $x \in Z$ and $Z \cap R \ne \emptyset$.
  Let $y \in Z \cap R$, then $d(x, y) \le \diam(Z) \le a$.
  Because $y \in R$, we can conclude $x \in B^{c}_{a}(R)$.
  This shows \eqref{eq:app-bar-R-subset-closed-ball}.

  Let $x, y \in B^{o}_{r}(Y)$.
  Then there are $x', y' \in Y$ such that $d(x, x') < r$ and $d(y, y') < r$.
  This implies $d(x, y) \le d(x, x') + d(x', y') + d(y', y) < 2r + \diam(Y)$.
  This shows \eqref{eq:app-ball-diam}. 
  
  Assume that $z \in B^{o}_{r}(x) \cap B^{c}_{s}(y)$ exists. Then $d(x, y) \le d(x, z) + d(z, y) < r + s$ contradicts the assumption.  
\end{proof}

\begin{lem}
  \label{lem:metric-cube-open-ball}

  Let $d$ be a metric with property \eqref{eq:metric-normalized}. 
  Let $x, y \in \Lambda$.
  For any $r \ge 0$, we have
  \begin{align}
    B^{c}_{r}(C(x, y)) \subset C_{s}(C(x, y)), \quad s = \floor r.
  \end{align}
\end{lem}

\begin{proof}
  Let $z \in B^{c}_{r}(C(x, y))$, then there is a $b \in C(x, y)$ such that $d(z, b) \le r$; this implies $\abs{z_{i} - b_{i}} \le r$ for all $i \in [1:\eta]$.
  In addition, $b \in C(x, y)$ implies $x_{i} \le b_{i} \le y_{i}$.
  Combining both yields $x_{i} - r \le z_{i} \le y_{i} + r$
  and this shows that $z \in C(x - su, y + su)$ where $s = \floor r$.
\end{proof}

\begin{lem}
  \label{lem:metric-set-diam-closed-ball}
  
  Let $Y, Z \subset \Lambda$ with $Y \cap Z \ne \emptyset$. 
  If $r \ge \diam(Z)$ then $Z \subset B^{c}_{r}(Y)$. 
\end{lem}

\begin{proof}
  Let $z \in Z$ and $y \in Z \cap Y$. Then $d(z, y) \le \diam(Z, Y) \le r$, i.e.\ $z \in B^{c}_{r}(Y)$.
\end{proof}

\begin{lem}
  \label{lem:metric-cube-intersection}

  Let $a, b, c, d \in \Lambda$.
  Then, $C(a, b) \cap C(c, d) = C(x, y)$ where $x_{i} = \max\{a_{i}, c_{i}\}$ and $y_{i} = \min\{b_{i}, d_{i}\}$.
\end{lem}

\begin{proof}
    $C(a, b) \cap C(c, d)
    =
    \bigtimes_{i=1}^{\eta} [a_{i}:b_{i}] \cap [c_{i}:d_{i}]
    =
    \bigtimes_{i=1}^{\eta} [x_{i}:y_{i}]
    =
    C(x, y)
    $.
\end{proof}


\printbibliography
\printglossary[type=\acronymtype]

\end{document}
